\DeclareMathAlphabet{\pazocal}{OMS}{zplm}{m}{n}
\numberwithin{equation}{section}
\newcommand*\rel@kern[1]{\kern#1\dimexpr\macc@kerna}
\newcommand*\widebar[1]{%
  \begingroup
  \def\mathaccent##1##2{%
    \rel@kern{0.8}%
    \overline{\rel@kern{-0.8}\macc@nucleus\rel@kern{0.2}}%
    \rel@kern{-0.2}%
  }%
  \macc@depth\@ne
  \let\math@bgroup\@empty \let\math@egroup\macc@set@skewchar
  \mathsurround\z@ \frozen@everymath{\mathgroup\macc@group\relax}%
  \macc@set@skewchar\relax
  \let\mathaccentV\macc@nested@a
  \macc@nested@a\relax111{#1}%
  \endgroup
}
\newtheorem{theorem}{Theorem}[section]
\newtheorem{proposition}[theorem]{Proposition}
\newtheorem{lemma}[theorem]{Lemma}
\newtheorem{corollary}[theorem]{Corollary}
\theoremstyle{definition}
\newtheorem{definition}[theorem]{Definition}
\newcommand{\vol}{\operatorname{Vol}}
\newcommand{\area}{\operatorname{Area}}
\newcommand{\prob}{\pazocal{P}r}
\newcommand{\Dpc}{\pazocal{D}}
\newcommand{\Sc}{\pazocal{S}}
\newcommand{\Ec}{\pazocal{E}}
\newcommand{\Bc}{\mathcal{B}}
\newcommand{\Fc}{\pazocal{F}}
\newcommand{\Q}{\mathbb{Q}}
\newcommand{\meas}{\operatorname{meas}}
\newcommand {\E} {\mathbb{E}}
\newcommand {\T} {\mathbb{T}}
\newcommand {\R} {\mathbb{R}}
\newcommand {\Nb} {\mathbb{N}}
\newcommand {\Z} {\mathbb{Z}}
\newcommand {\Cc} {\pazocal{C}}
\newcommand {\Lc} {\mathcal{L}}
\newcommand {\Pc} {\mathcal{P}}
\newcommand {\C} {\mathbb{C}}
\newcommand {\Ac} {\pazocal{A}}
\newcommand {\Tb} {\mathbb{T}}
\newcommand {\var} {\operatorname{Var}}
\newcommand{\hidefixme}[1]{}
\begin{document}

\title[Toral defect]{The defect of toral Laplace eigenfunctions and Arithmetic Random Waves}

\author{P\"{a}r Kurlberg\textsuperscript{1}}
\email{kurlberg@kth.se}
\author{Igor Wigman\textsuperscript{2}}
\email{igor.wigman@kcl.ac.uk}
\author{Nadav Yesha\textsuperscript{3}}
\email{nyesha@univ.haifa.ac.il}

\date{\today}

\begin{abstract}

We study the defect (or ``signed area") distribution of toral Laplace eigenfunctions restricted to
shrinking balls of radius above the Planck scale, in either random Gaussian scenario (``Arithmetic
Random Waves"), or deterministic eigenfunctions averaged w.r.t. the spatial variable.
In either scenario we exploit the associated symmetry of the eigenfunctions
to show that the expectation (Gaussian or spatial) vanishes. Our principal results concern the high energy
limit behaviour of the defect variance.

\end{abstract}
	
\maketitle

\section{Introduction}
\subsection{Toral Laplace eigenfunctions and Arithmetic Random Waves}

Toral Laplace eigenfunctions are an important model in Quantum Chaos
that represent the Laplace eigenfunctions on generic
manifolds. From the point of view
of an investigator interested in the study of their properties, the
toral eigenfunctions enjoy two significant privileges over the general
case, making them attractive to address, in addition to their own
sake, being Fourier sums with particular frequencies. First, its
number theoretic ingredient makes them susceptible to methods borrowed
from Analytic Number Theory. Second, their (slowly in $2$
dimensions) growing spectral degeneracies allow for the study of the
``typical" case, whether that means endowing the linear space of
Laplace eigenfunctions with the same eigenvalue with a Gaussian
probability measure (thus giving rise to ``Arithmetic Random Waves"),
or otherwise.

Let $\Tb^{2} = \R^{2}/\Z^{2}$ be the standard $2$-torus, $$S=\{a^{2}+b^{2}:\: a,b\in\Z\}$$ be the set of
all integers expressible as sum of two squares (``sequence of toral energies"),
and for $n\in S$ let $$N_{n}:=r_{2}(n)=\#\left\{(a,b)\in\Z^{2}:\: n=a^{2}+b^{2}\right\}$$
be the number of ways to express $n$ as sum of two squares. Then every function of the form
\begin{equation}
\label{eq:fn toral Laplace eig}
f_{n}(x) = \frac{1}{\sqrt{2N_{n}}}\sum\limits_{\lambda\in\Z^{2}:\: \|\lambda\|^{2}=n} a_{\lambda}\cdot e(\langle x,\lambda\rangle)
\end{equation}
with convenience only pre-factor $\frac{1}{\sqrt{2N_{n}}}$, $\lambda=(\lambda_{1},\lambda_{2})\in\Z^{2}$, $x=(x_{1},x_{2})\in \Tb^{2}$, $$\langle x,\lambda\rangle = x_{1}\lambda_{1}+x_{2}\lambda_{2},$$
$e(y):=e^{2\pi i y}$, and $a_{\lambda} \in \C$ some complex coefficients subject to
\begin{equation}
\label{eq:a(-lam)=conj(a(lam))}
a_{-\lambda}=\overline{a_{\lambda}},
\end{equation}
is a real-valued Laplace eigenfunction with eigenvalue
$E=E_{n}=4\pi^{2}n$, i.e. it satisfies the Helmholtz equation
\begin{equation}
\label{eq:Schrod eq}
\Delta f_{n}+Ef_{n}=0.
\end{equation}
Conversely, every real-valued function satisfying the equation \eqref{eq:Schrod eq}
is necessarily of the form \eqref{eq:fn toral Laplace eig} for some $n\in S$, and $\{a_{\lambda}\}_{\|\lambda\|^{2}=n}$ as above.

\vspace{2mm}

Given $n\in S$, the linear space of functions \eqref{eq:fn toral Laplace eig} subject to \eqref{eq:a(-lam)=conj(a(lam))}
is of real dimension $N_{n}$. The sequence $N_{n}$ is subject to large and erratic fluctuations.
However, its ``normal order" is
$$N_{n}=\log{n}^{\log{2}/2+o(1)},$$ though on average $N_{n} \sim \frac{1}{\kappa_{RL}}\cdot \sqrt{\log{n}}$ with $\kappa_{RL}>0$
the Ramanujan-Landau constant ~\cite{Landau}, and for every $\epsilon>0$
\begin{equation}
\label{eq:Nn small}
N_{n} = O(n^{\epsilon}),
\end{equation}
by an elementary argument.

We denote
\begin{equation*}
\Ec_{n} = \{ (\lambda_{1},\lambda_{2})\in\Z^{2}:\: \lambda_{1}^{2}+\lambda_{2}^{2}=n \}
\end{equation*}
to be the representations of $n$ as sum of two squares, or, what is equivalent,
$\Ec_{n}$ are all standard lattice points lying on the radius-$\sqrt{n}$ circle.
One may endow this space with a probability measure by assuming that the
$\{a_{\lambda}\}_{\lambda\in\Ec_{n}}$ are standard (complex) Gaussian\footnote{We work under
the convention that $a_{\lambda}=b_{\lambda}+ic_{\lambda}$, where the $b_{\lambda}$ and $c_{\lambda}$ are
standard real-valued Gaussians.} i.i.d. save to \eqref{eq:a(-lam)=conj(a(lam))},
turning $\{f_{n}\}_{n\in S}$ into a Gaussian {\em ensemble of random fields} ~\cite{ORW,RW2008}, all defined on $\Tb^{2}$,
usually referred to as ``Arithmetic Random Waves" ~\cite{KKW}. Alternatively, $f_{n}$ are unit variance stationary random fields on $\Tb^{2}$,
uniquely defined via their covariance function
\begin{equation}
\label{eq:rn covar func def}
r_{n}(x)=r_{n}(y,x+y):=\E[f_{n}(y)\cdot f_{n}(x+y)] = \frac{1}{N_{n}}\sum\limits_{\lambda\in\Ec_{n}} \cos(2 \pi \langle \lambda,x\rangle).
\end{equation}

\subsection{Defect}

The (total) defect of a smooth, not identically vanishing, function $g:\Tb^{2}\rightarrow\R$, (called ``signed area"
within the physics literature) is
\begin{equation*}
\Dpc(g):= \area(g^{-1}(0,+\infty))-\area(g^{-1}(-\infty,0)) = \int\limits_{\Tb^{2}}H(g(y))dy,
\end{equation*}
with $H(\cdot)$ denoting the sign function
\begin{equation}
\label{eq:H Heaviside}
H(y):=\begin{cases}
1 &y>0 \\ 0 &y=0 \\ -1 &y<0
\end{cases}.
\end{equation}
The defect of Laplace eigenfunctions was first addressed in the physics literature ~\cite{BGS} for random planar monochromatic
waves. A precise asymptotic expression for the defect variance, and a Central Limit Theorem was established,
along with generic nonlinear functionals, for the ensemble $\{T_{l}\}_{l\ge 1}$
of random Gaussian spherical harmonics ~\cite{MWvar,MWCLT}
with mathematical rigour. The $T_{l}:\Sc^{2}\rightarrow\R$ is the important ensemble of spherical random fields defined by
the covariance functions
$$\E[T_{l}(x)\cdot T_{l}(y)] = P_{l}(\cos (d(x,y))),$$ where $P_{l}(\cdot)$ are the Legendre polynomials and $d(\cdot, \cdot)$ is the spherical
distance; $T_{l}(\cdot)$ scales asymptotically like Berry's Random Waves around every point of $\Sc^{2}$, the main findings of
~\cite{MWvar,MWCLT} being consistent with ~\cite{BGS}, up to the said scaling.

\vspace{2mm}

We are interested in the defect of $f_{n}(\cdot)$ as in \eqref{eq:fn toral Laplace eig}.
We claim that for {\em every} such function $f_{n}$, the corresponding defect
\begin{equation}
\label{eq:tot def triv}
\Dpc(f_{n})\equiv 0
\end{equation}
vanishes, so the study of $\Dpc(f_{n})$ trivialises, and, accordingly, below we will pass to subdomains of $\Tb^{2}$.
First, if $n$ is odd, then for every $\lambda=(\lambda_{1},\lambda_{2})\in\Ec_{n}$,
necessarily precisely one of $\lambda_{1}$ and $\lambda_{2}$ is odd.
Hence, $f_{n}$ changes its sign under the involution $\tau:\Tb^{2}\rightarrow\Tb^{2}$ mapping $\cdot\mapsto \cdot+(1/2,1/2)$,
i.e. $$f_{n}(\tau x) = -f_{n}(x),$$ which readily implies $\Dpc(f_{n})= 0$. Otherwise, if $n$ is even, we may assume
w.l.o.g. that\footnote{Otherwise both the entries $\lambda_{1},\lambda_{2}$ are even, which yields that $f_{n}$ is invariant under
the involutions $\cdot\mapsto \cdot+(1/2,0)$ and $\cdot\mapsto\cdot+(0,1/2)$,
and we may pass from $n$ to $n/4$.} $n\equiv 2 (4)$, whence for all $\lambda\in\Ec_{n}$, both
$\lambda_{1},\lambda_{2}$ are odd, and then $f_{n}$ changes its sign under the involution $\rho:\Tb^{2}\rightarrow\Tb^{2}$
mapping $\cdot\mapsto \cdot+(1/2,0)$ (or $\cdot\mapsto \cdot+(0,1/2)$), also yielding $\Dpc(f_{n})= 0$.

It is therefore essential to pass to, possibly shrinking, {\em subdomains} of $\Tb^{2}$, most canonically, the radius-$s$ discs
$B_{x}(s)\subseteq\Tb^{2}$ centred at $x\in\Tb^{2}$, $0<s<1/2$, and $B(s):=B_{0}(s)$, with $s=s(n)$ allowed to depend on $n$,
(possibly $s=s(n)\rightarrow 0$). Since Quantum Chaos should exhibit itself
above Planck scale $s\gg \frac{1}{\sqrt{n}}$ ~\cite{Berry1977}, it
makes sense to take, as an example, $s=n^{-1/2+\epsilon}$, or, perhaps, replace
the $\epsilon$-power of $n$ with a slower growing function of $n$ (such as a power of $\log{n}$).
Our principal results concern the defect distribution corresponding to both the Arithmetic Random Waves
(random Gaussian toral eigenfunctions) in \S\ref{eq:Def var up low bnd ARW} below,
and individual deterministic cases, w.r.t. space average in \S\ref{sec:def var vanish space} below.

\subsection{Statement of principal results: defect variance for Arithmetic Random Waves}
\label{eq:Def var up low bnd ARW}

First, we take $f_{n}(\cdot)$ to be the Arithmetic Random Waves
(i.e. the random Gaussian model associated to \eqref{eq:fn toral Laplace eig}), and denote
\begin{equation}
\label{eq:Dns def}
\Dpc_{n;s}:= \frac{1}{\pi s^{2}}\int\limits_{B(s)}H(f_{n}(y))dy,
\end{equation}
where the normalisation makes $\Dpc_{n;s}$ invariant w.r.t. homotheties, and, by the stationarity of $f_{n}$,
the law of $\Dpc_{n;s}$ is independent of the centre of the disc (which is why we are may
assume that the disc on the r.h.s. of \eqref{eq:Dns def} is centred).
Since\hidefixme{I decided to remove all except that the expectation vanishes, as otherwise
the variance result is less than clear.}, for a given $y\in\Tb^{2}$, the law of $f_{n}(y)$ is symmetric around the origin, and $H(\cdot)$ is
odd, we have $\E[H(f_{n}(y))] \equiv 0$, and, by inverting the integral on the r.h.s. of
\eqref{eq:Dns def}, it is evident that for every $n\in S$ and $s>0$,
\begin{equation}
\label{eq:E[D]==0}
\E[\Dpc_{n;s}]=0.
\end{equation}
Our first principal result asserts that $\var(\Dpc_{n;s})\rightarrow 0$ as long as
the ball radius is above the Planck scale, i.e., $s\cdot
\sqrt{n}\rightarrow\infty$.

\begin{theorem}
\label{thm:uppper bnd det}

Fix $\epsilon >0 $ sufficiently small. For every $ 0< \delta < 4\epsilon $ one has
\begin{equation*}
\var\left(\Dpc_{n;s}\right) \ll \frac{1}{N_{n}^{\delta}}
\end{equation*}
uniformly for all $s>n^{-1/2+\epsilon}$. Equivalently,
\begin{equation*}
\sup\limits_{n\in S}\sup\limits_{s>n^{-1/2+\epsilon}} \var\left(\Dpc_{n;s}\right)\cdot N_{n}^{\delta}<+\infty.
\end{equation*}

\end{theorem}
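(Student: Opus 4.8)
The plan is to expand the variance via the Kac–Rice / Kostlan-type formula for the sign function and reduce to estimating an integral of a one-variable function of the covariance $r_{n}$. Concretely, writing $H(t) = \operatorname{sgn}(t)$, for a centred stationary Gaussian field of unit variance one has the identity $\E[H(f_{n}(y))H(f_{n}(z))] = \frac{2}{\pi}\arcsin(r_{n}(y-z))$, so that
\begin{equation*}
\var(\Dpc_{n;s}) = \frac{1}{\pi^{2}s^{4}}\int\limits_{B(s)}\int\limits_{B(s)} \frac{2}{\pi}\arcsin\big(r_{n}(y-z)\big)\, dy\, dz.
\end{equation*}
Since $\arcsin$ is odd and $r_{n}$ is even, the linear term in the Taylor expansion of $\arcsin$ integrates to the square of $\E[\Dpc_{n;s}] = 0$; hence the leading contribution is cubic in $r_{n}$ and, more usefully, we may bound $|\frac{2}{\pi}\arcsin(r_{n}) - \frac{2}{\pi}r_{n}| \ll |r_{n}|^{3}$ pointwise. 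After subtracting the (vanishing) linear term, the problem reduces to controlling $\frac{1}{s^{4}}\int_{B(s)}\int_{B(s)} |r_{n}(y-z)|^{3}\, dy\, dz$, i.e., up to the autocorrelation of the indicator of $B(s)$, an estimate of the form $\frac{1}{s^{2}}\int_{|x|\ll s} |r_{n}(x)|^{3}\, dx$.

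The next step is an arithmetic one: expand $r_{n}(x)^{3} = N_{n}^{-3}\sum_{\lambda,\mu,\nu\in\Ec_{n}} \cos(2\pi\langle\lambda,x\rangle)\cos(2\pi\langle\mu,x\rangle)\cos(2\pi\langle\nu,x\rangle)$ and integrate term by term against the smoothed indicator of $B(s)$. The key point is that $\int_{B(s)} e(\langle\xi,x\rangle)\, dx$ is, up to normalisation, a Bessel-type kernel which is $O(s^{2})$ when $\xi = 0$ and decays like $s^{2}(s\|\xi\|)^{-3/2}$ for $\|\xi\| \gg 1/s$. So the diagonal contribution — triples with $\pm\lambda\pm\mu\pm\nu = 0$, which on a circle forces (after the sign choices) two of the three frequencies to coincide — gives the main term of size $O(1/N_{n})$ (roughly $N_{n}^{2}$ such triples out of $N_{n}^{3}$, each contributing $O(1)$ after the $1/s^{2}$ normalisation), while the off-diagonal triples have $\|\pm\lambda\pm\mu\pm\nu\| \geq 1$ and each contributes a gain of $(s\sqrt n)^{-3/2} \le n^{-3\epsilon/2}$, which is a negative power of $N_{n}$ by \eqref{eq:Nn small}. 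Summing, $\var(\Dpc_{n;s}) \ll N_{n}^{-1} + N_{n}^{-3/2+o(1)}\cdot(\text{poly count}) \ll N_{n}^{-\delta}$ for $\delta < 4\epsilon$ after optimising; the precise exponent $4\epsilon$ should emerge from balancing the Bessel decay against the number of off-diagonal triples estimated via \eqref{eq:Nn small}.

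There is, however, a subtlety that must be handled before the $\arcsin$ identity can be used: near the diagonal $y = z$ the value $r_{n}(y-z)$ is close to $1$, where $\arcsin$ has a square-root singularity in its derivative, so the pointwise bound $|\arcsin(r_{n}) - r_{n}| \ll |r_{n}|^{3}$ degenerates. The standard fix is to split the domain of integration according to whether $|r_{n}(y-z)|$ is bounded away from $1$ or not; on the "near" region one uses the crude bound $|\frac{2}{\pi}\arcsin(r_{n})| \le 1$ together with an estimate on the measure of $\{(y,z)\in B(s)^{2}: |r_{n}(y-z)| > 1-\eta\}$, which in turn follows from the spectral gap $\|\lambda\| = \sqrt n$ forcing $r_{n}$ to oscillate at frequency $\sqrt n$, so that $r_{n}$ is close to $1$ only on a set of relative measure $O((s\sqrt n)^{-c})$ — again a saving since $s\sqrt n \to \infty$.

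\medskip
\noindent\emph{Main obstacle.} I expect the principal difficulty to be the off-diagonal counting in the cube $\Ec_{n}^{3}$: one needs that the number of triples $(\lambda,\mu,\nu)$ with $\|\pm\lambda\pm\mu\pm\nu\|$ in a dyadic range $[R,2R]$ is controlled well enough (uniformly in $n$, using only $N_{n} = O(n^{\epsilon})$) that, weighted by the Bessel decay $(sR)^{-3/2}$, the total is $o(N_{n}^{3}\cdot s^{2})$ with a power saving. Trivially there are $\le N_{n}^{3}$ triples, which already gives $\var \ll N_{n}^{3}\cdot n^{-3\epsilon/2}\cdot\text{(normalisation)}$; turning this into the clean bound $N_{n}^{-\delta}$ with the stated range $\delta < 4\epsilon$ requires keeping careful track of how the $N_{n}^{3}$ and the Bessel gain and the $1/(\pi s^2)^2$ normalisation combine, and verifying that the diagonal main term $O(1/N_{n})$ is indeed dominated by $N_{n}^{-\delta}$ for $\delta < 1$ (which holds since $\delta < 4\epsilon < 1$ for $\epsilon$ small). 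The handling of the near-diagonal region described above is routine once the measure estimate is in place.
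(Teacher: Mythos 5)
The overall structure you propose — expand via the arcsine identity, Taylor-expand, expand moments of $r_{n}$ arithmetically, exploit Bessel decay — matches the paper's strategy. But there are several concrete gaps, two of which are fatal in the sense that the argument as written would not close.

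First, the claim that ``the linear term in the Taylor expansion of $\arcsin$ integrates to the square of $\E[\Dpc_{n;s}]=0$'' is false. The quantity $\int_{B(s)^2} r_{n}(y-z)\,dy\,dz = \E\left[\left(\int_{B(s)}f_{n}\right)^{2}\right]$ is strictly positive; it has nothing to do with $\left(\E[\Dpc_{n;s}]\right)^{2}$. The paper does not discard this term: it bounds it directly, via Lemma \ref{lem:CovFormulaLemma} and the Bessel estimate $J_{1}(t) \ll t^{-1/2}$, to get $\ll n^{-3\epsilon}$, which then is swallowed by $N_{n}^{-\delta}$ using \eqref{eq:Nn small}. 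Since the term is indeed harmless, this alone is not fatal, but the claim must be corrected.

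Second, and more seriously, your ``diagonal main term of size $O(1/N_{n})$'' does not exist. By the parity/congruence obstruction \eqref{eq:odd corr empty} the odd correlation set $\Pc_{n}(3)$ is \emph{empty}: there are no $(\lambda,\mu,\nu)\in\Ec_{n}^{3}$ with $\lambda+\mu+\nu=0$ (note that the sign choices $\pm\lambda$ are automatically absorbed since $\Ec_{n}=-\Ec_{n}$). Your heuristic ``two of three frequencies must coincide'' is not correct geometry on the circle either. The actual source of the $N_{n}^{-\delta}$ exponent is elsewhere: the paper uses the elementary estimate $\arcsin x = x + O(x^{2})$ (not $x+O(|x|^{3})$), reducing the problem to the \emph{second} moment $\int_{B(s)^{2}} r_{n}(x-y)^{2}\,dx\,dy$, for which $\#\Pc_{n}(2)=N_{n}$ gives the honest main term $\pi^{2}s^{4}/N_{n}$. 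The off-diagonal pairs are then split dyadically, and the count of close-by pairs $0<\|\lambda^{1}-\lambda^{2}\|<n^{1/2-\eta}$ is controlled by the Granville--Wigman bound \eqref{eq:GW}, $\ll N_{n}^{2-\tau\eta}$ for any $\tau<4$; this is precisely where the range $\delta<4\epsilon$ (via $\delta=\tau\eta$, $\eta<\epsilon$) comes from, and it is nowhere in your proposal. Without some input of this type, bounding off-diagonal tuples by the trivial $N_{n}^{2}$ or $N_{n}^{3}$ gives only what your last paragraph acknowledges it gives, which is not $N_{n}^{-\delta}$ for the stated range of $\delta$.

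Third, your reduction to $\int |r_{n}|^{3}$ cannot be followed by ``expand $r_{n}^{3}$ arithmetically'' — the absolute value is not compatible with the Fourier expansion, and you would need a Cauchy--Schwarz step (as the paper does in the proof of Theorem \ref{thm:var upper bnd expl}) to trade $|r_{n}|^{2K+3}$ for an even power. Choosing $O(x^{2})$ instead of $O(|x|^{3})$ in the Taylor remainder avoids this issue entirely, since $r_{n}^{2}\ge 0$. Finally, the worry about a square-root singularity of $\arcsin$ near $\pm 1$ is unfounded: since all Taylor coefficients $a_{k}>0$ and $\sum_{k\ge 0} a_{k}=\arcsin(1)=\pi/2<\infty$, the bounds $|\arcsin(x)-x|\le (\pi/2 - 1)x^{2}$ and $|\arcsin(x)-x|\le (\pi/2-1)|x|^{3}$ hold uniformly on $[-1,1]$; no near-diagonal splitting is needed.
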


If one is willing to excise a {\em thin}
sequence of energies, that is, a subsequence $S'$ of $S$ whose relative asymptotic density\footnote{A subset $S'\subseteq S$ is of relative
density $\kappa$ in $S$, if $$\lim\limits_{X\rightarrow\infty} \frac{\#S'(X)}{\#S(X)} = \kappa,$$ where for $\Ac\subseteq \Nb$ we define
$\Ac(X):=\{n\le X:\: n\in \Ac\}$.} in $S$ is $0$,
so that whatever generic energy levels are remaining satisfy certain arithmetic conditions explicated
in Theorem \ref{thm:var upper bnd expl} of \S\ref{sec:outline proofs} below,
then the asserted rate of decay is significantly more rapid, namely, faster than polynomial in $N_{n}$.

\begin{theorem}
\label{thm:ARW superpol up bnd}

For every $\epsilon>0$ there exists a subsequence $S'=S'(\epsilon)\subseteq S$ of energy levels of relative density $1$, so that,
along $n\in S'$, the inequality
\begin{equation}
\label{eq:var bound ARW gen}
\sup\limits_{s>n^{-1/2+\epsilon}} \var\left(\Dpc_{n;s}\right) \ll \frac{1}{N_{n}^{A}},
\end{equation}
holds for every $A>0$.

\end{theorem}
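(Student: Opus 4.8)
The plan is to combine an explicit formula for the variance (essentially Theorem~\ref{thm:var upper bnd expl}) with arithmetic control of the lattice‑point sets $\Ec_n$ available along a density‑one subsequence.

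\textbf{Step 1: explicit variance formula.} Since $\E[H(f_n(y))]=0$, $\var(\Dpc_{n;s})=\E[\Dpc_{n;s}^2]=\frac{1}{(\pi s^2)^2}\iint_{B(s)^2}\E[H(f_n(y))H(f_n(z))]\,dy\,dz$. The orthant–probability identity $\E[H(X)H(Y)]=\tfrac2\pi\arcsin\rho$ for a unit–variance Gaussian pair of correlation $\rho$, applied with $\rho=r_n(y-z)$, together with $\arcsin t=\sum_{m\ge0}b_m t^{2m+1}$ ($b_m=\binom{2m}{m}4^{-m}(2m+1)^{-1}\asymp m^{-3/2}$, $\sum_m b_m=\tfrac\pi2$, absolutely convergent on $[-1,1]$) and $r_n(x)=N_n^{-1}\sum_{\lambda\in\Ec_n}e(\langle\lambda,x\rangle)$, gives
\[
\var(\Dpc_{n;s})=\frac2\pi\sum_{m\ge0}b_m\,\Jc_m,\qquad
\Jc_m:=\frac{1}{N_n^{2m+1}(\pi s^2)^2}\sum_{\lambda_1,\dots,\lambda_{2m+1}\in\Ec_n}\bigl|\beta_s(\lambda_1+\dots+\lambda_{2m+1})\bigr|^2\ \ge 0 ,
\]
where $\beta_s(\mu):=\int_{B(s)}e(\langle\mu,y\rangle)\,dy$ equals $\pi s^2$ at $\mu=0$ and $\tfrac{s}{\|\mu\|}J_1(2\pi s\|\mu\|)$ otherwise, so $|\beta_s(\mu)|^2\ll\min\{(\pi s^2)^2,\ s\|\mu\|^{-3}\}$. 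The key observation is that the diagonal terms $\lambda_1+\dots+\lambda_{2m+1}=0$ contribute nothing: writing $\Ec_n=2^k\Ec_{n'}$ with $n'$ odd or $n'\equiv 2\ (4)$, each point of $\Ec_{n'}$ has a fixed nonzero residue mod~$2$, so a sum of an \emph{odd} number of them is never $\equiv(0,0)\pmod 2$; hence each $\Jc_m$ is purely off‑diagonal.

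\textbf{Step 2: reduction to correlations of $\Ec_n$.} Truncate the Taylor series at $M:=\lfloor N_n^{1/2}\rfloor$. For the tail, the crude pointwise bounds $r_n(x)\le 1-c\,n\|x\|^2$ on $\|x\|\le c'/\sqrt n$ (from $\cos\theta\le 1-c\theta^2$ and $\sum_{\lambda\in\Ec_n}\langle\lambda,x\rangle^2=\tfrac12 N_n n\|x\|^2$) and $|r_n(x)|\le\rho_n$ for $c'/\sqrt n<\dist(x,\Z^2)<\tfrac12$ give $|\Jc_m|\ll(m\,n^{2\epsilon})^{-1}+\rho_n^{2m}$, so $\sum_{m>M}b_m\Jc_m\ll n^{-2\epsilon}+\rho_n^{2M}$, which is $\ll N_n^{-A}$ for every $A$ once $\rho_n\le\tfrac34$ along the subsequence (a density‑one condition, Step~3) and $N_n=n^{o(1)}$. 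For $m\le M$, in $\Jc_m$ the tuples with $\|\lambda_1+\dots+\lambda_{2m+1}\|>1/s$ are controlled by the Bessel decay together with $s\sqrt n\to\infty$ (a dyadic split by the size of the sum, each piece needing a count of tuples with sum of prescribed magnitude, yields a power‑of‑$n$ saving), while the tuples with $1\le\|\lambda_1+\dots+\lambda_{2m+1}\|^2\le 1/s^2$, bounding $|\beta_s|^2\le(\pi s^2)^2$, contribute at most $\Cc_{2m+1}(n;1/s)/N_n^{2m+1}$, where
\[
\Cc_\ell(n;T):=\#\bigl\{(\mu_1,\dots,\mu_\ell)\in\Ec_n^{\ell}:\ 1\le\|\mu_1+\dots+\mu_\ell\|^2\le T^2\bigr\}.
\]
As the bookkeeping only costs polynomial‑in‑$N_n$ combinatorial factors for $m\le M=N_n^{1/2}$ (negligible, $N_n^{O(1)}=n^{o(1)}$), the whole theorem reduces to: along a density‑one set of $n$, the correlation counts $\Cc_\ell(n;T)$ (and the analogous counts of tuples with sum of given magnitude) are $\ll N_n^\ell\,n^{-c\epsilon}$ uniformly in $\ell\le 2M{+}1$ and $T\le n^{1/2-\epsilon}$; summed against $b_m$ this yields $\var(\Dpc_{n;s})\ll n^{-c\epsilon+o(1)}\ll N_n^{-A}$ for every $A>0$.

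\textbf{Step 3: the arithmetic, and the main obstacle.} The target is heuristically $\Cc_\ell(n;T)\asymp N_n^\ell T^2/n\asymp N_n^\ell n^{-2\epsilon}$, and the content is to establish this for density‑one $n$. A tuple with $\|\sum_a\mu_a\|\le T$ forces $\sum_{a<b}\langle\mu_a,\mu_b\rangle$ to lie within $O(T^2)$ of $-\tfrac{\ell n}{2}$, i.e. forces a near‑degenerate additive relation among $\mu_1,\dots,\mu_\ell\in\Ec_n$ — exact ones of odd length being impossible by the parity above, and for $\ell=3$ also by $\Q(i)$‑rationality of slopes. One makes this quantitative by averaging $\Cc_\ell$ over $n\le X$, interchanging sums, and reducing to counting $\ell$‑tuples of lattice points on a common circle with short vector sum, i.e. to how often the quadratic forms $\langle\mu_a,\mu_b\rangle$ simultaneously land in prescribed intervals of relative width $\asymp n^{-\epsilon}$; feeding in the small‑scale equidistribution of $\Ec_n$ on its circle (valid for a density‑one set of $n$) together with the multiplicative, Gaussian‑integer description of $\Ec_n$ (to discard the density‑zero set of energies carrying anomalous additive structure) delivers the bound off a density‑zero exceptional set, while the $L^\infty$ control $\rho_n\le\tfrac34$ follows similarly from $r_n(x)=J_0(2\pi\sqrt n\|x\|)+o(1)$ and $|J_0|$ being bounded away from $1$ off the origin; a diagonal argument over $\ell$ keeps the union of exceptional sets density‑zero, producing $S'(\epsilon)$. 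The crux is precisely this step: quantitatively excluding near‑degenerate additive configurations among lattice points on $\|\cdot\|^2=n$ for a density‑one set of energies, with \emph{uniformity in the correlation length $\ell$} — which is forced because obtaining \emph{every} $A$ requires summing arbitrarily many arcsine terms — and in the scale $T\le n^{1/2-\epsilon}$; this is the finer arithmetic of lattice points on circles and is the technically demanding part of the argument.
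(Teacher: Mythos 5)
Your Step 1 matches the paper exactly, and your Step 2's identification of the main terms with quasi-correlation counts is also the right reduction. But there is a genuine gap in how you handle the tail of the arcsine series, and it cascades into the unresolved obstacle you flag at the end of Step~3.

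You cap the series at $M=\lfloor N_n^{1/2}\rfloor$ and kill the tail via pointwise bounds ($r_n(x)\le 1-cn\|x\|^2$ near $0$ and $|r_n|\le\rho_n<1$ elsewhere). This forces $M$ to \emph{grow} with $n$ — with $M$ fixed the $\rho_n^{2M}$ term is a constant — and therefore forces you to control quasi-correlation and correlation counts uniformly over $\ell\le 2M+1\sim N_n^{1/2}$. As you yourself note, neither Bombieri--Bourgain nor the Benatar--Marinucci--Wigman quasi-correlation result is uniform in $\ell$; a diagonal argument only produces a density-one set along which the bounds hold for each \emph{fixed} $\ell$ once $n$ is large enough (with $n_0$ depending on $\ell$). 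Step~3 is therefore not a technical nuisance but an open problem, and you have created it for yourself: the target statement does not require it. Separately, the claim $\rho_n\le 3/4$ for generic $n$ (bounding $|r_n|$ away from $1$ on a macroscopic annulus, not merely in the Planck-scale window where $r_n\approx J_0$) is not established, and the ``dyadic split yields a power-of-$n$ saving'' for the far tuples is asserted without proof.

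The paper avoids all of this with a small but crucial trick. It truncates at a \emph{fixed} $K$ and, for the tail $\frac{1}{s^4}\iint_{B(s)^2}|r_n(x-y)|^{2K+3}\,dx\,dy$, applies Cauchy--Schwarz to pass to the \emph{even} $(4K+6)$-th moment,
\[
\iint_{B(s)^2}|r_n(x-y)|^{2K+3}\,dx\,dy\le\pi s^{2}\Bigl(\iint_{B(s)^2}r_n(x-y)^{4K+6}\,dx\,dy\Bigr)^{1/2},
\]
so that the diagonal contribution is governed by $\#\Pc_n(4K+6)$. The correlation-tame bound $\#\Pc_n(4K+6)\ll_K N_n^{2K+3}$ (Bombieri--Bourgain along a density-one $S'$) plus the $\Ac(\delta)$ quasi-correlation emptiness for the off-diagonal then give
\[
\var(\Dpc_{n;s})\ll_K N_n^{-(K+3/2)},
\]
with the implicit constant allowed to depend on $K$ because $K$ is fixed. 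Letting $K$ range over $\N$ gives every power of $N_n$, with the same density-one $S'$ (obtained by a diagonal argument over $\ell$, which works precisely because each individual $K$ only needs finitely many values of $\ell$). The key idea you are missing, then, is that the tail does not need to be shown small in absolute terms for growing $M$; it suffices to bound the odd moment $|r_n|^{2K+3}$ by an even one via Cauchy--Schwarz, where the Bombieri--Bourgain diagonal count $N_n^{2K+3}$ — after dividing by $N_n^{4K+6}$ and taking the square root — already delivers a power of $N_n^{-1}$ that scales with $K$.
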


To the other end, we claim the following {\em lower} bound for $\var(\Dpc_{n;s})$ above Planck scale, valid for all $n\in S$.

\begin{theorem}
\label{thm:var def ARW low bnd}
Let $s=s(n)$ be a sequence of radii so that $T:=s\cdot \sqrt{n}\rightarrow\infty$.

\begin{enumerate}[a.]

\item For every $\delta>0$ there exists a sufficiently large number $A=A(\delta)$ so that
\begin{equation}
\label{eq:var>>1/N^A T^3+del}
\var\left(\Dpc_{n;s}\right) \gg \frac{1}{N_{n}^{A}\cdot T^{3+\delta}}.
\end{equation}

\item
\label{it:bnd away Bessel}
If, in addition, $2 \pi T$ is bounded away from the zeros of the Bessel $J_{1}$ function, then
\begin{equation}
\label{eq:var>>1/T^3}
\var\left(\Dpc_{n;s}\right) \gg \frac{1}{T^{3}}.
\end{equation}

\end{enumerate}

\end{theorem}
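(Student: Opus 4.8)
The plan is to put the covariance of the sign function in closed form and then harvest, from the Taylor expansion of $\arcsin$, a couple of manifestly non‑negative Fourier contributions. Since $\E[\Dpc_{n;s}]=0$ by \eqref{eq:E[D]==0}, the classical identity $\E[H(X)H(Y)]=\frac{2}{\pi}\arcsin\rho$ for the standard bivariate Gaussian of correlation $\rho$ yields
$$\var\left(\Dpc_{n;s}\right)=\frac{2}{\pi^{3}s^{4}}\int\limits_{B(s)}\int\limits_{B(s)}\arcsin\bigl(r_{n}(y-z)\bigr)\,dy\,dz .$$
Now $\arcsin(t)=\sum_{m\ge 0}a_{m}t^{2m+1}$ with all $a_{m}>0$ (so $a_{0}=1$, $a_{1}=\frac16$), while by \eqref{eq:rn covar func def} the trigonometric polynomial $r_{n}$ has Fourier coefficients $\widehat{r_{n}}(\lambda)=\frac1{N_{n}}$ for $\lambda\in\Ec_{n}$ and $0$ otherwise, all non‑negative. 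Hence every power $r_{n}^{2m+1}$, and so the function $G_{n}:=\arcsin\circ\, r_{n}$, has non‑negative Fourier coefficients, and with $\widehat{\chi_{B(s)}}(\xi)=\frac{s}{\|\xi\|}J_{1}(2\pi s\|\xi\|)$ the (radial) Fourier transform of the indicator of the disc $B(s)$,
$$\var\left(\Dpc_{n;s}\right)=\frac{2}{\pi^{3}s^{4}}\sum_{\xi\in\Z^{2}}\widehat{G_{n}}(\xi)\,\bigl|\widehat{\chi_{B(s)}}(\xi)\bigr|^{2},$$
a series of non‑negative terms from which I may retain whichever frequencies are convenient.

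Part \eqref{it:bnd away Bessel} comes from the first Wiener chaos alone, i.e. the frequencies $\xi\in\Ec_{n}$: for these $\widehat{G_{n}}(\xi)\ge a_{0}\widehat{r_{n}}(\xi)=\frac1{N_{n}}$ and $\|\xi\|=\sqrt n$, and there are $N_{n}$ of them, so
$$\var\left(\Dpc_{n;s}\right)\ge \frac{2}{\pi^{3}s^{4}}\cdot N_{n}\cdot\frac1{N_{n}}\cdot\frac{s^{2}}{n}J_{1}(2\pi T)^{2}=\frac{2}{\pi^{3}T^{2}}\,J_{1}(2\pi T)^{2}.$$
If $2\pi T$ stays a fixed distance from the zeros of $J_{1}$, the classical asymptotics $J_{1}(x)=\sqrt{2/\pi x}\,\bigl(\cos(x-3\pi/4)+O(1/x)\bigr)$ and the location of its zeros near $\pi(k+\frac14)$ give $J_{1}(2\pi T)^{2}\gg 1/T$, hence $\var(\Dpc_{n;s})\gg 1/T^{3}$, which is \eqref{eq:var>>1/T^3}.

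For part (a) the first chaos degenerates near the zeros of $J_{1}$, so I also retain the third‑chaos frequencies $\xi=3\lambda_{0}$, $\lambda_{0}\in\Ec_{n}$ (disjoint from $\Ec_{n}$, since $\|3\lambda_{0}\|=3\sqrt n$). By the triangle inequality $3\lambda_{0}$ has a unique representation as a sum of three elements of $\Ec_{n}$, so $\widehat{G_{n}}(3\lambda_{0})\ge a_{1}\widehat{r_{n}^{3}}(3\lambda_{0})=\frac1{6N_{n}^{3}}$, and there are $N_{n}$ such frequencies; adding their contribution gives
$$\var\left(\Dpc_{n;s}\right)\ \gg\ \frac{1}{N_{n}^{2}T^{2}}\Bigl(J_{1}(2\pi T)^{2}+J_{1}(6\pi T)^{2}\Bigr).$$
The key elementary observation is that $J_{1}(2\pi T)$ and $J_{1}(6\pi T)$ cannot be simultaneously small: if $2\pi T$ lies within $O(1)$ of a zero $j_{1,k}\sim\pi(k+\frac14)$ of $J_{1}$, then $6\pi T$ lies within $O(1)$ of $3 j_{1,k}\sim\pi(3k+\frac34)$, which is at distance $\sim\pi/2$ from every zero of $J_{1}$ and at which $J_{1}$ has size $\asymp T^{-1/2}$; hence $J_{1}(2\pi T)^{2}+J_{1}(6\pi T)^{2}\gg 1/T$ for all large $T$, so $\var(\Dpc_{n;s})\gg 1/(N_{n}^{2}T^{3})$, which implies \eqref{eq:var>>1/N^A T^3+del} (in fact with $A=2$). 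The finitely many levels with $T$ not large are absorbed into the implied constant once one checks $\var(\Dpc_{n;s})>0$ there.

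I expect the main obstacle to be the oscillatory Bessel factor $J_{1}(2\pi s\|\xi\|)$ inherited from the disc: any single chaos component collapses precisely at the zeros of its Bessel argument, and the remedy is to import a second, arithmetically unrelated argument ($6\pi T$ rather than $2\pi T$) by passing to the third Wiener chaos at the cost of $N_{n}^{-2}$, together with the fact that trebling the argument sends zeros of $J_{1}$ to points of near‑maximal $|J_{1}|$. What remains is purely the uniformity bookkeeping for the classical Bessel estimates (the distance of $2\pi T$ from the zero set controlling a lower bound for $|J_{1}(2\pi T)|$), and no input about the geometry of $\Ec_{n}$ (such as angular equidistribution) is needed.
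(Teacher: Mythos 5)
Your proof is correct, and in fact it takes a genuinely different — and notably simpler and stronger — route than the paper for part (a). Part (b) is essentially the paper's argument: restrict \eqref{eq:VarFormula} to the $k=0$ term, getting $\var(\Dpc_{n;s}) \ge \frac{2}{\pi^3}\frac{J_1(2\pi T)^2}{T^2}$, and use the $J_1$ asymptotics. For part (a), the paper instead writes $2T-1/4 = t+\rho$ and splits into the ``easy'' case $|\rho|\ge T^{-\delta/2}$ (first chaos suffices) and a ``hard'' case, where it constructs $(2k+1)$-tuples of the form $(\lambda,\dots,\lambda,i\lambda,\dots,i\lambda)$ whose sums have norms $\sqrt{nm}$ for $m$ ranging over odd sums of two squares, restricts $m$ to the primes $p\le K$ congruent to $1 \pmod 4$, and invokes Schmidt's deep simultaneous Diophantine approximation theorem (via Lemma~\ref{lem:Diop approx roots}) to find a prime $p_0$ for which $2\pi T\sqrt{p_0}$ is far from the zeros of $J_1$. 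This yields $\var\gg N_n^{-\sqrt{2K}}T^{-3-\delta/2}$ with $K\to\infty$ as $\delta\to 0$, so $A(\delta)\to\infty$. Your argument instead retains only the $k=1$ chaos at the single frequency $\xi=3\lambda_0$ (the paper's tuple $(\lambda_0,\lambda_0,\lambda_0)$, i.e.\ the $m=9$ contribution, unique by the triangle inequality) and replaces Diophantine approximation by the elementary identity: setting $\psi=2\pi T - 3\pi/4$, the asymptotics give $J_1(2\pi T) \sim \frac{1}{\pi\sqrt{T}}\cos\psi$ and $J_1(6\pi T) \sim \frac{1}{\pi\sqrt{3T}}\cos(3\psi + 3\pi/2) = \frac{1}{\pi\sqrt{3T}}\sin(3\psi)$, and $\cos^2\psi + \frac13\sin^2(3\psi)$ is bounded away from zero uniformly (its zeros would require $\psi\equiv\pi/2\,(\mathrm{mod}\,\pi)$ and $\psi\equiv 0\,(\mathrm{mod}\,\pi/3)$ simultaneously, which is impossible). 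So $J_1(2\pi T)^2+J_1(6\pi T)^2\gg 1/T$, and you conclude $\var(\Dpc_{n;s})\gg N_n^{-2}T^{-3}$. This is an improvement on the paper's part (a) on two counts: a fixed, explicit exponent $A=2$ (not blowing up as $\delta\to 0$), and the exponent $3$ rather than $3+\delta$ on $T$. The one point worth stating explicitly in a write-up is the justification of the Fourier-side identity for $\var(\Dpc_{n;s})$ (interchange of the uniformly convergent arcsine Taylor series with integration, exactly as in the paper's \eqref{eq:VarExpansion}/\eqref{eq:VarFormula}); otherwise the argument is complete.
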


For comparison
of the generic upper bound \eqref{eq:var bound ARW gen} with the lower
bounds \eqref{eq:var>>1/N^A T^3+del} and \eqref{eq:var>>1/T^3}
(restricted to the regime $s>n^{-1/2+\epsilon}$ all the said bounds
hold) one should bear in mind \eqref{eq:Nn small}, i.e. that every
arbitrarily small positive power of $n$ dominates every power of
$N_{n}$.  It is well known that at infinity, the zeros of the Bessel
$J_{1}$ function are asymptotic to the arithmetic sequence
\begin{equation}
\label{eq:bessel zeros asympt}
\left\{\frac{\pi}{4}+\pi\cdot n\right\}_{n\ge 1}.
\end{equation}
The a fortiori meaning of the condition postulated by Theorem
\ref{thm:var def ARW low bnd}\ref{it:bnd away Bessel} is that $2 \pi T$ is bounded away by at least $\epsilon_{0}>0$ from the said sequence
\eqref{eq:bessel zeros asympt}, whence the conclusions apply (with constants depending on $\epsilon_{0}$).

\subsection{Statement of principal results: spatial defect distribution}

\label{sec:def var vanish space}

Rather than working with a Gaussian random field, we can take a sequence of {\em deterministic} eigenfunctions $f_{n}$ of the
form \eqref{eq:fn toral Laplace eig}, and study the defect distribution of $f_{n}$ restricted to $B_{x}(s)$, where $x$ is {\em random uniform} on $\Tb^{2}$, and $s$ is above Planck scale.
That is, given a function $f_{n}$ of the
form \eqref{eq:fn toral Laplace eig}, $x\in\Tb^{2}$ and $s>0$, we consider
\begin{equation}
\label{eq:Yfns def spac def}
Y_{f_{n},s}(x):= \frac{1}{\pi s^{2}}\int\limits_{B_{x}(s)}H(f_{n}(y))dy,
\end{equation}
the defect of $f_{n}$ restricted to $B_{x}(s)$.
Such an approach was recently taken by Sarnak ~\cite{Sarnak} and Humphries ~\cite{Humphries} for modular forms, and Granville-Wigman ~\cite{GW}
and Wigman-Yesha ~\cite{WiYe} for toral Laplace eigenfunctions \eqref{eq:fn toral Laplace eig},
in studying the {\em mass distribution} of the respective models, showing, in particular, that if there
exist discs observing unproportionately large or small $L^{2}$-mass of $f_{n}$, then these are not ``typical".

Of our principal interest here is the distribution
of the values of $Y_{f_{n},s}(\cdot)$ in \eqref{eq:Yfns def spac def} as $x$ distributes randomly uniformly on $\Tb^{2}$;
we denote accordingly the ``spatial defect expectation"
\begin{equation*}
\E_{\Tb^{2}}[Y_{f_{n},s}]:= \int\limits_{\Tb^{2}}Y_{f_{n},s}(x)dx,
\end{equation*}
and the ``spatial defect variance"
\begin{equation*}
\var_{\Tb^{2}}(Y_{f_{n},s}):= \int\limits_{\Tb^{2}} \left(Y_{f_{n},s}(x) - \E_{\Tb^{2}}[Y_{f_{n},s}]\right)^{2}dx.
\end{equation*}
The degeneracy argument identical to the argument we used to establish \eqref{eq:tot def triv}
that the total defect of every function \eqref{eq:fn toral Laplace eig} vanishes, yields that, in general,
the spatial defect expectation vanishes precisely, i.e., that
\begin{equation}
\label{eq:spatial exp vanish}
\E_{\Tb^{2}}[Y_{f_{n},s}]=0.
\end{equation}

\vspace{2mm}

In what follows, we will restrict ourselves to Bourgain's class
~\cite{Bo2013} of eigenfunctions
\begin{equation*}
\Bc_{n} = \left\{ f_{n}= \sum\limits_{\lambda\in\Ec_{n}}a_{\lambda}\cdot e(\langle x,\lambda\rangle):\: \forall\lambda\in \Ec_{n},
\,|a_{\lambda}|=1\, \text{ and } a_{-\lambda}=\overline{a_{\lambda}}  \right\}.
\end{equation*}
Our principal result concerning the spatial defect
distribution
asserts that for generic $n\in S$, and $f_{n}\in\Bc_{n}$ a Bourgain
class function, the spatial defect variance vanishes uniformly for $s$
slightly above Planck scale. Since $Y_{f_{n},s}$ is bounded, this is
equivalent to the statement that, in the said scenario, the proportion
of positive values of $f_{n}$ in ``most" discs of radius above Planck
scale is asymptotic to $1/2$ (see Lemma \ref{lem:var vanishing bnded
  rvs} below).  Despite that, what seems likely, the proof of the
principal result immediately below holds for a more general family of
flat eigenfunctions of the type considered in ~\cite{WiYe} (an event
of almost full Gaussian probability), we abandon the possible
generality for the sake of the elegance of presentation. That {\em
  some} flatness condition is {\em essential} for the defect variance
vanishing is asserted in Theorem \ref{thm:nonflat large def} to follow
immediately after the announced principal result.

\begin{theorem}
\label{thm:var upper bnd}
There exists a sequence $S''\subseteq S$ of relative density $1$, so that for all $\epsilon>0$ there exists $R=R(\epsilon)>0$
and $n_{0}=n_{0}(\epsilon)$ sufficiently large, so that for all $n>n_{0}$ with $n\in S''$,
\begin{equation*}
\var_{\Tb^{2}}(Y_{f_{n},s})  < \epsilon
\end{equation*}
holds uniformly for all $f_{n}\in\Bc_{n}$, $s>R/\sqrt{n}$. Equivalently\footnote{Formally, unrolling the definition
of the double limit below yields a slightly different, though equivalent to the above, statement, since it
is strongest for $R$ small.},
\begin{equation*}
\lim\limits_{\substack{R\rightarrow\infty\\n\rightarrow\infty,\,n\in S''}} \sup\limits_{\substack{s>R/\sqrt{n} \\ f_{n}\in\Bc_{n}}}\var_{\Tb^{2}}(Y_{f_{n},s}) = 0.
\end{equation*}
\end{theorem}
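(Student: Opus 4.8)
The plan is to first turn the spatial variance into a weighted power-spectrum sum for the sign pattern $H(f_{n})$, and then to bound the low-frequency Fourier modes of $H(f_{n})$ uniformly over the Bourgain class, by approximating $H$ by a bounded-degree polynomial and feeding in arithmetic information on the lattice points $\Ec_{n}$ for a density-one sequence of energies. Concretely, $Y_{f_{n},s}$ equals $\frac{1}{\pi s^{2}}$ times the convolution on $\Tb^{2}$ of $H(f_{n})$ with the indicator of the disc $B(s)$, whose $\xi$-th Fourier coefficient is $(s/\|\xi\|)\,J_{1}(2\pi s\|\xi\|)$; hence Parseval yields
\begin{equation*}
\var_{\Tb^{2}}(Y_{f_{n},s})=\frac{1}{\pi^{2}s^{2}}\sum_{0\neq\xi\in\Z^{2}}\frac{J_{1}(2\pi s\|\xi\|)^{2}}{\|\xi\|^{2}}\,\bigl|\widehat{H(f_{n})}(\xi)\bigr|^{2},
\end{equation*}
where the $\xi=0$ term drops out precisely because $\widehat{H(f_{n})}(0)=\Dpc(f_{n})=0$ by \eqref{eq:tot def triv}. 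This is the exact analogue of the identity used for the mass distribution in \cite{WiYe}, with $|f_{n}|^{2}$ replaced by $H(f_{n})$; the new feature is that $H(f_{n})$ is not a trigonometric polynomial, so its spectrum is not supported on a small set.

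Write $T:=s\sqrt{n}$. Using $J_{1}(x)^{2}\ll\min(x^{2},x^{-1})$ and the mean-value estimate for $J_{1}^{2}$, I would split the sum at $\|\xi\|\asymp\sqrt{n}/T$ and at $\|\xi\|\asymp\sqrt{n}$. For $\|\xi\|\gtrsim\sqrt{n}$ one uses the crude bound $|\widehat{H(f_{n})}(\xi)|\ll\|H(f_{n})\|_{BV}/\|\xi\|\ll\sqrt{n}/\|\xi\|$, valid because the nodal length of a flat eigenfunction is $\ll\sqrt{n}$; that range contributes $\ll 1/T^{2}$. Summation by parts against the decreasing Bessel weight then shows that the range $\|\xi\|\lesssim\sqrt{n}$ is controlled once one proves the cumulative bound
\begin{equation*}
G_{n}(r):=\sum_{0<\|\xi\|\le r}\bigl|\widehat{H(f_{n})}(\xi)\bigr|^{2}\ \ll\ \frac{r}{\sqrt{n}}\qquad(1\le r\le\sqrt{n}),
\end{equation*}
i.e.\ that the sign pattern of $f_{n}$ carries negligible $L^{2}$-mass at frequencies well below the natural scale $\sqrt{n}$; this yields $\var_{\Tb^{2}}(Y_{f_{n},s})\ll 1/T=1/(s\sqrt{n})$, and then $R=R(\epsilon)$ taken large enough finishes the proof.

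To bound $G_{n}(r)$ I would normalise $\widetilde{f}_{n}=f_{n}/\|f_{n}\|_{2}$ and approximate $H$ by a polynomial $P_{D}$ of degree $D$ that is a good $L^{2}$-approximant of $\operatorname{sign}$ with respect to the Gaussian weight (a truncated Hermite expansion, with $L^{2}$-error $\ll D^{-1/2}$). Controlling the replacement error $\int_{\Tb^{2}}(H(\widetilde{f}_{n})-P_{D}(\widetilde{f}_{n}))^{2}$ requires, for $f_{n}\in\Bc_{n}$ and $n$ in a density-one set $S''$, two inputs: convergence of the spatial moments $\int_{\Tb^{2}}\widetilde{f}_{n}^{\,m}$ to the Gaussian moments, uniformly over $\Bc_{n}$ (Bourgain's $L^{4}$-flattening and its higher-degree analogues), and a uniform near-nodal estimate $\meas\{\,|\widetilde{f}_{n}|<\delta\,\}\to0$ as $\delta\to0$ (value-distribution estimates in the spirit of \cite{WiYe}). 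After this reduction $P_{D}(\widetilde{f}_{n})$ is a trigonometric polynomial whose spectrum lies in the union of the $k$-fold sumsets of $\Ec_{n}$, $k\le D$, with
\begin{equation*}
\widehat{P_{D}(\widetilde{f}_{n})}(\xi)=\sum_{k\le D}p_{k}\,N_{n}^{-k/2}\sum_{\substack{\lambda_{1}+\dots+\lambda_{k}=\xi\\ \lambda_{i}\in\Ec_{n}}}a_{\lambda_{1}}\cdots a_{\lambda_{k}}.
\end{equation*}
For $0<\|\xi\|\ll\sqrt{n}$ no complete $\pm$-pairing of the $\lambda_{i}$ is possible, so the leading ``near-diagonal'' configurations (one pair absorbing $\xi$, the remaining $\lambda_{i}$ paired with their negatives) contribute $\ll_{D}1/N_{n}$, while the genuinely unpaired solutions are controlled by the additive structure of $\Ec_{n}$; combined with the sparseness near the origin of the $k$-fold sumsets for $n\in S''$, this should give $G_{n}(r)\ll_{D}r/\sqrt{n}$ along $S''$, after which one lets $D\to\infty$ slowly with $n$.

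The main obstacle is this last step, and within it the arithmetic heart: uniform bounds --- in $\xi$, in the correlation length $k\le D$, and in $f_{n}\in\Bc_{n}$ --- along a density-one sequence $S''$ for the number of solutions of $\pm\lambda_{1}\pm\dots\pm\lambda_{k}=\xi$ with $\lambda_{i}\in\Ec_{n}$ and $\|\xi\|$ small, i.e.\ the higher spectral semicorrelation estimates extending Bourgain's $L^{4}$ bound and the correlation estimates of \cite{GW},\cite{WiYe}. The other ingredients --- the oscillatory Bessel bookkeeping of the second step, the value-distribution input, and the diagonal argument producing a single density-one $S''$ that serves all $D$ at once (for instance by allowing the admissible degree to grow like $\log\log n$) --- are comparatively routine; it is the quantitative arithmetic of lattice points on circles that forces the passage to $S''$ and governs the size of $R(\epsilon)$.
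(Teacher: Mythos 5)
Your route is genuinely different from the paper's. The paper first reduces the uniform statement over all $s>R/\sqrt{n}$ to the Planck scale $s=R/\sqrt{n}$ by means of an Integral--Geometric Sandwich (Proposition~\ref{prop:sandwich IG}), then invokes Bourgain's de-randomization (Proposition~\ref{prop:meas pres map} and Corollary~\ref{cor:meas pres map}, drawing on \cite{BW}) to compare the rescaled $f_{n}$, for a random centre $x\in\Tb^2$, to Berry's random wave model $g$; the defect variance of $g$ restricted to $B(R)$ is shown to vanish by a soft integral estimate (Lemma~\ref{lem:var def RWM bnd}), and the conclusion is transferred back via a stability estimate for $H(\cdot)$ under $L^{\infty}$-perturbations (Lemma~\ref{lem:stability}) together with the equivalence of variance vanishing and concentration for bounded random variables (Lemma~\ref{lem:var vanishing bnded rvs}). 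Your Fourier-analytic plan---Parseval on the spatial variance and a direct bound on the low-frequency $L^{2}$-mass of $H(f_{n})$---is a coherent and potentially more quantitative alternative (it would give an explicit rate in $T=s\sqrt{n}$, which the paper's ineffective argument does not), but it is not the one the paper takes.

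The gap is the one you yourself flag, and as things stand it is not closable by citation. After replacing $H(\widetilde{f}_{n})$ by a polynomial $P_{D}(\widetilde{f}_{n})$ you incur an $L^{2}$ error $\eta(D)$ that is not concentrated at high frequency, so the best you can hope for is $G_{n}(r)\ll \eta(D)+\sum_{0<\|\xi\|\le r}|\widehat{P_{D}(\widetilde{f}_{n})}(\xi)|^{2}$; the $\eta(D)$-term is survivable, but the polynomial term requires bounds on the semi-correlation counts $\#\{\lambda_{1}+\dots+\lambda_{k}=\xi:\ \lambda_{i}\in\Ec_{n}\}$ that are uniform in $k\le D$, in $\xi$ with $0<\|\xi\|\lesssim\sqrt{n}$, and over $f_{n}\in\Bc_{n}$, along a density-one $S''$. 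The quasi-correlation vanishing of \cite{BMW} (axiom $\Ac(\epsilon)$ of Definition~\ref{def:A axiom qcorr}) gives emptiness for $0<\|\xi\|<n^{1/2-\epsilon}$, but the relevant cutoff in your Bessel decomposition is $\|\xi\|\sim 1/s=\sqrt{n}/T$, and at the Planck scale $s=R/\sqrt{n}$ with $R$ bounded this is of order $\sqrt{n}$; the range $n^{1/2-\epsilon}\le\|\xi\|\le\sqrt{n}/R$ is untouched by the existing quasi-correlation results, and neither the crude BV/nodal-length bound (which only helps for $\|\xi\|\gtrsim\sqrt{n}$) nor the trivial Parseval bound resolves it. Proving the required semi-correlation estimates would be a new arithmetic contribution; the paper's sandwich-plus-de-randomization argument is precisely what lets it avoid this quantitative question altogether.
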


The arithmetic conditions on a sequence $S''$ as postulated in Theorem \ref{thm:var upper bnd}
will be explicated in \S\ref{sec:outline proof spatial}
below, as part of Theorem \ref{thm:derand expl}; they are more restrictive as compared to the subsequence $S'$ postulated in
Theorem \ref{thm:ARW superpol up bnd}. Finally, the result on the flatness being of essence for the spatial defect
variance vanishing announced above is stated, with radii vanishing {\em arbitrarily} slowly (or even fixed small radii).

\begin{theorem}
\label{thm:nonflat large def}
There exists a (thin) sequence $S'''\subseteq S$, a deterministic
sequence $\{f_{n}\}_{n\in S'''}$ of eigenfunctions
\eqref{eq:fn toral Laplace eig},
and numbers $\gamma, \epsilon_{0}>0$, so that the inequality
\begin{equation*}
\liminf\limits_{n\in S'''}\var_{\Tb^{2}}(Y_{f_{n},\Psi(n)}) > \epsilon_{0}
\end{equation*}
holds for every function $\Psi:\Z_{>0}\rightarrow (0, \min(\gamma,1/2))$,
subject to  $\Psi(n) n^{1/2} \to \infty$.
\end{theorem}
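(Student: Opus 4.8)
The plan is to construct an explicit family of "non-flat" eigenfunctions that are dominated by a single pair of conjugate frequencies. Concretely, along a thin sequence $S'''$ one can choose $n$ with $N_n$ large but admitting a distinguished lattice point $\lambda^{(0)}\in\Ec_n$; take
\begin{equation*}
f_n(x) = \alpha\cdot \cos(2\pi\langle \lambda^{(0)},x\rangle) + \beta\cdot g_n(x),
\end{equation*}
where $g_n$ collects the remaining frequencies and $\alpha$ is chosen so that $\beta/\alpha\to 0$ fast (e.g. $\alpha$ comparable to $\sqrt{N_n}$ while the other coefficients stay bounded), so that $f_n$ is a genuine eigenfunction of the form \eqref{eq:fn toral Laplace eig} but "$L^\infty$-concentrated" on one plane wave. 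The key point is that $\cos(2\pi\langle\lambda^{(0)},x\rangle)$ is a one-dimensional trigonometric wave: its sign is $+1$ or $-1$ on alternating strips orthogonal to $\lambda^{(0)}$ of width $\tfrac12/\|\lambda^{(0)}\|=\tfrac1{2\sqrt n}$. For a disc $B_x(s)$ with $s\gg 1/\sqrt n$, the proportion of positive values of this pure wave inside $B_x(s)$ depends on $x$ through the position of the disc relative to the nodal strips, and one computes (a chord-length/\,arccos integral) that $Y_{\cos,s}(x)$ genuinely varies with $x$, with a spatial variance bounded below by a constant $\epsilon_0>0$ that is \emph{uniform in $s$} once $s\sqrt n$ is large (indeed it converges to a positive limit as $s\sqrt n\to\infty$ by equidistribution of the disc center modulo the strip period). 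This is where the condition $\Psi(n)n^{1/2}\to\infty$ enters — it guarantees the disc is wide enough to "see" the periodic sign pattern — while the upper bound $\Psi(n)<\min(\gamma,1/2)$ keeps us inside $\Tb^2$ and in a regime where the geometry is controlled.

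The main steps, in order. First, a number-theoretic step: produce the thin sequence $S'''$ and the distinguished point $\lambda^{(0)}$; for instance take $n$ ranging over a sparse set where $N_n$ grows (so the eigenfunction is "high-dimensional", ruling out the triviality of a bounded-dimension example) yet the Gauss sum / minimal-angle structure of $\Ec_n$ lets us pick one $\pm\lambda^{(0)}$ pair to inflate — this is essentially a choice and causes no difficulty since we only need existence along \emph{some} thin sequence. Second, a perturbation step: show that if $f_n = \alpha\cos(2\pi\langle\lambda^{(0)},\cdot\rangle) + \text{(small)}$, then $H(f_n(y)) = H(\cos(2\pi\langle\lambda^{(0)},y\rangle))$ for all $y$ outside a set of small measure (a neighbourhood of the nodal set of the pure wave, of measure $O(\beta/\alpha)$), hence $Y_{f_n,s}(x) = Y_{\cos,s}(x) + O(\beta/(\alpha s))$ — here one must be a bit careful because the exceptional set near the nodal lines, when intersected with a small disc, could in principle be a large fraction of that disc; but averaging over $x\in\Tb^2$ tames this, so it is cleanest to bound $\|Y_{f_n,s}-Y_{\cos,s}\|_{L^2(\Tb^2)}$ directly rather than pointwise. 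Third, the explicit lower-bound computation: $\var_{\Tb^2}(Y_{\cos,s})\ge\epsilon_0$, obtained by reducing to a one-variable integral over the strip coordinate $t=\langle\lambda^{(0)},x\rangle\bmod 1$ and checking $Y_{\cos,s}$ is non-constant in $t$; combined with $\var_{\Tb^2}(Y_{f_n,s})\ge \tfrac12\var_{\Tb^2}(Y_{\cos,s}) - O((\beta/(\alpha s))^2)$ and $\beta/(\alpha s) = o(1)$ (using $s\ge\Psi(n)$, $\Psi(n)\sqrt n\to\infty$ together with $\beta/\alpha$ decaying, say, polynomially in $N_n$), this finishes the proof with the same $\epsilon_0$.

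The main obstacle I anticipate is the second step — the perturbation estimate near the nodal strips — since a single small disc can sit almost entirely inside a nodal strip of the pure wave, so one cannot hope for a clean pointwise bound on $|Y_{f_n,s}(x)-Y_{\cos,s}(x)|$ valid for \emph{all} $x$. The way around it is to work in $L^2$ over $x$: the set of centers $x$ for which $B_x(s)$ meets the thin $O(\beta/\alpha)$-neighbourhood of the nodal lines of $\cos(2\pi\langle\lambda^{(0)},\cdot\rangle)$ in more than, say, an $\eta$-fraction of its area has $x$-measure $O(\beta/(\alpha s \eta))$, and on the complement the sign agrees up to an $\eta$-fraction; optimising $\eta$ gives $\|Y_{f_n,s}-Y_{\cos,s}\|_{L^2(\Tb^2)} = O((\beta/(\alpha s))^{1/2})$, which is $o(1)$ and amply sufficient. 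A secondary technical point is that $f_n$ as built has the pure wave coefficient $\alpha$ of size $\sim\sqrt{N_n}$, so to keep $f_n$ normalised as in \eqref{eq:fn toral Laplace eig} one should instead keep the coefficient $a_{\lambda^{(0)}}$ of order $1$ and make the \emph{other} $a_\lambda$ small — these are cosmetic rescalings that do not affect the sign pattern, hence neither $Y_{f_n,s}$ nor the conclusion; I would phrase the construction directly in terms of relative sizes of the $a_\lambda$ to avoid any confusion.
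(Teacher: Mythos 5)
Your approach has a fatal gap at the third step, the lower bound $\var_{\Tb^{2}}(Y_{\cos,s})\ge\epsilon_{0}$: that quantity in fact tends to zero as $s\sqrt{n}\to\infty$. The local defect of a single one--dimensional plane wave decays, because its sign pattern is symmetric (equal area of positive and negative strips per period), and so averages out over any disc that spans many wavelengths. Concretely, writing $T=s\sqrt{n}$, $u=\langle\lambda^{(0)},y\rangle/\|\lambda^{(0)}\|$ for the coordinate along $\lambda^{(0)}$, and slicing the disc into chords orthogonal to $\lambda^{(0)}$, one gets
\begin{equation*}
Y_{\cos,s}(x)=\frac{2}{\pi}\int_{-1}^{1}\operatorname{sq}\!\bigl(\sqrt{n}\,u_{0}+T\tau\bigr)\sqrt{1-\tau^{2}}\,d\tau,
\end{equation*}
where $\operatorname{sq}(\theta)=H(\cos 2\pi\theta)$ is the period--$1$ mean--zero square wave. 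Expanding $\operatorname{sq}$ in Fourier series and using $\int_{-1}^{1}e^{i\xi\tau}\sqrt{1-\tau^{2}}\,d\tau=\pi J_{1}(\xi)/\xi=O(|\xi|^{-3/2})$, one finds $Y_{\cos,s}(x)=O(T^{-3/2})$ \emph{uniformly in $x$}, hence $\var_{\Tb^{2}}(Y_{\cos,s})=O(T^{-3})\to 0$. Your intuition (``the disc sees the periodic strip pattern, so the proportion fluctuates with $x$'') is correct about the fluctuation, but its amplitude decays with $T$; there is no positive limiting variance. Since your perturbation step (which is otherwise sound in its $L^{2}$ form) shows $Y_{f_{n},s}\approx Y_{\cos,s}$, it would only prove that $\var_{\Tb^{2}}(Y_{f_{n},s})\to 0$ for your $f_{n}$, which is the opposite of what the theorem asserts.

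The missing idea is that a non--vanishing spatial defect variance requires a \emph{genuinely two--dimensional} sign pattern with unequal positive and negative areas, and no function dominated by a single $\pm\lambda^{(0)}$ pair can supply that. The paper builds its counterexample from a hexagonally symmetric eigenfunction $g(x)=\sum_{i=1}^{3}\cos(2\pi w_{i}\cdot x)$, with $w_{1},w_{2},w_{3}$ the third roots of unity, living on the hexagonal torus; for that particular $g$ the total defect $c=\int_{T}H(g)\,dy$ is verified (by a rigorous numerical computation with a gradient bound controlling unstable cells) to be strictly negative, so that $Y_{g,R}(x)\to c\sqrt{3}/2<0$ as $R\to\infty$. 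That nonvanishing constant is then carried over to the square torus by choosing, via the Pell equation $b^{2}-3a^{2}=1$, Gaussian integers whose directions approximate the $w_{i}$ to within $O(n^{-1/2})$, producing genuine eigenfunctions $f_{n}$ on $\Tb^{2}$ with $Y_{f_{n},s}(x)$ close to the negative constant on a set of centres of positive measure, while the forced vanishing of the spatial mean (the parity/involution symmetry the paper used to prove $\E_{\Tb^{2}}[Y_{f_{n},s}]=0$) guarantees the variance is bounded away from zero. In short, the structural ingredient you need but do not have is a nonzero ``total'' defect for the model wave; a single cosine has total defect exactly zero, while the three--wave hexagonal function does not.
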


\subsection*{Acknowledgements}

We are indebted to Ze\'{e}v Rudnick for many stimulating discussions, and his comments
on an earlier version of this manuscript, in particular, pertaining to
Lemma \ref{lem:Diop approx roots} on Diophantine approximations.
The research leading to these results has received funding from the
European Research
Council under the European Union's Seventh Framework Programme
(FP7/2007-2013), ERC grant agreement n$^{\text{o}}$ 335141 (I.W. and
N.Y.).
P.K. was partially supported by
the
Swedish Research Council (2016-03701).

\section{Outline of the paper}

\subsection{Number Theoretic preliminaries}

Before we will be able to explain the essence of our arguments we will be required to bring forward
some arithmetic aspects of the lattice points $\Ec_{n}$.

\subsubsection{Angular equidistribution of lattice points}

First, we are interested in the {\em angular} distribution of $\Ec_{n}$. To this end we define the sequence
\begin{equation*}
\nu_{n}:=\frac{1}{N_{n}}\sum\limits_{\lambda\in\Ec_{n}}\delta_{\lambda/\sqrt{n}}
\end{equation*}
of probability measures on $\Sc^{1}\subseteq \R^{2}$, indexed by $n\in S$. It is well-known ~\cite{KK,EH,FKW}
that generically the angles of $\Ec_{n}$ are equidistributed, i.e. along a sequence $\{n\}\subseteq S$ of relative
density $1$,
\begin{equation}
\label{eq:nun=>arc len}
\nu_{n}\Rightarrow \frac{d\theta}{2\pi},
\end{equation}
where, as usual, $``\Rightarrow"$ stands for weak-$*$ convergence of probability measures, and
$\frac{d\theta}{2\pi}$ is the normalised arc-length measure on the unit circle. However, even under the (generic) assumption
$N_{n}\rightarrow\infty$, there exist sequences $\{n\}\subseteq S$ so that $\nu_{n}\Rightarrow\tau$ with $\tau$ different
than $\frac{d\theta}{2\pi}$; by definition, $\tau$ can be  any ``attainable"
probability measure on $\Sc^{1}$, e.g. the Cilleruelo measure ~\cite{Cil}
$$\tau=\frac{1}{4}\left(\delta_{\pm 1}+\delta_{\pm i}\right),$$ or
``intermediate'' measures
(e.g. measures supported on Cantor set, cf. ~\cite{KKW}); for a partial
classification see ~\cite{KW,Sartori}.

\begin{definition}
\label{def:equidist}
For a sequence $\{n\}\subseteq S$ we say that $\Ec_{n}$ are asymptotically equidistributed if \eqref{eq:nun=>arc len} holds.
\end{definition}

\subsubsection{Spectral correlations and quasi-correlations}
\label{sec:corr quasi-corr}

One of the key ingredients in ~\cite{KKW} was controlling the size of length-$6$ ``spectral correlations set". Given
$l\ge 3$, the length-$l$ spectral correlation set of the torus is the set
\begin{equation}
\label{eq:P_n_def}
\Pc_{n}(l):=\left\{(\lambda^{1},\ldots,\lambda^{l}) \in \Ec_{n}^{l}:\: \sum\limits_{j=1}^{l}\lambda^{j}=0 \right\}
\end{equation}
of $l$-tuples of lattice points in $\Ec_{n}$ summing up to $0$. Since, unless $n$ is divisible by $4$
(whence we can pass to $n/4$ in place of $n$), for $\lambda\in\Ec_{n}$,
the number of odd coordinates among $\lambda_{1},\lambda_{2}$ is $1$ or $2$ depending on the parity of $n$
(but independent of $\lambda\in\Ec_{n}$), for $l$ odd, the correlation sets
\begin{equation}
\label{eq:odd corr empty}
\Pc_{n}(l) = \varnothing
\end{equation}
are all empty ~\cite{BW} by a congruence
obstruction modulo $2$ argument, similar to the one yielding \eqref{eq:tot def triv}.
Otherwise, for $l$ even, the number of length-$l$ correlations
\begin{equation*}
\frac{1}{N_{n}^{l}}\cdot \#\Pc_{n}(l) = \int\limits_{\Tb^{2}}r_{n}(x)^{l}dx
\end{equation*}
is equal to the (normalized) moments of the covariance function \eqref{eq:rn covar func def} of the Arithmetic Random Waves.

Since for $l=2k$, all the ``diagonal" tuples $(\lambda^{1},-\lambda^{1},\ldots,\lambda^{k},-\lambda^{k})$ and their permutations
are in $\Pc_{n}(l)$, it implies the inequality
\begin{equation*}
\#\Pc_{n}(l) \gg N_{n}^{k}.
\end{equation*}
Conversely, Bombieri-Bourgain ~\cite{BB} proved, among other things, that, given $l=2k$ even, the inequality
\begin{equation}
\label{eq:corr<<N^k}
\#\Pc_{n}(l) \ll_{l} N_{n}^{k}
\end{equation}
holds for a generic sequence $\{n\}\subseteq S$; by invoking the usual diagonal argument, \eqref{eq:corr<<N^k} holds for
{\em all} $l$ even, along a generic sequence $\{n\}\subseteq S$.

\begin{definition}[Correlation-tame sequences of energies]
\label{def:B axiom corr}
We say sequence $S'\subseteq S$ is correlation-tame, if for every $l=2k\ge 6$ even, the inequality
\eqref{eq:corr<<N^k} holds true.
\end{definition}

In fact, Bombieri-Bourgain ~\cite{BB} proved a stronger property satisfied by the correlations of $\Ec_{n}$,
with $n$ generic, i.e. that a generic sequence in $S$ satisfies the following axiom $\Fc(\gamma)$ for some $0<\gamma<1/2$.

\begin{definition}[Axiom $\Fc(\gamma)$]
\label{def:axiom F(gamma)}

\begin{enumerate}

\item For $l\ge 4$, $n\in S$, we say that $(\lambda^{1},\ldots,\lambda^{l})\in\Ec_{n}^{l}$ is a minimal correlation,
if $\sum\limits_{j=1}^{l}\lambda^{j}=0$ and no proper subsum of $\sum\limits_{j=1}^{l}\lambda^{j}$ vanishes.

\item For $0<\gamma<1/2$ we say that a sequence $\{n\}\subseteq S$ satisfies the axiom $\Fc(\gamma)$, if for every
$l\ge 4$, the number of length-$l$ minimal correlations of $\Ec_{n}$ is at most $N_{n}^{\gamma\cdot l}$ for $n$ sufficiently
big.

\end{enumerate}

\end{definition}

As we will deal with moments of $r_{n}(\cdot)$ restricted to shrinking balls, we will find that, for our purposes,
the relevant notion is that of {\em quasi-correlations} ~\cite{BMW} (see \eqref{eq:part mom corr qcorr} below).
Given $n\in S$, $\epsilon>0$ and $l\ge 2$,
the length-$l$ quasi-correlation set is\footnote{Mind the slight abuse of notation as compared to ~\cite{BMW}}
\begin{equation*}
\Cc_{n}(l,\epsilon):=
\left\{(\lambda^{1},\ldots,\lambda^{l}) \in \Ec_{n}^{l}:\: 0<\left\| \sum\limits_{j=1}^{l}\lambda^{j}\right\|< n^{1/2-\epsilon} \right\};
\end{equation*}
note that, by the definition, $\Pc_{n}(l)$ and $\Cc_{n}(l,\epsilon)$ are disjoint.
It was shown ~\cite[Theorem 1.4]{BMW} that, given $l\ge 2$ and $\epsilon>0$,
the length-$l$ quasi-correlation set is empty $\Cc_{n}(l,\epsilon)=\varnothing$
along a generic sequence $\{n\}\subseteq S$, and, as it is the case of the correlation set, by a diagonal argument,
we may choose a density-$1$ subsequence $\{n\}\subseteq S$, so that along that sequence, for {\em every} $l\ge 2$,
$$\Cc_{n}(l,\epsilon)=\varnothing$$ holds true for $n$ sufficiently big (depending on $l$).

\vspace{2mm}

\begin{definition}[Axiom $\Ac(\epsilon)$ on sequences of energies]
\label{def:A axiom qcorr}
Given $\epsilon>0$ we say that a sequence $S'\subseteq S$ satisfies the axiom\footnote{Mind again an abuse of
notation compared to ~\cite{BMW}} $\Ac(\epsilon)$, if for every $l\ge 2$, the equality
$\Cc_{n}(l,\epsilon)=\varnothing$ holds for $n$ sufficiently big.
\end{definition}

\subsection{Outline of the proofs for Arithmetic Random Waves (theorems \ref{thm:uppper bnd det}-\ref{thm:var def ARW low bnd})}
\label{sec:outline proofs}

Here we assume that $\{f_{n}\}_{n\in S}$ are the (Gaussian) Arithmetic Random Waves.
Since it is possible to derive the identity
\begin{equation}
\label{eq:arcsineFormula}
\E[H(f_{n}(x))\cdot H(f_{n}(y))] = \frac{2}{\pi}\arcsin (r_{n}(x-y)),
\end{equation}
(cf. Lemma~\ref{lem:VarExact})
a straightforward manipulation with the definition \eqref{eq:Dns def} of $\Dpc_{n;s}$ and inverting the order of integration,
upon bearing in mind the stationarity of $f_{n}$, yields
the following {\em precise} expression for the defect variance:
\begin{equation}
\label{eq:var(defect) arcsin}
\var(\Dpc_{n;s}) = \frac{2}{\pi^{3} s^{4}}\int\limits_{B(s)\times B(s)}\arcsin(r_{n}(x-y))dxdy.
\end{equation}
Now we Taylor expand the arcsine around the origin (note that the
series converges absolutely at the endpoints $t=\pm 1$)
\begin{equation}
\label{eq:arcsin Taylor}
\arcsin(t) = \sum\limits_{k=0}^{\infty}a_{k}t^{2k+1},
\end{equation}
where all the (explicit) $a_{k}>0$ are {\em positive}, and substitute into \eqref{eq:var(defect) arcsin} to relate
between the defect variance and the moments of the covariance function restricted to $B(s)$:
\begin{equation}
\label{eq:var(defect) arcsin taylor}
\var(\Dpc_{n;s}) = \frac{2}{\pi^{3} s^{4}}\sum\limits_{k=1}^{\infty}a_{k}\cdot
\int\limits_{B(s)\times B(s)}r_{n}(x-y)^{2k+1}dxdy.
\end{equation}

\vspace{2mm}

We may in turn exploit the additive structure \eqref{eq:rn covar func
  def} to relate the said {\em odd} moments of $r_{n}(\cdot)$ to the
spectral correlations (and, implicitly, the quasi-correlations)
defined in \S\ref{sec:corr quasi-corr}:
\begin{equation}
\label{eq:part mom corr qcorr}
\int\limits_{B(s)\times B(s)}r_{n}(x-y)^{2k+1}dxdy=
\frac{ s^{2}}{N_{n}^{2k+1}}\sum\limits_{(\lambda^{1},\ldots,\lambda^{2k+1})\notin
\Pc_{n}(2k+1)} \frac{J_{1}(2 \pi s\cdot \| \lambda^{1}+\ldots + \lambda^{2k+1}\|)^{2}}
{ \| \lambda^{1}+\ldots + \lambda^{2k+1}\|^{2}},
\end{equation}
with $J_{1}(\cdot)$ the Bessel $J$ function of the first order,
so that to relate the defect variance to the spectral correlations and quasi-correlations (where, to obtain
\eqref{eq:part mom corr qcorr}, we separate the diagonal and use the observation \eqref{eq:odd corr empty}).
One may then substitute \eqref{eq:part mom corr qcorr} into \eqref{eq:var(defect) arcsin taylor} to obtain a more explicit
expression for $\var(\Dpc_{n;s})$, an absolutely convergent infinite series over all $(2k+1)$-tuples of lattice points.
If we assume further, that $s=n^{-1/2+\epsilon}$ (say), and a sequence $\{n\}\subseteq S$ satisfies the $\Ac(\delta)$
axiom with some $\delta<\epsilon$, then all the summands on the r.h.s. of \eqref{eq:part mom corr qcorr} are formally decaying
like a (small) power of $n$, faster than any power of $N_{n}$ (see \eqref{eq:Nn small}).

There is a subtlety with this outlined approach though, as controlling the decay rate in
this infinite series uniformly seems very difficult (if possible at all). Instead, we will only control finitely many
summands and bound the contribution of the higher moments. With this approach, we will encounter the odd moments
of the absolute value $|r_{n}(\cdot)|$ of the covariance rather than the moments of the covariance, that
we will reduce to a moment of higher order via Cauchy-Schwarz. Theorem \ref{thm:uppper bnd det} is the result of such an application
when capping the series at the first degree Taylor approximation of the arcsine \eqref{eq:arcsin Taylor}, whereas
Theorem \ref{thm:ARW superpol up bnd} caps it at an arbitrarily high degree Taylor approximation, depending on the required $A>0$ in
\eqref{eq:var bound ARW gen}, while also appealing to the correlation-tame property of a generic sequence of energies.
We will be able to prove the following result, which, since the claimed sequence $S'$ is generic, thanks to the results
mentioned in \S\ref{sec:corr quasi-corr}, clearly implies Theorem \ref{thm:ARW superpol up bnd}.

\begin{theorem}[Theorem \ref{thm:ARW superpol up bnd} with control over $S'(\epsilon)$]
\label{thm:var upper bnd expl}
Let $\epsilon>0$ be given,
and assume that $S'\subseteq S$ is a sequence of energy levels satisfying the axiom $\Ac(\delta)$
with some $\delta<\epsilon$, and is correlation-tame. Then the conclusions of Theorem \ref{thm:ARW superpol up bnd} hold, i.e.,
along $n\in S'$,
\begin{equation*}
\sup\limits_{s>n^{-1/2+\epsilon}} \var\left(\Dpc_{n;s}\right) \ll \frac{1}{N_{n}^{A}},
\end{equation*}
for every $A>0$.
\end{theorem}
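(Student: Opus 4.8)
The plan is to feed the exact expansion \eqref{eq:var(defect) arcsin taylor} for $\var(\Dpc_{n;s})$, together with the spectral-correlation identity \eqref{eq:part mom corr qcorr}, into the two arithmetic hypotheses on $S'$, which play complementary roles: the axiom $\Ac(\delta)$ forces every \emph{off-diagonal} tuple $(\lambda^{1},\dots,\lambda^{l})\in\Ec_{n}^{l}$ (i.e. one with $\lambda^{1}+\dots+\lambda^{l}\ne 0$) to satisfy $\|\lambda^{1}+\dots+\lambda^{l}\|\ge n^{1/2-\delta}$, whereas correlation-tameness bounds the size $\#\Pc_{n}(2m)\ll_{m} N_{n}^{m}$ of the \emph{diagonal} correlation sets. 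Given $A>0$, fix once and for all an integer $K=K(A)\ge\max(3,\lceil A\rceil)$; since $S'$ satisfies $\Ac(\delta)$, for all sufficiently large $n\in S'$ one has $\Cc_{n}(l,\delta)=\varnothing$ for every $l\le 2K+2$. I would then split the series \eqref{eq:var(defect) arcsin taylor} into the finitely many low-order terms $1\le k\le K$ and the tail $k>K$.

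For the low-order terms, since $2k+1$ is odd we have $\Pc_{n}(2k+1)=\varnothing$ by \eqref{eq:odd corr empty}, so the right-hand side of \eqref{eq:part mom corr qcorr} is a sum over \emph{all} $N_{n}^{2k+1}$ tuples in $\Ec_{n}^{2k+1}$, every one of which satisfies $\|\lambda^{1}+\dots+\lambda^{2k+1}\|\ge n^{1/2-\delta}$ by $\Ac(\delta)$. Bounding $J_{1}$ trivially and using this lower bound on the denominator gives $\int_{B(s)\times B(s)}r_{n}(x-y)^{2k+1}\,dx\,dy\ll s^{2}/n^{1-2\delta}$, uniformly in $1\le k\le K$; after dividing by $s^{4}$ and inserting $s>n^{-1/2+\epsilon}$ (so $s^{-2}<n^{1-2\epsilon}$), each of these $K$ terms contributes $\ll n^{-2(\epsilon-\delta)}$. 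Since $\delta<\epsilon$ and $N_{n}=n^{o(1)}$ by \eqref{eq:Nn small}, their total is $\ll K\,n^{-2(\epsilon-\delta)}\ll N_{n}^{-A}$ once $n$ is large enough.

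For the tail $k>K$, the difficulty --- which is the crux of the argument --- is that one cannot control the correlation sums $\#\Pc_{n}(2k+1)$ (equivalently the off-diagonal sums) \emph{uniformly} in $k$, because $\Ac(\delta)$ is only available for bounded lengths; the remedy, following the outline, is to pass to a single even moment. Using $|r_{n}|\le 1$ one has $|r_{n}|^{2k+1}\le|r_{n}|^{2K+1}$, so the tail is at most $\bigl(\sum_{k>K}a_{k}\bigr)\int_{B(s)\times B(s)}|r_{n}|^{2K+1}\le\frac{\pi}{2}\int_{B(s)\times B(s)}|r_{n}|^{2K+1}$, and Cauchy--Schwarz bounds this by the geometric mean of $\int_{B(s)\times B(s)}r_{n}^{2K}$ and $\int_{B(s)\times B(s)}r_{n}^{2K+2}$. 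For an even exponent $2m$ with $m\in\{K,K+1\}$, the analogue of \eqref{eq:part mom corr qcorr} (now keeping the diagonal) has a diagonal part $(\pi s^{2})^{2}\#\Pc_{n}(2m)/N_{n}^{2m}\ll s^{4}/N_{n}^{m}$ by correlation-tameness, applicable because $2m\ge 6$, and an off-diagonal part $\ll s^{2}/n^{1-2\delta}$ exactly as before; hence $s^{-4}\int_{B(s)\times B(s)}r_{n}^{2m}\ll N_{n}^{-m}+n^{-2(\epsilon-\delta)}\ll N_{n}^{-m}$ for $n$ large. Plugging this into the Cauchy--Schwarz bound, the tail contributes $\ll N_{n}^{-K}\le N_{n}^{-A}$, and adding the low-order part yields $\var(\Dpc_{n;s})\ll N_{n}^{-A}$, uniformly for $s>n^{-1/2+\epsilon}$ since the only $s$-dependence entered through the harmless factor $s^{-2}<n^{1-2\epsilon}$ (and there is none in the diagonal contribution). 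The one genuine obstacle, as indicated, is this passage from the full infinite arcsine series to a controllable even moment: justifying the collapse $|r_{n}|^{2k+1}\le|r_{n}|^{2K+1}$ together with the Cauchy--Schwarz reduction, and recognizing that correlation-tameness is precisely the input needed to dominate the diagonal of the resulting even moment; the remaining estimates are routine combinations of the quasi-correlation axiom with elementary bounds for $J_{1}$.
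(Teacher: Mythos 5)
Your proof is correct and follows the same strategy as the paper's: truncate the arcsine expansion at level $K$, use the quasi-correlation axiom $\Ac(\delta)$ to force $\|\lambda^{1}+\dots+\lambda^{2k+1}\|\ge n^{1/2-\delta}$ in each low-order odd moment, and pass from the tail's odd absolute moment to even moments via Cauchy--Schwarz, which correlation-tameness then controls. The only cosmetic differences are that the paper bounds the Taylor remainder as $O(|t|^{2K+3})$ and applies Cauchy--Schwarz against the constant function to reach the single even moment of order $4K+6$, whereas you bound the tail series directly by $\tfrac{\pi}{2}|t|^{2K+1}$ and factor as the geometric mean of the adjacent even moments $2K$ and $2K+2$ (requiring $K\ge 3$ so that correlation-tameness applies), and the paper uses $J_{1}(x)\ll x^{-1/2}$ where you use $J_{1}\ll 1$; both variants yield the same conclusion.
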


For the lower bounds in Theorem \ref{thm:var def ARW low bnd} one also starts from \eqref{eq:var(defect) arcsin taylor} and
\eqref{eq:part mom corr qcorr}. Indeed, since the Taylor coefficients $a_{k}$ in \eqref{eq:var(defect) arcsin taylor} are all
positive, and, in hindsight, so are all the moments \eqref{eq:part mom corr qcorr} of $r_{n}(\cdot)$, it is sufficient to
bound any of these from below. If $T:=s\cdot \sqrt{n}$ happens to be bounded away from zeros of the Bessel $J_{1}$ function,
this readily yields the bound \eqref{eq:var>>1/T^3} of Theorem \ref{thm:var def ARW low bnd}. Most of our argument takes upon
the opposite situation when $T$ approaches one of the Bessel $J_{1}$ zeros, whence we need to 
rule out the, a priori unlikely, possibility of all the
terms $$2 \pi s\cdot \left\Vert \sum\limits_{j=1}^{2k+1} \lambda^{j}\right\Vert$$ conspiring around the Bessel zeros. 
To resolve this situation we exploit the higher order Taylor approximates, whence appealing to the deep W. Schmidt's {\em simultaneous Diophantine approximation} theorem ~\cite{Schmidt}, for example, approximating $\sqrt{5}$ by rational number for $k=1$ or $\sqrt{13}$ and $\sqrt{17}$ for $k=2$;
to attain $\frac{1}{T^{3+\delta}}$ as in \eqref{eq:var>>1/N^A T^3+del} we will need to focus on arbitrarily high $k$. 

Instead of using such a powerful result as in \cite{Schmidt}, one can try to significantly soften 
our techniques by bounding away from integers the values of the linear form $\Lc:\R^{K}\rightarrow\R$ given by $\Lc(x)=\sum\limits_{j=1}^{K}x_{j}\sqrt{p_{j}}$,
with a collection of distinct primes $p_{j}\equiv 1 (4)$ of our choice. An application of Khintchine's transference principle 
~\cite{Khintchine} (see also ~\cite[Theorem 5C on p. 99-100]{Schmidt Diophantine}) with Liouville's bound $$\left| \Lc(x)+b\right|\gg \frac{1}{\|x\|^{2^{K}-1}},$$ valid for all $x\in\Z^{K}\setminus\{0\}$, $b\in\Z$ ~\cite[Lemma 1A on p. 151]{Schmidt Diophantine}, yields information on the simultaneous approximation
of $\{\sqrt{p_{j}}\}$ by rational numbers. Unfortunately, the exponent, resulting from such an application, grows to infinity with $K$, 
which, to our best knowledge, undermines any attempt of the described type, and we thereby abandon it in favour of appealing to ~\cite{Schmidt}.

\subsection{Outline of the proofs for spatial fluctuations (Theorem \ref{thm:var upper bnd})}
\label{sec:outline proof spatial}

By a simple manipulation with the defect definition \eqref{eq:Yfns def spac def}
and integration order exchange it is straightforward to derive the expression
\begin{equation}
\label{eq:var spac ord int change}
%
\var_{\Tb^{2}}(Y_{f_{n},s}) = \frac{1}{(\pi s^{2})^{2}} \int\limits_{\Tb^{2}\times \Tb^{2}} H(f_{n}(y)) H(f_{n}(z)) \cdot s^{2}W(\|y-z\|/s)dydz
\end{equation}
for the spatial defect variance, where $W$ is a certain weight function (``circle-circle intersection function")
supported on $[0,2]$, and is $C^{1}$ on $(0,2)$. It is conceivable that the asymptotic vanishing of $\var_{\Tb^{2}}(Y_{f_{n},s})$
follows by a direct analysis of the r.h.s. of \eqref{eq:var spac ord int change}. However it seems very difficult, as the appearance of
$H(\cdot)$ on the r.h.s. of \eqref{eq:var spac ord int change} does not allow us to capitalise on the special additive structure
\eqref{eq:fn toral Laplace eig} of $f_{n}$, especially, in light of the discontinuity of $H(\cdot)$ at the origin (so, for example,
Taylor expanding $H(\cdot)$ around the origin is problematic).

\vspace{2mm}

We abandon such a direct approach, and instead notice that, since
the random variable $Y_{f_{n},s}$ is bounded (by $1$), the variance $\var_{\Tb^{2}}(Y_{f_{n},s})$ asymptotically vanishing is equivalent to
$Y_{f_{n},s}$ asymptotically vanishing with high probability (i.e. for ``most" of the ball centres on the torus), and recall
that, under certain flatness conditions on $f_{n}$ (certainly satisfied by all $f_{n}\in\Bc_{n}$) and arithmetic conditions on $n$
(in the spirit of the ones given in \S\ref{sec:corr quasi-corr} above), $f_{n}(\cdot)$ exhibits ~\cite{Bo2013,BW}
Gaussian spatial value distribution when averaged over the whole torus. Using these ``de-randomisation" techniques we will be able to prove
the result to follow immediately; unlike the results of ~\cite{Bo2013,BW} (and ~\cite{SartoriMass}), this is a second-order result (as opposed to a first order one). Moreover, since, unlike ~\cite{Bo2013,BW}, the Gaussian input for Theorem \ref{thm:derand expl} is not
inherently contained within its statement, it seems that a more direct approach might be possible
for proving Theorem \ref{thm:derand expl}.
Recall axiom $\Fc(\gamma)$ in Definition \ref{def:axiom F(gamma)}, and lattice points equidistribution in Definition \ref{def:equidist}.

\begin{theorem}[A variant of Theorem \ref{thm:var upper bnd} with control over $S''$]
\label{thm:derand expl}

Let $S''\subseteq S$ be a sequence of energy levels satisfying the axiom $\Fc(\gamma)$ for some $\gamma\in (0,1/2)$,
and assume further that the corresponding $\Ec_{n}$ are asymptotically equidistributed. Then the conclusions of Theorem \ref{thm:var upper bnd} apply along $S''$, i.e.
\begin{equation}
\label{eq:double lim var vanish}
\lim\limits_{\substack{R\rightarrow\infty\\n\rightarrow\infty,\,n\in S''}} \sup\limits_{\substack{s>R/\sqrt{n} \\ f_{n}\in\Bc_{n}}}\var_{\Tb^{2}}(Y_{f_{n},s}) = 0.
\end{equation}

\end{theorem}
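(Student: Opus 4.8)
The plan is to deduce Theorem~\ref{thm:derand expl} from a \emph{de-randomisation} principle: replace the deterministic Bourgain-class eigenfunction $f_n$ by the Arithmetic Random Wave model in two independent spatial variables, and transfer the smallness of the spatial defect variance from the Gaussian setting (Theorem~\ref{thm:var upper bnd expl}, whose hypotheses are implied by $\Fc(\gamma)$) to the deterministic one. Concretely, starting from the identity \eqref{eq:var spac ord int change} one must approximate the bounded, discontinuous integrand $H(f_n(y))H(f_n(z))$ by something amenable to the additive structure \eqref{eq:fn toral Laplace eig}. The first step is therefore to \emph{smooth} $H$: fix a small $\eta>0$ and an odd $C^\infty$ function $H_\eta$ with $H_\eta=H$ outside $(-\eta,\eta)$, $|H_\eta|\le 1$, so that $\|H-H_\eta\|_{L^1(\mu)}$ is controlled by the measure of $\{|f_n|<\eta\}$. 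By the Bourgain--Wigman Gaussian value-distribution result for $f_n\in\Bc_n$ (the ``first-order'' de-randomisation of \cite{Bo2013,BW}), this measure is $O(\eta)$ for $n\in S''$ large, uniformly in $f_n\in\Bc_n$, so the error incurred by passing to $H_\eta$ is $O(\eta)$ in the spatial variance as well.

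Next I would exploit the key point flagged in the outline: \textbf{the Gaussian input is not inherent to the statement}. Write $\var_{\Tb^2}(Y_{f_n,s})=\int\!\!\int \Phi(f_n(y),f_n(z))\, s^2 W(\|y-z\|/s)\,dy\,dz$ with $\Phi=H\otimes H$ (up to the vanishing-expectation correction \eqref{eq:spatial exp vanish}), replace $\Phi$ by its smoothed version $\Phi_\eta=H_\eta\otimes H_\eta$, and then expand $H_\eta$ in a rapidly convergent series (Fourier or Hermite--Taylor) so that each term is a polynomial in $f_n(y)$ and $f_n(z)$ of bounded degree. Each such monomial $f_n(y)^a f_n(z)^b$, upon inserting the Fourier expansion \eqref{eq:fn toral Laplace eig} and integrating in $y,z$ against the weight $s^2W(\|y-z\|/s)$, becomes a sum over $(a+b)$-tuples of lattice points in $\Ec_n$, weighted by the Fourier transform of the circle--circle intersection function evaluated at partial sums $\lambda^{i_1}+\dots$. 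This is precisely the combinatorial object controlled by axiom $\Fc(\gamma)$ (minimal correlations) together with asymptotic equidistribution of $\Ec_n$: the ``diagonal'' contributions reproduce the Gaussian moments, while off-diagonal terms are negligible because minimal correlations are sparse ($\le N_n^{\gamma l}$) and the non-vanishing partial sums $\lambda^{i_1}+\dots$ force Bessel-type decay in $s$. Thus $\var_{\Tb^2}(Y_{f_n,s})$ agrees, up to $O(\eta)+o(1)$, with the Gaussian spatial defect variance computed for the two-variable ARW model, which is in turn $o(1)$ by (the proof techniques behind) Theorem~\ref{thm:var upper bnd expl} once $s>R/\sqrt n$ and $R,n\to\infty$.

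The bookkeeping is then: for fixed $\eta$, take $n\in S''$ large (and $R$ large) so that the smoothed, polynomially-expanded spatial variance is $<\eta$; combine with the $O(\eta)$ smoothing error to get $\var_{\Tb^2}(Y_{f_n,s})<C\eta$ uniformly in $f_n\in\Bc_n$ and $s>R/\sqrt n$; let $\eta\to 0$ to obtain the double limit \eqref{eq:double lim var vanish}. Uniformity over $\Bc_n$ is automatic because every bound above (the measure estimate $|\{|f_n|<\eta\}|=O(\eta)$, the correlation counts, the Bessel decay) depends on $f_n$ only through $|a_\lambda|\equiv 1$ and the arithmetic of $n$, not through the phases.

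The main obstacle I expect is \textbf{controlling the tail of the Hermite/Taylor expansion of $H_\eta$ uniformly together with the growth in the number of lattice-point tuples}: the degree-$l$ term of the expansion carries a combinatorial factor (number of $l$-tuples $\sim N_n^l$, or $\sim N_n^{l/2}$ after the diagonal is isolated) that must be beaten by the decay of the expansion coefficients \emph{and} the $s$-dependent Bessel weights, and this must hold with constants independent of $n$ and of the particular $f_n\in\Bc_n$. This is exactly the place where one pays for not having a clean series representation of $H$ itself; the resolution, as in the ARW arguments of \S\ref{sec:outline proofs}, is to truncate the expansion at a large but fixed degree $L=L(\eta)$, absorb the truncation error into the $O(\eta)$ from smoothing (using the boundedness of $H_\eta$ and the measure estimate again), and handle only the $L$-term polynomial exactly via $\Fc(\gamma)$ and equidistribution. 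A secondary subtlety is that $\Fc(\gamma)$ bounds \emph{minimal} correlations, so before applying it one must decompose each correlation into minimal blocks and track how the circle--circle weight $W$ (and its Bessel transform) factorises over such a decomposition — this is routine but needs the $C^1$ regularity and compact support of $W$ recorded after \eqref{eq:var spac ord int change}.
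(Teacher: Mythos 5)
Your proposal takes a genuinely different route from the paper, and unfortunately it has two concrete gaps that the paper's argument is specifically designed to circumvent.

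\textbf{First gap: uniformity over all $s>R/\sqrt{n}$.} Your scheme targets a single scale and asserts at the end that the final estimate holds ``uniformly in $f_n\in\Bc_n$ and $s>R/\sqrt n$'' without any mechanism for passing from one scale to another. The paper in fact proves the Planck-scale statement \emph{first} (Proposition~\ref{prop:derand Planck scale}, for $s=R/\sqrt n$ only), and then bootstraps to all larger $s$ via the Integral--Geometric Sandwich (Proposition~\ref{prop:sandwich IG}), which expresses $Y_{f_n,r_2}$ as a spatial average of $Y_{f_n,r_1}$ up to $O(r_1/r_2)$ errors. This is not a cosmetic step: the de-randomisation comparison to a fixed limiting Gaussian field is only available at Planck (or logarithmically-above-Planck) scale, so the passage to macroscopic radii genuinely requires a separate geometric argument. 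Your proposal has no analogue of this sandwich and, as written, does not establish the uniformity claimed in \eqref{eq:double lim var vanish}.

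\textbf{Second gap: conflation of the two variance quantities.} You invoke ``the Gaussian spatial defect variance computed for the two-variable ARW model'' and claim it is $o(1)$ by (the proof techniques behind) Theorem~\ref{thm:var upper bnd expl}. But Theorem~\ref{thm:var upper bnd expl} controls $\var(\Dpc_{n;s})$, the variance over the \emph{random coefficients} $a_\lambda$ of the Arithmetic Random Wave; the quantity you need as a comparison benchmark is the defect variance of Berry's random monochromatic field $g$ over a growing ball $B(R)$, i.e.\ $\var(X_R)$, which is a different object on a different probability space. The paper establishes exactly this benchmark in Lemma~\ref{lem:var def RWM bnd} (a standalone computation on $J_0$), and then transports it to the spatial defect of $f_n$ via Bourgain's measure-preserving map $\tau\colon\Omega'\to\Tb^2$ (Corollary~\ref{cor:meas pres map}) together with a sign-stability estimate (Lemma~\ref{lem:stability}) and the elementary criterion Lemma~\ref{lem:var vanishing bnded rvs}. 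Your direct correlation-expansion plan never isolates this Gaussian benchmark cleanly, so even if the expansion machinery were made rigorous, you would not have a reference quantity to compare to.

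Beyond these two gaps, your acknowledged obstacle (controlling the tail of the smoothed expansion against the combinatorial growth $\sim N_n^l$ of lattice-point tuples) is exactly the difficulty the paper's outline \S\ref{sec:outline proof spatial} cites when abandoning the direct approach: the discontinuity of $H$, and the $\eta$-dependence of the smoothing constants, make a Taylor/Hermite expansion argument very delicate for the deterministic, Bourgain-class setting where the phases $a_\lambda$ admit no averaging. The paper's use of the measure-preserving map converts the problem into a genuinely Gaussian one at Planck scale and entirely bypasses the need for a series expansion of $H$. Your approach might in principle be salvageable --- the paper itself notes that a more direct route seems plausible --- but as written it is a sketch with essential steps missing, and it is not the argument the paper uses.
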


Theorem \ref{thm:var upper bnd} is a direct consequence of Theorem \ref{thm:derand expl}, because
axiom $\Fc(\gamma)$ holds with some $\gamma\in (0,1/2)$
for ``generic" $n\in S$, and $\Ec_{n}$ is asymptotically distributed for ``generic" $n\in S$ in the sense of Definition \ref{def:axiom F(gamma)}.
The proof of Theorem \ref{thm:derand expl} proceeds in three steps. First, we reduce proving \eqref{eq:double lim var vanish} uniformly
for $s>R/\sqrt{n}$ to proving for $s=R/\sqrt{n}$ only, via an analogue
of the Geometric-Integral Sandwich,
first introduced
in ~\cite{SodinSPB,NSNodal}, adapted to our settings. Next, we exploit the said spatial Gaussianity of $f_{n}(\cdot)$
in order to reduce the variance vanishing to the analogous result for the limit random field, which, by the equidistribution
assumption for $\Ec_{n}$ of Theorem \ref{thm:derand expl}, is the Gaussian random field of planar isotropic monochromatic waves
(it is ``Berry's Random Wave Model", uniquely defined by its covariance function $J_{0}(\|x\|)$).

It then remains to evaluate
the variance of the defect for the limit Gaussian random field restricted to a compact domain (e.g. the unit square), which, in spirit,
is already contained in ~\cite{MWvar} (and predicted by ~\cite{BGS}), where a rapid decay rate is asserted. This result is the only
use of the equidistribution assumption, and it should be not too
technically demanding to remove this assumption,
as long as some non-degeneracy for the limit Gaussian field is imposed (e.g. it cannot include the most degenerate ``Cilleruelo"
case), though it benefits us in no way if we are only interested in a density-$1$ sequences of energy levels.
Our main result \eqref{eq:double lim var vanish} is ineffective in terms of rate of decay for $\var_{\Tb^{2}}(Y_{f_{n},s})$,
as the convergence of the spatial distribution
of $f_{n}$ to the Gaussian is ineffective.

\subsection{Outline of constructing functions with non-vanishing defect variance (Theorem \ref{thm:nonflat large def})}

The prevailing symmetry obstruction, dictating that for the standard torus, the total defect of any Laplace eigenfunction
vanishes precisely does not persist for the non-standard tori. We exploit the hexagonal torus, so that to construct a single
Laplace eigenfunction with total defect non-vanishing, and scale it to obtain a sequence of eigenfunctions of arbitrarily high
energy, with defect growing on large fragments of the torus, above the Planck scale. We then mimic that situation
on the standard torus, by appealing to the Pell equation $x^{2}-3y^{2}=1$, yielding solutions approximating the
hexagonal toral eigenfunctions on the standard torus.

\subsection{Outline of the paper}

Section \ref{sec:ARW} is dedicated to giving the proofs for all the results concerning the defect of the Arithmetic Random Waves
(theorems \ref{thm:uppper bnd det}-\ref{thm:var def ARW low bnd}), appealing among the rest to Diophantine approximations.
In section \ref{sec:Bourgain de-randomization} Bourgain's de-randomization method will be invoked to prove Theorem \ref{thm:var upper bnd} dealing with the spatial defect variance vanishing for the flat functions.
Finally, a sequence of ``esoteric"
non-flat functions with spatial defect variance non-vanishing will be constructed in section \ref{sec:def non vanish},
by first constructing eigenfunctions with the analogous properties defined on the {\em hexagonal} torus (as opposed
to the standard torus).

\section{The defect of Arithmetic Random Waves: proof of theorems \ref{thm:uppper bnd det}-\ref{thm:var def ARW low bnd}}
\label{sec:ARW}

\subsection{Preliminary lemmas}

Let $f_{n}(\cdot)$ be the Arithmetic Random Wave corresponding to \eqref{eq:fn toral Laplace eig}, so that $f_{n}(\cdot)$ is a unit variance stationary Gaussian random field with covariance function
\eqref{eq:rn covar func def}. We first establish the precise expression \eqref{eq:var(defect) arcsin} for the variance of the defect $\Dpc_{n;s}$.

\begin{lemma}
	\label{lem:VarExact}
	We have
	\begin{equation*}
	\mathrm{Var}\left(\Dpc_{n;s}\right)=\frac{2}{\pi^{3}s^{4}}\int\limits _{B\left(s\right)\times B\left(s\right)}\arcsin\left(r_{n}\left(x-y\right)\right)dxdy.\label{eq:VarExactFormula}
	\end{equation*}
\end{lemma}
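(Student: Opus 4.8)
The plan is to compute $\var(\Dpc_{n;s})$ directly from the definition \eqref{eq:Dns def}, using the elementary identity $\E[H(f_n(y))] = 0$ already noted in the text, together with the joint two-point distribution of the Gaussian field $f_n$. First I would write
\[
\var(\Dpc_{n;s}) = \E\left[\left(\frac{1}{\pi s^2}\int_{B(s)} H(f_n(y))\,dy\right)^2\right] - \left(\E[\Dpc_{n;s}]\right)^2,
\]
and since $\E[\Dpc_{n;s}] = 0$ by \eqref{eq:E[D]==0}, the second term drops. Then I would expand the square as a double integral and invoke Fubini (justified because $H$ is bounded and $B(s)\times B(s)$ has finite measure, so there is no integrability subtlety) to exchange expectation and integration:
\[
\var(\Dpc_{n;s}) = \frac{1}{\pi^2 s^4}\int_{B(s)\times B(s)} \E[H(f_n(x))\,H(f_n(y))]\,dx\,dy.
\]

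Next I would evaluate the two-point correlator $\E[H(f_n(x))\,H(f_n(y))]$. For fixed $x \ne y$, the pair $(f_n(x), f_n(y))$ is a centered bivariate Gaussian vector; by unit-variance stationarity \eqref{eq:rn covar func def} each coordinate has variance $1$ and the covariance is $r_n(x-y)$. The required identity is the classical \emph{Price / Sheppard formula} for the expected product of signs (orthant probability) of a bivariate Gaussian: if $(X,Y)$ is centered Gaussian with unit variances and correlation $\rho$, then $\E[H(X)H(Y)] = \frac{2}{\pi}\arcsin\rho$. I would prove this in a self-contained way — either by differentiating $\E[H(X)H(Y)]$ with respect to $\rho$ and using the Gaussian density (Price's theorem), or by the geometric argument that $\prob(X>0, Y>0) = \frac{1}{4} + \frac{1}{2\pi}\arcsin\rho$ for the angle between the relevant half-planes. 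This gives $\E[H(f_n(x))H(f_n(y))] = \frac{2}{\pi}\arcsin(r_n(x-y))$, which is exactly \eqref{eq:arcsineFormula}. (The diagonal $x=y$, where $\rho=1$ and the formula still gives $\frac{2}{\pi}\cdot\frac{\pi}{2}=1 = \E[H(f_n(x))^2]$ since $f_n(x)\ne 0$ a.s., is a measure-zero set and causes no trouble.)

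Substituting this correlator back into the double integral yields
\[
\var(\Dpc_{n;s}) = \frac{1}{\pi^2 s^4}\int_{B(s)\times B(s)} \frac{2}{\pi}\arcsin(r_n(x-y))\,dx\,dy = \frac{2}{\pi^3 s^4}\int_{B(s)\times B(s)} \arcsin(r_n(x-y))\,dx\,dy,
\]
which is the claimed formula. The main technical point — really the only non-formal step — is the justification of the $\arcsin$ correlator identity; everything else is bounded-function Fubini and bookkeeping. A secondary point worth a sentence is that $f_n(y)=0$ only on a measure-zero set for a.e.\ realization (the nodal set has zero planar measure, e.g.\ because $f_n$ is real-analytic and not identically zero), so that the value $H=0$ never contributes and the integrand is genuinely $\pm 1$-valued; this makes the passage from the signed-area definition to the $H$-integral in \eqref{eq:Dns def} unambiguous.
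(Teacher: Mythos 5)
Your proof is correct and follows essentially the same route as the paper's: write $\var(\Dpc_{n;s}) = \E[\Dpc_{n;s}^2]$ using \eqref{eq:E[D]==0}, expand as a double integral, interchange expectation and integration, and apply the classical bivariate-Gaussian arcsine (Sheppard/Price) identity for $\E[H(X)H(Y)]$. The paper cites this identity as well-known (referring to Rice), while you sketch a self-contained proof and address the measure-zero sets $x=y$ and $\{f_n=0\}$ — harmless extra detail, but the argument is the same.
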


\begin{proof}
It is a well-known fact (see, e.g., \cite{Rice1,Rice2}) that every bivariate centred Gaussian random
vector $\left(X,Y\right)$ with covariance matrix
\[
\Sigma=\begin{pmatrix}1 & r\left(x,y\right)\\
r\left(x,y\right) & 1
\end{pmatrix}
\]
satisfies
\[
\mathbb{E}\left[H(X)\cdot H(Y)\right]=\frac{2}{\pi}\arcsin\left(r\left(x,y\right)\right).
\]
Hence, the identity \eqref{eq:arcsineFormula} follows letting $ X=f_n(x)$, $Y=f_n(y) $.
	
By the vanishing of the defect expectation \eqref{eq:E[D]==0}, we have
\begin{equation}
\label{eq:var_expansion}
\text{Var}\left(\Dpc_{n;s}\right) =\mathbb{E}\left[\Dpc_{n;s}^{2}\right]=\frac{1}{\left(\pi s^{2}\right)^{2}}\mathbb{E}\left[\int_{B\left(s\right)\times B\left(s\right)}H\left(f_{n}\left(x\right)\right)\cdot H\left(f_{n}\left(y\right)\right)dxdy\right].
\end{equation}
Changing the order of expectation and integration in \eqref{eq:var_expansion} together with the identity \eqref{eq:arcsineFormula} gives the desired formula for the defect variance.
\end{proof}
As we will see below, the defect variance $\var\left(\Dpc_{n;s}\right)$ is intimately related to the (restricted) moments of the covariance function $r_{n}(\cdot)$. The following
lemma gives a useful arithmetic formula for these moments.

\begin{lemma}
	\label{lem:CovFormulaLemma}Let $l\ge1$. We have
	\begin{equation}
\label{eq:CovFormula}
	\int\limits _{B\left(s\right)\times B\left(s\right)}r_{n}\left(x-y\right)^{l}dxdy=\left(\pi s^{2}\right)^{2}\frac{\#\Pc_{n}\left(l\right)}{N_{n}^{l}}+\frac{s^{2}}{N_{n}^{l}}\sum_{\left(\lambda^{1},\dots,\lambda^{l}\right)\notin\Pc_{n}\left(l\right)}\frac{J_{1}\left(2\pi s\left\Vert\lambda^{1}+\dots+\lambda^{l}\right\Vert\right)^{2}}{\left\Vert\lambda^{1}+\dots+\lambda^{l}\right\Vert^{2}}.
	\end{equation}
Moreover, if $l=2k+1$, then by \eqref{eq:odd corr empty} we have $\#\Pc_{n}(l)=0$, so that
\eqref{eq:CovFormula} reads \eqref{eq:part mom corr qcorr}.
\end{lemma}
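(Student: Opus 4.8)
The plan is to expand the $l$-th power of the covariance function $r_n$ using its defining Fourier expansion \eqref{eq:rn covar func def}, exchange summation and integration, and recognise the resulting spatial integrals as Bessel-function values coming from the Fourier transform of the indicator of a disc. First I would write $r_n(x-y) = \frac{1}{N_n}\sum_{\lambda\in\Ec_n}\cos(2\pi\langle\lambda,x-y\rangle)$, and it will be cleaner to re-express the cosine in exponential form, so that $r_n(x-y) = \frac{1}{2N_n}\sum_{\lambda\in\Ec_n}\big(e(\langle\lambda,x-y\rangle)+e(-\langle\lambda,x-y\rangle)\big)$; since $\Ec_n$ is symmetric under $\lambda\mapsto-\lambda$, this is just $\frac{1}{N_n}\sum_{\lambda\in\Ec_n}e(\langle\lambda,x-y\rangle)$ with the understanding that $\lambda$ ranges over all of $\Ec_n$. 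Raising to the $l$-th power and multiplying out,
\[
r_n(x-y)^l = \frac{1}{N_n^{l}}\sum_{(\lambda^1,\dots,\lambda^l)\in\Ec_n^{l}} e\big(\langle\lambda^1+\dots+\lambda^l,\,x-y\rangle\big).
\]

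Next I would integrate over $B(s)\times B(s)$ and exchange the (finite) sum with the integral. For a fixed lattice vector $\mu:=\lambda^1+\dots+\lambda^l$,
\[
\int_{B(s)\times B(s)} e(\langle\mu,x-y\rangle)\,dx\,dy = \Big|\int_{B(s)} e(\langle\mu,x\rangle)\,dx\Big|^2 = |\widehat{\id_{B(s)}}(\mu)|^2,
\]
and the standard computation of the Fourier transform of the indicator of a centred disc of radius $s$ gives, for $\mu\neq 0$,
\[
\int_{B(s)} e(\langle\mu,x\rangle)\,dx = \frac{s}{\|\mu\|}\,J_1\big(2\pi s\|\mu\|\big),
\]
so that the paired integral equals $s^2 J_1(2\pi s\|\mu\|)^2/\|\mu\|^2$; while for $\mu=0$ it is simply $\area(B(s))^2=(\pi s^2)^2$. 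The tuples with $\mu=0$ are, by definition, exactly those in $\Pc_n(l)$, and they contribute $(\pi s^2)^2\,\#\Pc_n(l)/N_n^{l}$; the remaining tuples contribute the Bessel sum, giving \eqref{eq:CovFormula}. The final sentence is immediate: when $l=2k+1$ is odd, \eqref{eq:odd corr empty} gives $\#\Pc_n(l)=0$ (after the standard reduction that $n$ is not divisible by $4$), so the first term drops and \eqref{eq:CovFormula} becomes precisely \eqref{eq:part mom corr qcorr}.

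The only genuine technical points are the Bessel-function evaluation of the disc Fourier transform — a classical identity, provable by writing the integral in polar coordinates as $\int_0^s\!\int_0^{2\pi} e(r\|\mu\|\cos\theta)\,d\theta\, r\,dr = 2\pi\int_0^s J_0(2\pi r\|\mu\|)\,r\,dr$ and using $\frac{d}{dt}(tJ_1(t)) = tJ_0(t)$ — and the justification of interchanging sum and integral, which is trivial since $\Ec_n^l$ is finite. I do not expect any real obstacle here; the lemma is essentially bookkeeping once the disc Fourier transform is in hand. One should be slightly careful that the $\mu=0$ term is correctly split off and that the "$\notin\Pc_n(l)$" bookkeeping matches the "$\mu\neq 0$" case, but this is routine.
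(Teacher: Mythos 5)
Your proposal is correct and follows essentially the same route as the paper: expand $r_n^l$ via the exponential form of the covariance, exchange sum and integral, split the tuples according to whether $\lambda^1+\dots+\lambda^l$ vanishes, and identify the disc Fourier transform $\int_{B(s)} e(\langle v,x\rangle)\,dx = sJ_1(2\pi s\|v\|)/\|v\|$. The only difference is that you also sketch the classical derivation of that Bessel identity, which the paper simply cites.
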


\begin{proof}
	Expanding the covariance function \eqref{eq:rn covar func def}, and recalling the definition \eqref{eq:P_n_def} of $ \Pc_{n}\left(l\right) $, we obtain
	\begin{align*}
	\int\limits _{B\left(s\right)\times B\left(s\right)}r_{n}\left(x-y\right)^{l}dxdy & =\frac{1}{N_{n}^{l}}\int\limits _{B\left(s\right)\times B\left(s\right)}\sum_{\lambda^{1},\dots,\lambda^{l}\in\Ec_{n}}e\left(\left\langle \lambda^{1}+\dots+\lambda^{l},x-y\right\rangle \right)dxdy\\
	& =\left(\pi s^{2}\right)^{2}\frac{\#\Pc_{n}\left(l\right)}{N_{n}^{l}} +\frac{1}{N_{n}^{l}}\sum_{\left(\lambda^{1},\dots,\lambda^{l}\right)\notin\Pc_{n}\left(l\right)}\left|\int_{B\left(s\right)}e\left(\left\langle \lambda^{1}+\dots+\lambda^{l},x\right\rangle \right)dx\right|^{2}.
	\end{align*}
	Formula (\ref{eq:CovFormula}) now follows from the identity
	\[
	\int\limits _{B\left(s\right)}e\left(\left\langle v,x\right\rangle \right)dx=\frac{sJ_{1}\left(2\pi s\left\Vert v \right\Vert\right)}{\left\Vert v \right\Vert}.
	\]
\end{proof}
\subsection{Upper bounds}

We now turn to prove the upper bounds for $\text{Var}\left(\Dpc_{n;s}\right)$.
We begin with the proof of  Theorem \ref{thm:uppper bnd det}.
\begin{proof}[Proof of Theorem \ref{thm:uppper bnd det}]
By Lemma \ref{lem:VarExact} and the elementary bound
$
\arcsin x
=
x + O(x^{2})
$
we have
\begin{equation}
  \label{eq:cubic-arcsin-bound}
\text{Var}\left(\Dpc_{n;s}\right)
=
\frac{2}{\pi^{3}s^{4}}
\left(
\int\limits_{B\left(s\right)\times  B\left(s\right)}
 r_{n}\left(x-y\right) dxdy
+
O \left(
\int\limits_{B\left(s\right)\times  B\left(s\right)}
 r_{n}\left(x-y\right)^{2} dxdy
\right)
\right)
\end{equation}
By Lemma \ref{lem:CovFormulaLemma},
$$
\int\limits_{B\left(s\right)\times  B\left(s\right)}
 r_{n}\left(x-y\right) dxdy
=
\frac{s^{2}}{N_{n}}
\sum_{\lambda \in \Ec_{n}}
\frac{J_{1}\left(2\pi s\left\Vert\lambda
    \right\Vert\right)^{2}}{\left\Vert\lambda \right\Vert^{2}}
$$
which, using the bound,

\begin{equation}
  \label{eq:bessel-one-bound}
J_{1}\left(x\right)
\ll\min\left\{  x^{-1/2},x\right\}
\end{equation}
(see formulas (9.1.7) and (9.2.1) in \cite{AS}) is
$$
\ll
\frac{s^{2}}{N_{n}} \frac{N_{n}}{s \left\Vert\lambda \right\Vert^{3}}
=
\frac{s^{4}}{ (s n^{1/2})^{3}} \le s^{4} n^{-3\epsilon}
$$ for all $ s>n^{-1/2+\epsilon} $.
We find that the contribution from the first  integral
on the r.h.s. of
\eqref{eq:cubic-arcsin-bound} is $\ll n^{-3 \epsilon}$.
	
We next evaluate the second integral on the r.h.s. of
\eqref{eq:cubic-arcsin-bound}. By Lemma \ref{lem:CovFormulaLemma},
	we have
	\begin{align}
	\int\limits _{B\left(s\right)\times B\left(s\right)}r_{n}\left(x-y\right)^{2}dxdy & =\frac{\pi^{2}s^{4}}{N_{n}}+\frac{s^{2}}{N_{n}^{2}}\sum_{\lambda^{1}\ne\lambda^{2}\in \Ec_{n}}\frac{J_{1}\left(2\pi s\left\Vert\lambda^{1}-\lambda^{2}\right\Vert\right)^{2}}{\left\Vert\lambda^{1}-\lambda^{2}\right\Vert^{2}},\label{eq:covSecondMoment}
	\end{align}
	where we used the fact that $\left(\lambda^{1},\lambda^{2}\right)\in\Pc_{n}\left(2\right)$
	if and only if $\lambda^{1}=-\lambda^{2},$ and in particular
	\[
	\#\Pc_{n}\left(2\right)=N_{n},
	\] and the symmetry $ \lambda \in \Ec_{n} \iff - \lambda \in
        \Ec_{n}. $
Again using the bound \eqref{eq:bessel-one-bound}
	we have
	\[
	\sum_{\lambda^{1}\ne\lambda^{2}\in\Ec_{n}}\frac{J_{1}\left(2\pi s\left\Vert\lambda^{1}-\lambda^{2}\right\Vert\right)^{2}}{\left\Vert\lambda^{1}-\lambda^{2}\right\Vert^{2}}\ll\sum_{\lambda^{1}\ne\lambda^{2}\in\Ec_{n}}\min\left\{ \frac{1}{s\cdot \left\Vert\lambda^{1}-\lambda^{2}\right\Vert^{3}},s^{2}\right\} ,
	\]
	and therefore, for any $ 0<\eta<1/2 $, we have
	\begin{equation}
	\sum_{\lambda^{1}\ne\lambda^{2}\in\Ec_{n}}\frac{J_{1}\left(2 \pi s\left\Vert\lambda^{1}-\lambda^{2}\right\Vert\right)^{2}}{\left\Vert\lambda^{1}-\lambda^{2}\right\Vert^{2}}\ll s^{2}\sum_{\substack{\lambda^1,\lambda^2\in\Ec_{n} \\ 0<\left\Vert\lambda^{1}-\lambda^{2}\right\Vert <n^{1/2-\eta}}}1+s^{-1}\sum_{\substack{\lambda^1,\lambda^2 \in \Ec_{n} \\ \left\Vert\lambda^{1}-\lambda^{2}\right\Vert\ge n^{1/2-\eta}}}\frac{1}{\left\Vert\lambda^{1}-\lambda^{2}\right\Vert^{3}}.\label{eq:LatticeSum}
	\end{equation}
	
We estimate the sums on the r.h.s. of \eqref{eq:LatticeSum} separately. The second sum on the r.h.s. of \eqref{eq:LatticeSum} can be bounded trivially:
\begin{equation}
\label{eq:trivial_bound}
s^{-1}\sum_{\substack{\lambda^1,\lambda^2 \in\Ec_{n} \\ \left\Vert\lambda^{1}-\lambda^{2}\right\Vert\ge n^{1/2-\eta}}}\frac{1}{\left\Vert\lambda^{1}-\lambda^{2}\right\Vert ^{3}}\le N_{n}^{2}s^{-1}\left(n^{1/2-\eta}\right)^{-3},
\end{equation}
whereas the first sum on the r.h.s. of \eqref{eq:trivial_bound} is the number of ``close-by pairs", bounded in \cite{GW} (see Theorem 1.8 there and the remark following it) by
\begin{equation}
\label{eq:GW}
\sum_{\substack{\lambda^1,\lambda^2\in\Ec_{n} \\ 0<\left\Vert\lambda^{1}-\lambda^{2}\right\Vert<n^{1/2-\eta }}}1 \ll N_{n}^{2-\tau \eta}
\end{equation} for any $ \tau < 4 $ and $ \eta>0 $ sufficiently small.

Substituting the bounds \eqref{eq:trivial_bound} and \eqref{eq:GW}
into \eqref{eq:LatticeSum}, and then back into
\eqref{eq:covSecondMoment}, we obtain the bound
\begin{equation}
\int\limits _{B\left(s\right)\times
  B\left(s\right)}r_{n}\left(x-y\right)^{2}dxdy\ll
s^{4}N_{n}^{-1}+sn^{-3/2+3\eta}+s^{4}N_{n}^{-\tau \eta}
=
s^{4}( N_{n}^{-1}+n^{3 \eta}/(sn^{1/2})^{3}+N_{n}^{-\tau \eta}).
\label{eq:secondMomentBound}
\end{equation}

Let $ 0<\delta < 4\epsilon $, and write $ \delta = \tau \eta $ where $ \tau < 4 $ and $ \eta < \epsilon $. Then \eqref{eq:secondMomentBound}, together with (\ref{eq:cubic-arcsin-bound}), the bound
\eqref{eq:Nn small}, and the previous bound on the first integral on the
r.h.s. of
(\ref{eq:cubic-arcsin-bound}), gives
$
\text{Var}\left(\Dpc_{n;s}\right)\ll N_{n}^{-\delta}
$
uniformly for all $s>n^{-1/2+\epsilon}$, completing the proof of Theorem \ref{thm:uppper bnd det}.

\end{proof}

We now prove Theorem \ref{thm:var upper bnd expl} which, as argued above, immediately implies Theorem \ref{thm:ARW superpol up bnd}.

\begin{proof}[Proof of Theorem \ref{thm:var upper bnd expl}]
	Recall that the Taylor series of $\arcsin\left(t\right)$ is given
	by \eqref{eq:arcsin Taylor}
	where
	\begin{equation}
	\label{eq:a_k}
	a_{k}=\frac{1}{2^{2k}} \binom{2k}{k} \frac{1}{2k+1},
	\end{equation}
	so that by Stirling's approximation $a_{k}\sim\frac{1}{2\sqrt{\pi}}k^{-3/2}$,
	and the convergence is uniform on $\left[-1,1\right].$ In particular
	for $K\ge0$, the Taylor polynomial of $\arcsin\left(t\right)$ is
	given by
	\begin{equation}
	\arcsin\left(t\right)=\sum_{k=0}^{K}a_{k}t^{2k+1}+O\left(\left|t\right|^{2K+3}\right).\label{eq:ArcsineTaylorPol}
	\end{equation}
	Substituting \eqref{eq:ArcsineTaylorPol} into \eqref{eq:var(defect) arcsin}
	yields
	\begin{equation}
	\text{Var}\left(\Dpc_{n;s}\right)  =\frac{2}{\pi^{3}s^{4}}\sum_{k=0}^{K}a_{k}\int\limits _{B\left(s\right)\times B\left(s\right)}r_{n}\left(x-y\right)^{2k+1}dxdy\label{eq:VarExpansion}
	 +O\left(\frac{1}{s^{4}}\int\limits _{B\left(s\right)\times B\left(s\right)}\left|r_{n}\left(x-y\right)\right|^{2K+3}dxdy\right).
	\end{equation}

Let $ 0\le k\le K $. Recall the identity \eqref{eq:part mom corr qcorr},
and that $ n\in S' $ where
the sequence $S' \subseteq S$ satisfies the axiom ${\Ac}\left(\delta\right)$ as in Definition \ref{def:A axiom qcorr}, so that the condition $\left(\lambda^{1},\dots,\lambda^{2k+1}\right)\notin\Pc_{n}\left(2k+1\right)$ in \eqref{eq:part mom corr qcorr}
implies that
\begin{equation}
\left\Vert\lambda^{1}+\dots+\lambda^{2k+1}\right\Vert\gg_{K}n^{1/2-\delta}.\label{eq:CorrelationLowerBd}
\end{equation}
	Substituting the bound \eqref{eq:CorrelationLowerBd} together with the bound \eqref{eq:bessel-one-bound}
	into \eqref{eq:part mom corr qcorr}, we get that
	\begin{align}
	\frac{1}{s^{4}}\int\limits _{B\left(s\right)\times B\left(s\right)}r_{n}\left(x-y\right)^{2k+1} \ll & \frac{1}{s^{3}N_{n}^{2k+1}}\sum_{\left(\lambda^{1},\dots,\lambda^{2k+1}\right)\notin\Pc_{n}\left(2k+1\right)}\frac{1}{\left\Vert\lambda^{1}+\dots+\lambda^{2k+1}\right\Vert^{3}}
	\ll_{K} s^{-3} n^{-3/2+3\delta}\nonumber\\
	\le& n^{-3\left(\epsilon-\delta\right)}\label{eq:odd_moment_bound}
	\end{align}
	uniformly for $  s>n^{-1/2+\epsilon}$. We can now use \eqref{eq:odd_moment_bound} to bound the summation in the variance formula (\ref{eq:VarExpansion}), which gives
	\begin{equation}
	\text{Var}\left(\Dpc_{n;s}\right)\ll_{K}n^{-3\left(\epsilon-\delta\right)}+ \frac{1}{s^{4}}\int\limits _{B\left(s\right)\times B\left(s\right)}\left|r_{n}\left(x-y\right)\right|^{2K+3}dxdy.\label{eq:MainMoments}
	\end{equation}
	
To control the $\left(2K+3\right)$'th moment of the absolute value
of $r_{n}(\cdot)$, we use the Cauchy-Schwarz inequality to discard the
absolute value:
\begin{equation}
\int\limits _{B\left(s\right)\times B\left(s\right)}\left|r_{n}\left(x-y\right)\right|^{2K+3}dxdy\le\pi s^{2}\left(\int_{B\left(s\right)\times B\left(s\right)}r_{n}\left(x-y\right)^{4K+6}dxdy\right)^{1/2}.\label{eq:OddMomentsIneq}
	\end{equation}
	By  Lemma \ref{lem:CovFormulaLemma}, we have
	\begin{align}
	\label{eq:4k+6_moment}
	\frac{1}{s^{4}}\int\limits _{B\left(s\right)\times B\left(s\right)}r_{n}\left(x-y\right)^{4K+6}dxdy=&\pi^{2}\frac{\#\Pc_{n}\left(4K+6\right)}{N_{n}^{4K+6}}\nonumber\\
	+&\frac{1}{s^{2}N_{n}^{4K+6}}\sum_{\left(\lambda^{1},\dots,\lambda^{4K+6}\right)\notin\Pc_{n}\left(4K+6\right)}\frac{J_{1}\left(2\pi s\left\Vert\lambda^{1}+\dots+\lambda^{4K+6}\right\Vert\right)^{2}}{\left\Vert\lambda^{1}+\dots+\lambda^{4K+6}\right\Vert^{2}}.
	\end{align}
	Since $ S' $ is correlation-tame (Definition \ref{def:B axiom corr}), we have $\#\Pc_{n}\left(4K+6\right)\ll_{K} N_{n}^{2K+3}$. This, together with
    \eqref{eq:4k+6_moment} and the estimate \eqref{eq:bessel-one-bound}, yields
	\begin{equation}
	\label{eq:4K+6}
	\frac{1}{s^{4}}\int\limits _{B\left(s\right)\times B\left(s\right)}r_{n}\left(x-y\right)^{4K+6}dxdy \ll_{K}\frac{1}{N_{n}^{2K+3}}+\frac{1}{s^{3}N_{n}^{4K+6}}\sum_{\left(\lambda^{1},\dots,\lambda^{4K+6}\right)\notin\Pc_{n}\left(4K+6\right)}\frac{1}{\left\Vert\lambda^{1}+\dots+\lambda^{4K+6}\right\Vert^{3}}.
	\end{equation}
	By the lower bound \eqref{eq:CorrelationLowerBd}, we have
	\begin{equation}
\frac{1}{s^{3}N_{n}^{4K+6}}\sum_{\left(\lambda^{1},\dots,\lambda^{4K+6}\right)\notin\Pc_{n}\left(4K+6\right)}\frac{1}{\left\Vert\lambda^{1}+\dots+\lambda^{4K+6}\right\Vert^{3}}\ll_{K} s^{-3} n^{-3/2+3\delta}\le n^{-3\left(\epsilon-\delta\right)}\label{eq:SmallTerms}
	\end{equation}
	uniformly for $ s>n^{-1/2+\epsilon} $. Substituting the bound \eqref{eq:SmallTerms} into \eqref{eq:4K+6} and bearing in mind \eqref{eq:Nn small}
	gives
	\begin{equation}
	\frac{1}{s^{4}}\int\limits _{B\left(s\right)\times B\left(s\right)}r_{n}\left(x-y\right)^{4K+6}dxdy\ll_{K}\frac{1}{N_{n}^{2K+3}}.\label{eq:BigTerm}
	\end{equation}
	
	Finally, we substitute the bound (\ref{eq:BigTerm}) into  (\ref{eq:OddMomentsIneq}), and then into \eqref{eq:MainMoments}. Using again  \eqref{eq:Nn small}, we get that \[ \text{Var}\left(\Dpc_{n;s}\right) \ll_{K}\frac{1}{N_{n}^{K+3/2}}.\] This completes the proof of Theorem \ref{thm:var upper bnd expl}, since $ K $ can be taken arbitrarily large.
\end{proof}

\subsection{Lower bound}

In order to prove the lower bound for $\text{Var}\left(\Dpc_{n;s}\right)$
stated in Theorem \ref{thm:var def ARW low bnd}, we will require a result
on Diophantine approximation by multiples of square roots of prime numbers.
 For $t\in \R$,
we denote $\langle t\rangle $ to be the distance of $t$ to the nearest integer number, and let
\begin{equation}
\label{eq:PK primes def}
P_K := \left\{ p\,\,\text{prime}:\,p\equiv1\,\left(\text{mod}\,4\right),\,p\le K\right\}
\end{equation}
denote the set of primes $ p\le K $ congruent to $ 1 $ modulo $ 4 $.

\begin{lemma}
\label{lem:Diop approx roots}
Let $K>1$ be an integer, and let $\epsilon>0$.
For every integer $q\ge1,$ we have
\begin{equation}
\label{eq:q sqrt(p)}
	\max\limits_{p\in P_K}\left< q\sqrt{p}\right> \gg_{K,\epsilon}q^{-\frac{2\log K}{K}-\epsilon}.
\end{equation}

\end{lemma}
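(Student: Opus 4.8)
The plan is to prove the simultaneous Diophantine approximation lower bound \eqref{eq:q sqrt(p)} by appealing to W.~Schmidt's theorem on approximation of algebraic numbers, which gives that for any fixed $\delta>0$ the set of algebraic irrationals $\alpha$ admits only finitely many rational approximations $p/q$ with $|\alpha - p/q| < q^{-2-\delta}$; more relevantly, Schmidt's Subspace Theorem furnishes a simultaneous version. First I would reformulate the claim: writing $m = \#P_K$ and enumerating $P_K = \{p_1,\dots,p_m\}$, the quantity $\max_{p\in P_K}\langle q\sqrt p\rangle$ is small exactly when there exist integers $a_1,\dots,a_m$ with each $|q\sqrt{p_i} - a_i|$ small, i.e.\ when the vector $(\sqrt{p_1},\dots,\sqrt{p_m})$ is simultaneously well approximated by $(a_1/q,\dots,a_m/q)$. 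Schmidt's theorem on simultaneous approximation of algebraic numbers (see \cite{Schmidt}) states that for algebraic numbers $\alpha_1,\dots,\alpha_m$ such that $1,\alpha_1,\dots,\alpha_m$ are linearly independent over $\Q$, and any $\delta>0$, there are only finitely many $q\ge 1$ for which $\max_i \langle q\alpha_i\rangle < q^{-1/m-\delta}$; equivalently, $\max_i\langle q\alpha_i\rangle \gg_{\alpha_1,\dots,\alpha_m,\delta} q^{-1/m-\delta}$ for all $q\ge1$.

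The second step is the purely algebraic verification that $1, \sqrt{p_1},\dots,\sqrt{p_m}$ are linearly independent over $\Q$ — indeed they span a degree-$2^m$ field $\Q(\sqrt{p_1},\dots,\sqrt{p_m})$ over $\Q$, so in particular the $m+1$ listed elements are $\Q$-linearly independent (a standard fact, provable by induction on $m$ using that $p_{m}$ is not a square in $\Q(\sqrt{p_1},\dots,\sqrt{p_{m-1}})$, which follows by a valuation/ramification argument at the prime $p_m$). Given this, Schmidt's theorem applies with $\alpha_i = \sqrt{p_i}$ and yields
\[
\max_{p\in P_K}\left\langle q\sqrt p\right\rangle \gg_{K,\delta} q^{-1/m - \delta}
\]
for every $q\ge 1$. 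It then remains to compare the exponent $1/m = 1/\#P_K$ with the target exponent $\tfrac{2\log K}{K}$ in \eqref{eq:q sqrt(p)}. By the prime number theorem in arithmetic progressions (or even just Chebyshev-type bounds combined with Dirichlet), $\#P_K \sim \tfrac{1}{2}\cdot\tfrac{K}{\log K}$, so $1/\#P_K \sim \tfrac{2\log K}{K}$; in particular $1/\#P_K \le \tfrac{2\log K}{K} + o\!\left(\tfrac{\log K}{K}\right)$, and for the purposes of the lemma, where an arbitrary $\epsilon>0$ slack is allowed in the exponent, one absorbs the discrepancy $\left(\tfrac{2\log K}{K} - \tfrac1m\right)$ together with Schmidt's $\delta$ into the $\epsilon$. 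Choosing $\delta = \delta(K,\epsilon)$ small enough finishes the proof.

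The main obstacle is not any one of these steps individually but making sure the quantitative bookkeeping is honest: Schmidt's theorem is ineffective (the implied constant $\gg_{K,\delta}$ is not explicit), which is acceptable here since the lemma only claims an implied constant depending on $K$ and $\epsilon$, but one must be careful that the constant is allowed to depend on $K$ — it is, so this is fine. The one genuine subtlety is the passage from "$\max_i\langle q\alpha_i\rangle$ small" to an actual rational approximation vector: one needs $q\langle q\alpha_i\rangle$ controlled uniformly, i.e.\ to note that $\langle q\sqrt p\rangle = \min_{a\in\Z}|q\sqrt p - a|$ and that the relevant $a$ satisfies $|a|\le q\sqrt p + 1$, so that the standard formulation of Schmidt's simultaneous-approximation theorem (phrased in terms of $|q\alpha_i - p_i|$ with $\gcd$ or height conditions on $(q,p_1,\dots,p_m)$) applies verbatim. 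I would also double-check the direction of the inequality in the prime-counting estimate: we need $1/\#P_K$ to be \emph{at most} $\tfrac{2\log K}{K} + \epsilon$, which holds once $\#P_K \ge \tfrac{K}{2\log K}(1-o(1))$, again a consequence of the prime number theorem for the progression $1\bmod 4$. With these points in place the proof is routine modulo citing \cite{Schmidt}.
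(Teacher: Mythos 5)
Your proposal is correct and takes essentially the same route as the paper: invoke Schmidt's simultaneous approximation theorem with $\alpha_i = \sqrt{p_i}$ and compare $1/\#P_K$ to $\frac{2\log K}{K}$ via the asymptotic count of primes $\le K$ congruent to $1$ modulo $4$. The only cosmetic difference is that you prove the $\Q$-linear independence of $1,\sqrt{p_1},\dots,\sqrt{p_m}$ directly via a tower-of-quadratic-extensions argument, whereas the paper simply cites Besicovitch's theorem, which is the same fact.
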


The proof of Lemma \ref{lem:Diop approx roots} will invoke two classical results from the theory of Diophantine approximation:
Besicovich's theorem on the linear independence over $\Q$ of the square roots of distinct square-free positive integers, and Schmidt's theorem on simultaneous Diophantine approximation, that, for the reader's convenience, we cite next, in the form used subsequently.

\begin{theorem}[Besicovitch \cite{Bes}]
	\label{thm:Besicovitch}Let $q_{1},\dots,q_{m}$ be distinct squarefree
	positive integers. The numbers $\sqrt{q_{1}},\dots,\sqrt{q_{m}}$
	are linearly independent over $\Q$.
\end{theorem}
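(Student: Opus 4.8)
The plan is to deduce Besicovitch's statement from the degree computation of the multiquadratic field generated by the relevant square roots. Given distinct squarefree positive integers $q_{1},\dots,q_{m}$ (the case $m=1$ being trivial), let $p_{1},\dots,p_{n}$ enumerate all the primes dividing $q_{1}\cdots q_{m}$, and set $K_{n}=\Q(\sqrt{p_{1}},\dots,\sqrt{p_{n}})$. Each $q_{i}$ is the product of the primes in a uniquely determined subset $T_{i}\subseteq\{p_{1},\dots,p_{n}\}$, and the $q_{i}$ being pairwise distinct forces the $T_{i}$ to be pairwise distinct. Since adjoining each successive $\sqrt{p_{j}}$ at most doubles a spanning set (because $K_{j}=K_{j-1}+K_{j-1}\sqrt{p_{j}}$ as $\Q$-vector spaces), the field $K_{n}$ is $\Q$-spanned by the $2^{n}$ elements $\bigl\{\sqrt{\prod_{p\in T}p}:\ T\subseteq\{p_{1},\dots,p_{n}\}\bigr\}$. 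Hence it suffices to prove $[K_{n}:\Q]=2^{n}$: then this spanning set is actually a $\Q$-basis of $K_{n}$, and $\sqrt{q_{1}},\dots,\sqrt{q_{m}}$ are $m$ distinct basis vectors, so they are linearly independent over $\Q$.

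It remains to show $[K_{n}:\Q]=2^{n}$, which I would establish by induction on $n$, the cases $n\le1$ being immediate since no prime is a perfect square. For the inductive step, assume $[K_{n-1}:\Q]=2^{n-1}$. Then $K_{n-1}$, being the splitting field over $\Q$ of the separable polynomial $\prod_{j<n}(x^{2}-p_{j})$, is Galois over $\Q$; the map $\sigma\mapsto(\sigma(\sqrt{p_{j}})/\sqrt{p_{j}})_{j<n}$ is an injective homomorphism $\mathrm{Gal}(K_{n-1}/\Q)\hookrightarrow(\Z/2\Z)^{n-1}$ between groups of the same order $2^{n-1}$, hence an isomorphism. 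In particular, for each $j<n$ there is $\sigma_{j}\in\mathrm{Gal}(K_{n-1}/\Q)$ with $\sigma_{j}(\sqrt{p_{j}})=-\sqrt{p_{j}}$ and $\sigma_{j}(\sqrt{p_{i}})=\sqrt{p_{i}}$ for $i\ne j$, and these $\sigma_{j}$ generate the group. Now it is enough to show $\sqrt{p_{n}}\notin K_{n-1}$, for then $[K_{n}:K_{n-1}]=2$ and the degrees multiply to $2^{n}$ (and $K_{n}/\Q$ is again Galois with group $(\Z/2\Z)^{n}$ by the same argument, so the induction closes up).

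To prove $\sqrt{p_{n}}\notin K_{n-1}$, suppose for contradiction $\sqrt{p_{n}}\in K_{n-1}$. For each $j<n$, $\sigma_{j}(\sqrt{p_{n}})^{2}=\sigma_{j}(p_{n})=p_{n}$, so $\sigma_{j}(\sqrt{p_{n}})=\pm\sqrt{p_{n}}$; let $J=\{j<n:\ \sigma_{j}(\sqrt{p_{n}})=-\sqrt{p_{n}}\}$ and put $d=\prod_{j\in J}p_{j}$ (with $d=1$ if $J=\varnothing$). Then $\sqrt{d}\in K_{n-1}$, and $\sigma_{j}(\sqrt{d})=-\sqrt{d}$ precisely for $j\in J$, so that $\beta:=\sqrt{p_{n}}\cdot\sqrt{d}=\sqrt{p_{n}d}$ satisfies $\sigma_{j}(\beta)=\beta$ for every $j<n$. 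As the $\sigma_{j}$ generate $\mathrm{Gal}(K_{n-1}/\Q)$, the element $\beta$ lies in the fixed field of the full Galois group, i.e.\ $\beta\in\Q$; but $p_{n}d$ is squarefree (since $p_{n}\nmid d$) and $>1$, so $\sqrt{p_{n}d}$ is irrational — a contradiction. This completes the induction.

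The main obstacle is exactly the final step, the non-membership $\sqrt{p_{n}}\notin K_{n-1}$; everything else is bookkeeping about spanning sets and the Galois group of a multiquadratic field. If one prefers to keep the proof free of Galois theory (closer to Besicovitch's original), one can run the same induction but prove $\sqrt{p_{n}}\notin K_{n-1}$ by a direct congruence/valuation computation on the coordinates of $\sqrt{p_{n}}$ in the basis $\{\sqrt{\prod_{p\in T}p}\}$ of $K_{n-1}$; that route is more computational, and I would only invoke it if elementarity is desired.
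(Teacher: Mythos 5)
Your argument is correct and complete. Note, however, that the paper does not prove this statement at all: Theorem \ref{thm:Besicovitch} is quoted verbatim as a classical result with a citation to Besicovitch's 1940 paper, and is used as a black box in the proof of Lemma \ref{lem:Diop approx roots}. So there is no in-paper proof to compare against; what you have supplied is a self-contained proof of the cited theorem. Your route is the standard Galois-theoretic one: reduce to showing $[\Q(\sqrt{p_{1}},\dots,\sqrt{p_{n}}):\Q]=2^{n}$ for distinct primes $p_{j}$, identify $\mathrm{Gal}(K_{n-1}/\Q)\cong(\Z/2\Z)^{n-1}$ via the sign characters, and derive a contradiction from $\sqrt{p_{n}}\in K_{n-1}$ by producing an element $\sqrt{p_{n}d}$ fixed by the whole Galois group with $p_{n}d>1$ squarefree. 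All the individual steps check out: the $2^{n}$ products $\sqrt{\prod_{p\in T}p}$ do span $K_{n}$ by the doubling argument, they become a basis once the degree count is established, the map $\sigma\mapsto(\sigma(\sqrt{p_{j}})/\sqrt{p_{j}})_{j}$ is an injective homomorphism (injective because the $\sqrt{p_{j}}$ generate $K_{n-1}$), and the sign bookkeeping for $\sigma_{j}(\sqrt{d})$ and for $\beta=\sqrt{p_{n}d}$ is right. This differs from Besicovitch's original argument, which is an elementary induction avoiding Galois theory (closer to the alternative you sketch in your last paragraph), but the Galois-theoretic proof is shorter and entirely rigorous, so nothing further is needed.
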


\begin{theorem}[Schmidt \cite{Schmidt}]
	\label{thm:Schmidt}Let $\alpha_{1},\dots,\alpha_{m}$ be real algebraic
	numbers so that $1,\alpha_{1},\dots,\alpha_{m}$ are linearly
	independent over the rationals. Then for every $\epsilon>0$ and for
	every integer $q\ge1$, we have
	\[
	\max_{1\le i\le m}\left < q\alpha_{i}\right > \gg q^{-1/m-\epsilon}
	\]
	where the implied constant depends on $\epsilon$ and on $\alpha_{1},\dots,\alpha_{m}$.
\end{theorem}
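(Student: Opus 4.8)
The plan is to derive \eqref{eq:q sqrt(p)} directly from the two classical theorems cited above, Besicovitch's (Theorem \ref{thm:Besicovitch}) and Schmidt's (Theorem \ref{thm:Schmidt}), the only additional ingredient being a lower bound for $m:=|P_K|$ coming from the distribution of primes in arithmetic progressions. We may assume $P_K\neq\varnothing$, since otherwise the left-hand side of \eqref{eq:q sqrt(p)} is an empty maximum and there is nothing to prove; write $P_K=\{p_1,\dots,p_m\}$ and set $\alpha_i:=\sqrt{p_i}$ for $1\le i\le m$. The first step is to verify the hypothesis of Schmidt's theorem: the integers $1,p_1,\dots,p_m$ are pairwise distinct and squarefree, so Besicovitch's theorem applied to this list shows that $\sqrt 1=1,\sqrt{p_1},\dots,\sqrt{p_m}$ are linearly independent over $\Q$; in particular $1,\alpha_1,\dots,\alpha_m$ are $\Q$-linearly independent, and each $\alpha_i$ is a real algebraic number, so Theorem \ref{thm:Schmidt} applies.

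Applying Schmidt's theorem gives, for every $\eta>0$ and every integer $q\ge 1$,
\[
\max_{p\in P_K}\langle q\sqrt p\rangle=\max_{1\le i\le m}\langle q\alpha_i\rangle\;\gg_{K,\eta}\;q^{-1/m-\eta},
\]
the implied constant being allowed to depend on $\alpha_1,\dots,\alpha_m$, hence on $K$. It then remains to compare the exponent $1/m$ with $2\log K/K$. By the prime number theorem in arithmetic progressions, $m=|P_K|=\tfrac12\operatorname{li}(K)+o\!\big(\tfrac{K}{(\log K)^{2}}\big)=\tfrac{K}{2\log K}+\tfrac{K}{2(\log K)^{2}}+o\!\big(\tfrac{K}{(\log K)^{2}}\big)$, so there is an absolute constant $K_0$ with $m\ge \tfrac{K}{2\log K}$, equivalently $\tfrac1m\le\tfrac{2\log K}{K}$, for all $K\ge K_0$. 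For such $K$, taking $\eta=\epsilon$ in the display above yields $\max_{p\in P_K}\langle q\sqrt p\rangle\gg_{K,\epsilon}q^{-2\log K/K-\epsilon}$, which is \eqref{eq:q sqrt(p)}. For the finitely many remaining $K$ (those with $P_K\neq\varnothing$ but $K<K_0$) the same two theorems apply verbatim; there $m$ is a fixed positive integer, the gap $\tfrac1m-\tfrac{2\log K}{K}$ is a constant depending only on $K$, and it is absorbed by the freedom in the parameter $\epsilon$ together with the $K$-dependent implied constant.

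There is no substantial obstacle once the two invoked theorems are granted; the one point deserving care is matching their hypotheses. Besicovitch's theorem produces linear independence over $\Q$ of $\{\sqrt{q_i}\}$ for distinct squarefree $q_i$, whereas Schmidt's theorem requires the constant $1$ itself to belong to the independent family, so one must feed Besicovitch the squarefree integer $q_0=1$ alongside the primes of $P_K$ rather than those primes alone. The quantitative half is then only the elementary fact that the number of primes $p\le K$ with $p\equiv 1\pmod 4$ eventually exceeds $K/(2\log K)$, which follows from any effective form of the prime number theorem for the progression $1\bmod 4$ — indeed a Chebyshev-type lower bound of this strength already suffices. (One could instead avoid Schmidt's theorem and bound $\langle q\sqrt p\rangle$ via Khintchine's transference principle together with a Liouville-type bound, as indicated in \S\ref{sec:outline proofs}, but this yields a worse exponent and we do not pursue it.)
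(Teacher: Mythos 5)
Your write-up does not address the statement you were asked to prove. The statement is Theorem \ref{thm:Schmidt} itself --- Schmidt's theorem on simultaneous approximation of algebraic numbers, which the paper imports verbatim from \cite{Schmidt} and does not prove. What you have written is instead a proof of Lemma \ref{lem:Diop approx roots}, i.e.\ of the bound \eqref{eq:q sqrt(p)}, and your argument \emph{invokes} Theorem \ref{thm:Schmidt} as a black box in its key step. As a proof of the theorem under consideration this is circular: you assume exactly the statement to be established. No argument toward Schmidt's theorem is offered, and none of the machinery that its actual proof requires (it is a deep result, occupying Schmidt's 1970 Acta paper and belonging to the circle of ideas around his subspace theorem) appears anywhere in your text. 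There is no way to repair this by local edits; the target was simply misidentified.

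For what it is worth, read as a proof of Lemma \ref{lem:Diop approx roots} your argument is correct and coincides with the paper's own two-line deduction: one applies Theorem \ref{thm:Besicovitch} to the set $\{1\}\cup\{\sqrt{p}:p\in P_K\}$ (you are right that the constant $1$ must be included in the family fed to Besicovitch, since Schmidt's hypothesis is linear independence of $1,\alpha_{1},\dots,\alpha_{m}$), then applies Theorem \ref{thm:Schmidt} with $m=\#P_K$, and finally compares the exponent $1/m$ with $2\log K/K$ using $\#P_K\sim K/(2\log K)$, the discrepancy being absorbed into the $\epsilon$ and the $K$-dependent implied constant. Your treatment of the exponent comparison is, if anything, slightly more careful than the paper's. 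But none of this bears on the statement actually posed.
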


\begin{proof}[Proof of Lemma \ref{lem:Diop approx roots}]
	By Theorem \ref{thm:Besicovitch}, the elements of the set
	$\left\{ 1\right\} \cup \left\{\sqrt{p}: p \in P_K \right\}$
	are linearly independent over the rationals. Since $\#P_k \sim \frac{K}{2\log K}$ as $ K\to\infty $, the bound \eqref{eq:q sqrt(p)} follows from Theorem \ref{thm:Schmidt}.
\end{proof}

We are finally in a position to prove Theorem \ref{thm:var def ARW low bnd}.

\begin{proof}[Proof of Theorem \ref{thm:var def ARW low bnd}]
	Recall that substituting the Taylor series of the arcsine function (\ref{eq:arcsin Taylor})
	in (\ref{eq:var(defect) arcsin}) gives formula \eqref{eq:var(defect) arcsin taylor}:
	\[
	\text{Var}\left(\Dpc_{n;s}\right)=\frac{2}{\pi^{3}s^{4}}\sum_{k=0}^{\infty}a_{k}\int\limits _{B\left(s\right)\times B\left(s\right)}r_{n}\left(x-y\right)^{2k+1}dxdy,
	\]
	where $ a_k $ are given by \eqref{eq:a_k}, and in particular $a_{k}>0$, $a_{0}=1$, and $a_{k}\sim\frac{1}{2\sqrt{\pi}}k^{-3/2}$.
	Hence, Lemma \ref{lem:CovFormulaLemma} yields
	\begin{equation} \text{Var}\left(\Dpc_{n;s}\right)=\frac{2}{\pi^{3}s^{2}}\sum_{k=0}^{\infty}\frac{a_{k}}{N_{n}^{2k+1}}\sum_{\left(\lambda^{1},\dots,\lambda^{2k+1}\right)\notin\Pc_{n}\left(2k+1\right)}\frac{J_{1}\left(2\pi s\left\Vert\lambda^{1}+\dots+\lambda^{2k+1}\right\Vert\right)^{2}}{\left\Vert\lambda^{1}+\dots+\lambda^{2k+1}\right\Vert^{2}}.\label{eq:VarFormula}
	\end{equation}
	By the positivity of the coefficients $a_{k},$ we may obtain a lower bound by discarding all
terms in \eqref{eq:VarFormula} but one with $k = 0$:
	\begin{equation}
	\text{Var}\left(\Dpc_{n;s}\right)\ge\frac{2}{\pi^{3}}\frac{J_{1}\left(2\pi T\right)^{2}}{T^{2}}.\label{eq:VargeJ1}
	\end{equation}
	
	Recall that for large $z$, we have \cite[formula (9.2.1)]{AS}
	\begin{equation}
	J_{1}\left(z\right)=\sqrt{\frac{2}{\pi z}}\cos\left(z-\frac{3}{4}\pi\right)+O\left(\frac{1}{z^{3/2}}\right),\label{eq:BesselAysmp}
	\end{equation}
	so that
	\begin{equation}
	J_{1}\left(2\pi T\right)=\pi^{-1}T^{-1/2}\cos\left(\left(2T-\frac{3}{4}\right)\pi\right)+O\left(T^{-3/2}\right).\label{eq:J_1(T)}
	\end{equation}
	We write
	\begin{equation}
	2T-\frac{1}{4}=t+\rho\label{eq:TApprox}
	\end{equation}
	where $t=t\left(T\right)\in\mathbb{Z}$ and $\left|\rho\right|\le1/2$,
	so that
	\begin{equation}
	\left|\cos\left(\left(2T-\frac{3}{4}\right)\pi\right)\right|=\left|\sin\left(\rho\pi\right)\right|\gg\left|\rho\right|.\label{eq:cosLowerBound}
	\end{equation}
The $ t $th zero $j_{1,t}$ of $ J_1 $ satisfies \[j_{1,t} = \left(t+\frac{1}{4}\right)\pi + O(1/t)\]
(see, e.g., \cite[formula (9.5.12)]{AS}), so that
\begin{equation}
\label{eq:distance_to_bessel_zero}
|2\pi T - j_{1,t}|=\pi |\rho| + O(1/T).
\end{equation}
In particular, if $ 2\pi T $ is bounded away from $ j_{1,t} $, then \eqref{eq:distance_to_bessel_zero} yields $ \rho \gg 1 $, so that (\ref{eq:J_1(T)}) and (\ref{eq:cosLowerBound}) give $J_{1}\left(2\pi T\right)^{2}\gg T^{-1}$,
which together with (\ref{eq:VargeJ1}) yields
	\[
	\text{Var}\left(\Dpc_{n;s}\right)\gg T^{-3}.
	\]

Given $ \delta > 0  $, we consider two cases, whether $|\rho| \ge T^{-\delta/2}$ or $|\rho|<T^{-\delta/2}$, aiming  at proving \eqref{eq:var>>1/N^A T^3+del} with the same $ \delta$. If $|\rho| \ge T^{-\delta/2},$ then by \eqref{eq:J_1(T)} and \eqref{eq:cosLowerBound} it follows that $J_{1}\left(2\pi T\right)^{2}\gg T^{-1-\delta}$ so that
	\eqref{eq:VargeJ1} gives
	\begin{equation}
	\label{eq:lower_bound_easy_case}
	\text{Var}\left(\Dpc_{n;s}\right)\gg T^{-3-\delta},
	\end{equation}
	stronger than \eqref{eq:var>>1/N^A T^3+del} with $ A>0 $ arbitrary. Assume otherwise that $|\rho|<T^{-\delta/2}$, and observe that all odd numbers $m\in S$ are expressible as
	\begin{equation}
	\label{eq:m_sum_of_squares}
	m=a^{2}+\left(2k+1-a\right)^{2}
	\end{equation}
	for some $k\ge0$ and $1\le a\le2k+1$.
	Consider all tuples of the form
	\begin{equation}
	\label{eq:special_tuples}
	\left(\lambda^{1},\dots,\lambda^{2k+1}\right)=\left(\stackrel{a\,\text{times}}{\overbrace{\lambda,\dots,\lambda}},\stackrel{2k+1-a\,\text{times}}{\overbrace{i\lambda,\dots,i\lambda}}\right).
	\end{equation}
	The number of such tuples is precisely $N_{n},$ and they satisfy
	\begin{equation}
	\label{eq:special_tuples_sum_norm}
	\left\Vert\lambda^{1}+\dots+\lambda^{2k+1}\right\Vert=\sqrt{nm}.
	\end{equation}
 	By the inequality $\alpha^{2}+\beta^{2}\ge\frac{\left(\alpha+\beta\right)^{2}}{2}$ applied to \eqref{eq:m_sum_of_squares},
	we get that
	$
	m\ge\frac{\left(2k+1\right)^{2}}{2}\ge2k^{2}
	$
	so that
	\begin{equation}
	\label{eq:k_to_m_ineq}
		k\le\sqrt{m/2}.
	\end{equation}
	By the positivity of all the terms in \eqref{eq:VarFormula}, we can bound $ 	\text{Var}\left(\Dpc_{n;s}\right) $ from below by restricting the inner summation in \eqref{eq:VarFormula} to tuples of the form \eqref{eq:special_tuples}. This together with \eqref{eq:special_tuples_sum_norm} and \eqref{eq:k_to_m_ineq} (note that $ a_k  \gg m^{-3/4}$) gives the lower bound
	\begin{equation}
	\text{Var}\left(\Dpc_{n;s}\right)\gg\frac{1}{T^{2}}\sum_{\begin{subarray}{c}
		m\in S\\
		m\,\text{odd}
		\end{subarray}}\frac{1}{N_{n}^{\sqrt{2m}}}\frac{J_{1}\left(2\pi T\sqrt{m}\right)^{2}}{m^{7/4}}.\label{eq:VarLowerBoundS}
	\end{equation}
	
	Let $K>1$ be a sufficiently large parameter to be chosen later, and restrict the summation
	in \eqref{eq:VarLowerBoundS} to primes $p\in P_{K}$ in \eqref{eq:PK primes def} (these are the primes
	$p\equiv1\,\left(\text{mod}\,4\right)$ which are less or equal to $ K $). Then
	\begin{align}
	\text{Var}\left(\Dpc_{n;s}\right) & \gg_{K}\frac{1}{N_{n}^{\sqrt{2K}}T^{2}}\sum_{p\in P_K}
 J_{1}\left(2\pi T\sqrt{p}\right)^{2}.\label{eq:VarLowerBoundNearBesselZeros}
	\end{align}
	By (\ref{eq:BesselAysmp}), we have
	\begin{align}
	J_{1}\left(2\pi T\sqrt{p}\right) & =\pi^{-1}T^{-1/2}p^{-1/4}\cos\left(\left(2T\sqrt{p}-\frac{3}{4}\right)\pi\right)+O\left(T^{-3/2}\right).\label{eq:BesselAsymp2k+1}
	\end{align}
	We write
	\[
	2T\sqrt{p}-\frac{1}{4}=l+\eta
	\]
	where $l=l\left(T,p\right)\in\mathbb{Z}$ and $\left|\eta\right|\le1/2$.
	Then by (\ref{eq:TApprox}),
	\begin{align}
	\left|\cos\left(\left(2T\sqrt{p}-\frac{3}{4}\right)\pi\right)\right| & =\left|\sin\left(\eta\pi\right)\right|\gg\left|\eta\right|=\left|2T\sqrt{p}-l-\frac{1}{4}\right|=\left|\left(t+\frac{1}{4}+\rho\right)\sqrt{p}-l-\frac{1}{4}\right|\nonumber\\
	& \gg\left|\left(4t+1\right)\sqrt{p}-\left(4l+1\right)\right|-4\left|\rho\right|\sqrt{p}.\label{eq:cos_lower_bound_sqrtp}
	\end{align}
	By Lemma \ref{lem:Diop approx roots}, there exists $ p_0 \in P_K $ such that
	\begin{equation}
	\label{eq:schmidt_sqrtp}
	\left|\left(4t+1\right)\sqrt{p_0}-\left(4l+1\right)\right|\gg_{K,\epsilon}t^{-2\log K/K-\epsilon}.
	\end{equation}
	Since $|\rho|<T^{-\delta/2},$ by choosing $K=K\left(\delta\right)$
	sufficiently large so that $ 2\log K/K < \delta / 4$
(keeping in mind that $t = 2T + O(1)$), we conclude upon substituting the bound \eqref{eq:schmidt_sqrtp} in \eqref{eq:cos_lower_bound_sqrtp} that
	\[
	\left|\cos\left(\left(2T\sqrt{p_0}-\frac{3}{4}\right)\pi\right)\right|\gg_{\delta}T^{-\delta/4},
	\]
	which by \eqref{eq:BesselAsymp2k+1} implies
	\begin{equation*}
	J_{1}\left(2\pi T\sqrt{p_0}\right)^{2}\gg_{\delta}T^{-1-\delta/2}.
	\end{equation*}
	This, together with \eqref{eq:VarLowerBoundNearBesselZeros} gives
	\begin{equation}
	\label{eq:lower_bound_hard_case}
		\text{Var}\left(\Dpc_{n;s}\right)  \gg_{\delta }\frac{1}{N_{n}^{\sqrt{2K}}T^{2}}
	J_{1}\left(2\pi T\sqrt{p_0}\right)^{2}\gg_{\delta} \frac{1}{N_n^{\sqrt{2K}} T^{3+\delta/2}}.
	\end{equation}
To summarize, the bounds \eqref{eq:lower_bound_easy_case} and \eqref{eq:lower_bound_hard_case} imply that, in either case,  \eqref{eq:var>>1/N^A T^3+del} holds with $A=$$\sqrt{2K}$, which is the statement
of Theorem \ref{thm:var def ARW low bnd}.
\end{proof}

\section{Spatial defect distribution: proof of Theorem \ref{thm:var upper bnd}}
\label{sec:Bourgain de-randomization}

Recall that Theorem \ref{thm:var upper bnd} follows at once from its more explicit variant, Theorem \ref{thm:derand expl}, whose proof is the ultimate goal of this section.

\subsection{Proof of Theorem \ref{thm:derand expl}}

The following proposition is seemingly weaker, or less general, compared to Theorem \ref{thm:derand expl},
as it only allows for radii $s=\frac{R}{\sqrt{n}}$ with $R\rightarrow\infty$ growing {\em slowly}, instead of a uniform statement for {\em all} $s>R/\sqrt{n}$ as in
\eqref{eq:double lim var vanish}.
However, we will be able to infer the more general result, using the elegant Integral-Geometric Sandwich in Proposition \ref{prop:sandwich IG}
below, inspired to high extent by its counterpart introduced by Nazarov-Sodin ~\cite[Lemma $1$]{SodinSPB} for the sake of counting the number of nodal components (see also
~\cite[Lemma $3.7$]{SaWi} and ~\cite[Lemma $1$]{NSNodal}). It seems a priori {\em counter-intuitive} that it is ``easier" to first establish the spatial defect variance vanishing for smaller radii
than bigger ones.
Our explanation of the said {\em surprise} is that the asymptotic Gaussianity w.r.t. the spatial variable holds at Planck scale only
(or logarithmically above it ~\cite{SartoriLog}), rather than at {\em all} scales above it.

\begin{proposition}[Planck scale spatial defect distribution]
\label{prop:derand Planck scale}
Let $S''\subseteq S$ be any sequence of energy levels satisfying the assumptions of Theorem \ref{thm:derand expl}. Then for every $\epsilon>0$
there exists $R_{0}=R_{0}(\epsilon)>0$ sufficiently large so that for all $R>R_{0}$ there exists a number $n_{0}=n_{0}(R,\epsilon)$ sufficiently large so that for all $n>n_{0}$, the inequality
\begin{equation*}
\var_{\Tb^{2}}(Y_{f_{n},R/\sqrt{n}}) < \epsilon
\end{equation*}
holds uniformly for $f_{n}\in \Bc_{n}$.
Equivalently,
\begin{equation}
\label{eq:var vanish weak}
\lim\limits_{R\rightarrow\infty}\limsup\limits_{\substack{n\rightarrow\infty \\ n\in S''}}
\sup\limits_{f_{n}\in\Bc_{n}}\var_{\Tb^{2}}(Y_{f_{n},R/\sqrt{n}}) = 0.
\end{equation}

\end{proposition}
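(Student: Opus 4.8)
The plan is to deduce Proposition~\ref{prop:derand Planck scale} from Bourgain's de-randomisation, by writing $Y_{f_{n},R/\sqrt{n}}$ as a bounded, almost surely continuous functional of a rescaled random field and passing to the Berry limit. First, since $H(cy)=H(y)$ for every $c>0$, the substitution $y=x+z/\sqrt{n}$ in \eqref{eq:Yfns def spac def} gives, for $s=R/\sqrt{n}$ (and $n>4R^{2}$, so that the substitution is legitimate),
\[
Y_{f_{n},R/\sqrt{n}}(x)=F_{R}(g_{n}^{x}),\qquad g_{n}^{x}(z):=\tfrac{1}{\sqrt{N_{n}}}f_{n}(x+z/\sqrt{n}),\qquad F_{R}(g):=\frac{1}{\pi R^{2}}\int\limits_{B(R)}H(g(z))\,dz,
\]
so that $\var_{\Tb^{2}}(Y_{f_{n},R/\sqrt{n}})$ is the variance of $F_{R}$ evaluated at the random field $z\mapsto g_{n}^{x}(z)$, $z\in B(R)$, with $x$ uniform on $\Tb^{2}$; here $g_{n}^{x}$ has unit spatial second moment since $\sum_{\lambda\in\Ec_{n}}|a_{\lambda}|^{2}=N_{n}$.

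Next I would invoke the de-randomisation of \cite{Bo2013,BW}: under the hypotheses of Theorem~\ref{thm:derand expl} --- axiom $\Fc(\gamma)$ and asymptotic equidistribution of $\Ec_{n}$ --- the law of $g_{n}^{x}|_{\overline{B(R)}}$ on $C(\overline{B(R)})$ converges weakly, as $n\to\infty$ along $S''$, to the law $\mu_{R}$ of the restriction to $B(R)$ of Berry's isotropic monochromatic random wave $g$ (the centred stationary Gaussian field with covariance $J_{0}(\|\cdot\|)$, up to normalisation of the spatial variable). Two points would need to be verified. (i) The convergence is \emph{uniform} over $f_{n}\in\Bc_{n}$: the de-randomisation runs by the method of moments, and in the expansion of $\E_{x}[\prod_{j}g_{n}^{x}(z_{j})]$ only correlation tuples $(\lambda^{1},\dots,\lambda^{l})\in\Ec_{n}^{l}$ with $\sum_{j}\lambda^{j}=0$ survive the $x$-average; the diagonal pairings $\lambda^{j}=-\lambda^{j'}$ contribute $a_{\lambda}a_{-\lambda}=|a_{\lambda}|^{2}=1$ irrespective of the unimodular phases and reproduce the Gaussian (Wick) moments, while the contribution of the remaining minimal correlations is bounded, via $\Fc(\gamma)$, by $O(N_{n}^{-c})$ \emph{uniformly} in $\{a_{\lambda}\}$ (the phases enter only through the triangle inequality), and the same uniform moment control supplies the tightness. (ii) $F_{R}$ is bounded by $1$ and continuous at every $g\in C(\overline{B(R)})$ with $\meas\{z\in B(R):g(z)=0\}=0$: if $g_{k}\to g$ uniformly then $H(g_{k})\to H(g)$ pointwise off the nodal set of $g$, and bounded convergence gives $F_{R}(g_{k})\to F_{R}(g)$; since Berry's field is a.s.\ $C^{1}$ with $(g,\nabla g)$ non-vanishing along its (regular, measure-zero) nodal curve, the set of such $g$ has full $\mu_{R}$-measure. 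The mapping theorem for bounded functionals that are a.s.\ continuous at the weak limit then yields $F_{R}(g_{n}^{x})\Rightarrow F_{R}(g|_{B(R)})$ with convergence of all moments, whence, also using \eqref{eq:spatial exp vanish},
\[
\lim\limits_{\substack{n\to\infty\\ n\in S''}}\ \sup\limits_{f_{n}\in\Bc_{n}}\bigl|\var_{\Tb^{2}}(Y_{f_{n},R/\sqrt{n}})-v(R)\bigr|=0,\qquad v(R):=\var\bigl(F_{R}(g|_{B(R)})\bigr).
\]

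It would then remain to show $v(R)\to0$ as $R\to\infty$. By symmetry $\E[F_{R}(g)]=0$, and by the arcsine identity \eqref{eq:arcsineFormula} (now with $J_{0}$ in place of $r_{n}$), exactly as in Lemma~\ref{lem:VarExact},
\[
v(R)=\frac{2}{\pi^{3}R^{4}}\int\limits_{B(R)\times B(R)}\arcsin\bigl(J_{0}(\|z-w\|)\bigr)\,dz\,dw.
\]
Using $|\arcsin u|\ll|u|$ for $|u|\le1$, the decay $J_{0}(t)\ll\min\{1,t^{-1/2}\}$, and passing to the difference variable, one gets $v(R)\ll R^{-2}\int_{B(2R)}\min\{1,\|u\|^{-1/2}\}\,du\ll R^{-1/2}\to0$; a sharper rate is, in spirit, already contained in \cite{MWvar} (as predicted by \cite{BGS}), but only $v(R)\to0$ is needed here. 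Combining the last two displays: given $\epsilon>0$, pick $R_{0}$ with $v(R)<\epsilon/2$ for $R>R_{0}$, and then $n_{0}(R,\epsilon)$ so that the defect variance lies within $\epsilon/2$ of $v(R)$ for all $n>n_{0}$, $n\in S''$, $f_{n}\in\Bc_{n}$; this is precisely \eqref{eq:var vanish weak}.

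The hard part will be steps (i)--(ii): pushing the $C^{0}$-level de-randomisation of \cite{Bo2013,BW} through the \emph{discontinuous} functional $F_{R}$ while retaining uniformity over the flat class $\Bc_{n}$ --- one must check that the correlation count underlying the moment method is genuinely insensitive to the unimodular coefficients $a_{\lambda}$, and that $F_{R}$ is $\mu_{R}$-almost surely continuous (a Bulinskaya-type argument for the regular, measure-zero nodal set of Berry's field). By contrast, the evaluation of $v(R)$ is essentially the Gaussian analogue of the arcsine/Bessel estimates already used in the proof of Theorem~\ref{thm:uppper bnd det}.
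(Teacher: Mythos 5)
Your proposal is correct and follows the same high-level strategy as the paper: de-randomize to Berry's random wave model and then invoke the vanishing of the defect variance of Berry's field over $B(R)$. Where you diverge from the paper is in the machinery used to transfer the vanishing of $v(R)$ back to $\var_{\Tb^2}(Y_{f_n,R/\sqrt n})$. You work with weak convergence on $C(\overline{B(R)})$ together with $\mu_R$-almost-sure continuity of $F_R$ (via a Bulinskaya-type observation that Berry's nodal set has measure zero) and the bounded continuous-mapping theorem; the paper instead uses the explicit coupling of Proposition~\ref{prop:meas pres map} (a measure-preserving $\tau:\Omega'\to\Tb^2$ with a $C^1$-comparison $\|g_{\omega;R}-F_{\tau(\omega);R}\|_{C^1}<\eta$, uniform over $\Bc_n$), combined with the stability estimate of Lemma~\ref{lem:stability} (a simple Markov bound on $\meas\{|g_\omega|<\eta\}$) and the elementary variance-vanishing criterion for bounded random variables (Lemma~\ref{lem:var vanishing bnded rvs}). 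Both routes are legitimate; the paper's coupling buys uniformity over $f_n\in\Bc_n$ essentially for free (since Proposition~\ref{prop:meas pres map} is already stated uniformly in $\Bc_n$), and replaces your a.s.-continuity check by a cruder and simpler first-moment/Markov bound, whereas your route is the cleaner ``convergence in distribution + continuous mapping'' formulation.

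The one place where your sketch is lighter than it should be is precisely the uniformity over $\Bc_n$ in the weak convergence $g_n^x\Rightarrow g$ (your step (i)). The moment-method argument you outline does show that the mixed moments converge uniformly in the unimodular coefficients, and uniform tightness follows from uniform derivative moment bounds; but to extract from this a uniform rate over $\Bc_n$ for $|\var_{\Tb^2}(Y_{f_n,R/\sqrt n})-v(R)|$ one still needs a compactness (subsequence) argument or an explicit quantitative moment-to-distribution comparison. The paper sidesteps this entirely by citing the coupling statement of \cite{BW}, which is already uniform; if you want to stay with the weak-convergence formulation you should either cite the same coupling and note that coupling implies weak convergence uniformly, or fill in the subsequence argument. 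Apart from this, your computation of $v(R)$ and the final $\epsilon/2+\epsilon/2$ assembly match the paper (the $v(R)=O(R^{-1/2})$ bound is Lemma~\ref{lem:var def RWM bnd} verbatim).
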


The following proposition asserts the aforementioned Integral Geometric Sandwich; unlike the original inequality,
it contains an error term. Recall that
the local (normalized) defect of a function an eigenfunction $f_{n}$ as in \eqref{eq:fn toral Laplace eig} restricted to a radius-$s$ ball around $x\in\Tb^{2}$ is given by \eqref{eq:Yfns def spac def}.

\begin{proposition}[Integral Geometric Sandwich]
\label{prop:sandwich IG}
For every $f_{n}$ of the form \eqref{eq:fn toral Laplace eig}, and $0<r_{1}<r_{2}$, the asymptotic estimate
\begin{equation}
\label{eq:sandwich IG}
Y_{f_{n},r_{2}}(x) = \frac{1}{\pi r_{2}^{2}}\int\limits_{B_{x}(r_{2})} Y_{f_{n},r_{1}}(y) dy + O\left(\frac{r_{1}}{r_{2}}   \right)
\end{equation}
holds, with constant associated to the $`O'$-notation absolute.

\end{proposition}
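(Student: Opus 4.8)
The plan is to unfold the inner local defect $Y_{f_{n},r_{1}}$ appearing on the right-hand side of \eqref{eq:sandwich IG}, interchange the order of the two integrations, and read off the resulting convolution kernel as an intersection area. Concretely, substituting the definition \eqref{eq:Yfns def spac def} and applying Fubini--Tonelli (legitimate since $H(f_{n}(\cdot))$ is bounded and measurable on a domain of finite measure) turns $\frac{1}{\pi r_{2}^{2}}\int_{B_{x}(r_{2})} Y_{f_{n},r_{1}}(y)\,dy$ into
\begin{equation*}
\frac{1}{\pi^{2}r_{1}^{2}r_{2}^{2}}\int\limits_{\Tb^{2}} H(f_{n}(z))\cdot K(x,z)\,dz, \qquad K(x,z):=\area\bigl(B_{x}(r_{2})\cap B_{z}(r_{1})\bigr),
\end{equation*}
where the $y$-integration has collapsed to the area of the overlap of the two discs. (We may assume $r_{2}<1/2$, the only regime we use; then $B_{x}(r_{2})$ and $B_{z}(r_{1})$ are honest discs of areas $\pi r_{2}^{2}$ and $\pi r_{1}^{2}$, and the triangle inequality for the torus metric $\dist$ is available.)

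Next I would exploit the elementary geometry of the kernel $K$: one always has $0\le K(x,z)\le\pi r_{1}^{2}$; moreover $K(x,z)=\pi r_{1}^{2}$ whenever $\dist(x,z)\le r_{2}-r_{1}$, since then $B_{z}(r_{1})\subseteq B_{x}(r_{2})$, and $K(x,z)=0$ whenever $\dist(x,z)\ge r_{1}+r_{2}$, since then the discs are disjoint. Accordingly I split the integral above as
\begin{equation*}
\frac{1}{\pi r_{2}^{2}}\int\limits_{B_{x}(r_{2}-r_{1})} H(f_{n}(z))\,dz \;+\; \mathcal{R},
\end{equation*}
where the first ``core'' term is the contribution of the region on which $K\equiv\pi r_{1}^{2}$, and the remainder $\mathcal{R}$ is supported on the annulus $\{z:\,r_{2}-r_{1}<\dist(x,z)<r_{1}+r_{2}\}$, whose area equals $\pi\bigl((r_{1}+r_{2})^{2}-(r_{2}-r_{1})^{2}\bigr)=4\pi r_{1}r_{2}$.

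Finally, I would control the two deviations of this expression from $Y_{f_{n},r_{2}}(x)$ by the trivial bound $|H|\le 1$. Since $0\le K\le\pi r_{1}^{2}$ on the annulus, $|\mathcal{R}|\le \frac{1}{\pi^{2}r_{1}^{2}r_{2}^{2}}\cdot\pi r_{1}^{2}\cdot 4\pi r_{1}r_{2}=\tfrac{4r_{1}}{r_{2}}$; and replacing $B_{x}(r_{2}-r_{1})$ by $B_{x}(r_{2})$ in the core term changes it by at most $\frac{1}{\pi r_{2}^{2}}\area\bigl(B_{x}(r_{2})\setminus B_{x}(r_{2}-r_{1})\bigr)=\frac{2r_{1}r_{2}-r_{1}^{2}}{r_{2}^{2}}\le\tfrac{2r_{1}}{r_{2}}$. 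Adding these yields the claimed $O(r_{1}/r_{2})$ error with an absolute constant. I do not expect any genuine obstacle here: the argument is a single change of the order of integration followed by two crude area bounds. The only point deserving a little care is the bookkeeping near the boundary circle $\dist(x,\cdot)=r_{2}$ — which is exactly where the error term is concentrated — together with the remark that on $\Tb^{2}$ one must read ``$B_{x}(s)$'' as a metric ball, so that the containment/disjointness dichotomy for $K$ is just the triangle inequality.
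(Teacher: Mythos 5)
Your proof is correct and takes essentially the same route as the paper's: you unfold $Y_{f_{n},r_{1}}$, apply Fubini to produce the circle-intersection kernel (your $K$ is the paper's $\pi r_{1}^{2}V_{x,z}$), split off the region where the kernel is saturated, and bound the annular remainder and the $B_{x}(r_{2}-r_{1})\to B_{x}(r_{2})$ replacement by crude area estimates. The only difference is that you track explicit constants ($4r_{1}/r_{2}$ and $2r_{1}/r_{2}$) and spell out the $r_{2}<1/2$ caveat on $\Tb^{2}$, which the paper leaves implicit.
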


\begin{proof}[Proof of Theorem \ref{thm:derand expl} assuming propositions \ref{prop:derand Planck scale}-\ref{prop:sandwich IG}.]

Let $\epsilon>0$ be given. First, we apply Proposition \ref{prop:derand Planck scale} to obtain a number $R_{0}=R_{0}(\epsilon)$ so that
for all $R>R_{0}$ there exists a number $n_{0}=n_{0}(R,\epsilon)$ so that for $n>n_{0}$ with $n\in S''$, one has
\begin{equation}
\label{eq:Plack scale derand app}
\var_{\Tb^{2}}\left(Y_{f_{n},R/\sqrt{n}}\right) < \frac{\epsilon^{2}}{4},
\end{equation}
uniformly for all $f_{n}\in \Bc_{n}$. We define
\begin{equation}
\label{eq:R0 red def}
R=R(\epsilon):=(R_{0}+1)^{2},
\end{equation}
and claim that with this choice of $R$, the conclusion of Theorem \ref{thm:derand expl} holds, where the corresponding
$n_{0}=n_{0}(R_{0}+1,\epsilon)$, depending on $\epsilon$ only, is the one we received as the output from the application above of Proposition \ref{prop:derand Planck scale}.
For this particular choice of the parameters, the inequality \eqref{eq:Plack scale derand app} reads
\begin{equation}
\label{eq:var R0'+1 vanish}
\var_{\Tb^{2}}\left(Y_{f_{n},(R_{0}+1)/\sqrt{n}}\right) < \frac{\epsilon^{2}}{4},
\end{equation}
valid for all $n\in S''$, $n>n_{0}$ and $f_{n}\in\Bc_{n}$. To validate our claim we are to prove that for all $n>n_{0}$ with $n\in S''$,
the inequality
\begin{equation}
\label{eq:var<eps sufficient}
\var_{\Tb^{2}}\left(Y_{f_{n},s}\right) < \epsilon
\end{equation}
holds for all $s>\frac{R}{\sqrt{n}}$.

Now, we invoke the Integral Geometric Sandwich of Proposition \ref{prop:sandwich IG}, with $r_{2}=s >R/\sqrt{n}$ and
\begin{equation}
\label{eq:r1<r2/R0'}
r_{1}=\frac{R_{0}+1}{\sqrt{n}}<\frac{r_{2}}{R_{0}+1},
\end{equation}
by \eqref{eq:R0 red def}. Hence \eqref{eq:sandwich IG} reads
\begin{equation}
\label{eq:apply sand big rad}
\begin{split}
Y_{f_{n},s}(x) &= Y_{f_{n},r_{2}}(x)
=  \frac{1}{\pi s^{2}}\int\limits_{B_{x}(s)} Y_{f_{n},(R_{0}+1)/\sqrt{n}}(y) dy + O\left(\frac{r_{1}}{s} \right)
\\&=\frac{1}{\pi s^{2}}\int\limits_{B_{x}(s)} Y_{f_{n},(R_{0}+1)/\sqrt{n}}(y) dy + O\left(\frac{1}{R_{0}} \right),
\end{split}
\end{equation}
thanks to \eqref{eq:r1<r2/R0'}. We assume that $R_{0}$ is sufficiently large so that the error term on the r.h.s. of \eqref{eq:apply sand big rad} is $O\left(\frac{1}{R_{0}}\right)<\frac{\epsilon}{2}$, take the absolute value of both sides of \eqref{eq:apply sand big rad},
and apply the triangle inequality
to conclude that
\begin{equation}
\label{eq:sand app concrete}
|Y_{f_{n},s}(x)| \le \frac{1}{\pi s^{2}}\int\limits_{B_{x}(s)} \left| Y_{f_{n},(R_{0}+1)/\sqrt{n}}(y)\right| dy +
\frac{\epsilon}{2}.
\end{equation}

We then integrate both sides of \eqref{eq:sand app concrete} w.r.t. $x\in \Tb^{2}$ to yield
\begin{equation*}
\int\limits_{\Tb^{2}}|Y_{f_{n},s}(x)| dx \le \int\limits_{\Tb^{2}}\left| Y_{f_{n},(R_{0}+1)/\sqrt{n}}(y)\right|dy + \frac{\epsilon}{2},
\end{equation*}
and invoke \eqref{eq:var R0'+1 vanish} together with Cauchy-Schwarz inequality, that gives (recalling that the
spatial expectation vanishes identically, see \eqref{eq:spatial exp vanish})
\begin{equation}
\label{eq:L1 norm def<eps}
\int\limits_{\Tb^{2}}|Y_{f_{n},s}(x)| dx \le \frac{\epsilon}{2} + \frac{\epsilon}{2} = \epsilon.
\end{equation}
Finally, the inequality
\eqref{eq:L1 norm def<eps} certainly implies \eqref{eq:var<eps sufficient},
since $|Y_{f_{n},s}(x)| \le 1$ (again, upon recalling \eqref{eq:spatial exp vanish}),
which, as it was mentioned above, is sufficient to infer the statement of Theorem \ref{thm:derand expl}.

\end{proof}

\subsection{Integral Geometric Sandwich: Proof of Proposition \ref{prop:sandwich IG}}

\begin{proof}
We start with the integral on the r.h.s. of \eqref{eq:sandwich IG}, and use the definition \eqref{eq:Yfns def spac def} to write
\begin{equation}
\label{eq:int def ball double int}
\frac{1}{\pi r_{2}^{2}}\int\limits_{B_{x}(r_{2})} Y_{f_{n},r_{1}}(y) dy = \frac{1}{\pi r_{2}^{2}}\int\limits_{B_{x}(r_{2})}
\frac{1}{\pi r_{1}^{2}}\int\limits_{B_{y}(r_{1})}H(f_{n}(z))dzdy,
\end{equation}
and aim at reversing the order of the integrals on the r.h.s. of \eqref{eq:int def ball double int}.
We have
\begin{equation}
\label{eq:sand inv int}
\frac{1}{\pi r_{2}^{2}}\int\limits_{B_{x}(r_{2})} Y_{f_{n},r_{1}}(y) dy =
\frac{1}{\pi r_{2}^{2}}\int\limits_{B_{x}(r_{2}+r_{1})}H(f_{n}(z)) \cdot \frac{1}{\pi r_{1}^{2}}\vol \left(B_{z}(r_{1})\cap B_{x}(r_{2})\right)  dz.
\end{equation}

Now, upon denoting $$V_{x,z}(r_{2},r_{1}):=
\frac{1}{\pi r_{1}^{2}}\cdot\vol \left(B_{z}(r_{1})\cap B_{x}(r_{2})\right),$$ the equality \eqref{eq:sand inv int} reads
\begin{equation}
\label{eq:sand int den V}
\frac{1}{\pi r_{2}^{2}}\int\limits_{B_{x}(r_{2})} Y_{f_{n},r_{1}}(y) dy = \frac{1}{\pi r_{2}^{2}}\int\limits_{B_{x}(r_{2}+r_{1})}H(f_{n}(z)) \cdot V_{x,z}(r_{2},r_{1})dz,
\end{equation}
and we notice that
\begin{equation}
\label{eq:|V|<=1}
0\le V_{\cdot,z}(\cdot,r_{1})\le \frac{1}{\pi r_{1}^{2}} \vol(B_{z}(r_{1})) = 1,
\end{equation}
and, in addition, if $z\in B_x(r_{2}-r_{1})$, then $V(z)=1$. We then separate the range of integration in \eqref{eq:sand int den V}
into $B_x(r_{2}-r_{1})$ and its complement to write
\begin{equation}
\label{eq:sand int sep in out}
\begin{split}
\frac{1}{\pi r_{2}^{2}}\int\limits_{B_{x}(r_{2})} Y_{f_{n},r_{1}}(y) dy &= \frac{1}{\pi r_{2}^{2}}\int\limits_{B_{x}(r_{2}-r_{1})}H(f_{n}(z))dz
+ O\left(\frac{1}{\pi r_{2}^{2}}\vol(B_{x}(r_{2}+r_{1})\setminus B_{x}(r_{2}-r_{1}))\right)
\\&=\frac{1}{\pi r_{2}^{2}}\int\limits_{B_{x}(r_{2})}H(f_{n}(z))dz
+ O\left(\frac{1}{\pi r_{2}^{2}}\vol(B_{x}(r_{2}+r_{1})\setminus B_{x}(r_{2}-r_{1}))\right)
\\&= Y_{f_{n},r_{2}}(x)
+ O\left(\frac{1}{\pi r_{2}^{2}}\vol(B_{x}(r_{2}+r_{1})\setminus B_{x}(r_{2}-r_{1}))\right)
\end{split}
\end{equation}
thanks to \eqref{eq:|V|<=1}, $|H(\cdot)|\le 1$, and the definition \eqref{eq:Yfns def spac def} of $Y_{f_{n},r_{2}}(x)$.
Now the statement \eqref{eq:sandwich IG} of Proposition \ref{prop:sandwich IG} finally follows from substituting the estimate
\begin{equation*}
\frac{1}{\pi r_{2}^{2}}\vol(B_{x}(r_{2}+r_{1})\setminus B_{x}(r_{2}-r_{1})) = O\left( \frac{r_{2}r_{1}}{r_{2}^{2}}  \right) =
O\left( \frac{r_{1}}{r_{2}}  \right)
\end{equation*}
into \eqref{eq:sand int sep in out}.

\end{proof}

\subsection{Auxiliary results towards the proof of Proposition \ref{prop:derand Planck scale}}
\label{sec:aux res derand}

We denote Berry's random monochromatic isotropic waves $g:\R^{2}\rightarrow\R$ defined on a probability space
$(\Omega,\Sigma,\prob)$, i.e. for $\omega\in\Omega$ the corresponding sample function $g(\cdot)=g_{\omega}(\cdot)$ are distributed as a
centred Gaussian random field uniquely determined via Kolmogorov's Theorem by its covariance function
\begin{equation}
\label{eq:Berry RWM covar}
r_{g}(|x-y|):=\E[g(x)\cdot g(y)] = J_{0}(|x-y|),
\end{equation}
where $J_{0}$ is the Bessel $J$ function of order $0$.
Proposition \ref{prop:meas pres map} immediately below asserts that locally, the functions $f_{n}\in\Bc_{n}$, appropriately scaled,
converge to $g(\cdot)$ around a random spatial variable on the torus, understood as random fields. It is the heart of Bourgain's de-randomization
method, originally in ~\cite{Bo2013},
and is a restatement of what turned out to be the key technical propositions in ~\cite{BW}, in the precise form used in that manuscript.
To state this result, given a function $f_{n}\in\Bc_{n}$, we introduce
the function $F_{x;R}(y):[-1,1]^{2}\rightarrow\R$ to be
\begin{equation}
\label{eq:F scal def}
F_{x;R}(y) = f_{n}\left(x+\frac{R}{\sqrt{n}}y \right),
\end{equation}
and think of $F_{x;R}(\cdot)$ as a random field, as $x\in \Tb^{2}$ varies randomly uniformly on the torus.
In what follows we will obtain a sequence of random fields $g^{n}:\R^{2}\rightarrow\R$, that will converge
in suitable sense to $g$, and we will denote their scaled version
\begin{equation*}
g^{n}_{\omega;R}(\cdot):= g^{n}_{\omega}(\cdot R),
\end{equation*}
that will be compared to the scaled version of $g$
\begin{equation}
\label{eq:g scal def}
g_{\omega;R}(\cdot):= g_{\omega}(\cdot R).
\end{equation}

\begin{proposition}[{~\cite[Propositions 3.2-3.3]{BW}}]
\label{prop:meas pres map}
Let $S''\subseteq S$ be a sequence of energy levels satisfying the assumptions of Theorem \ref{thm:derand expl}.
Then there exists a sequence of Gaussian stationary random fields $\{g^{n}\}_{n\in S''}$, converging in law to $g$ as $n\rightarrow \infty$, with the following property.
For every $R>0$, $\epsilon>0$ and $\eta>0$, there exists $n_{0}=n_{0}(R;\eta,\epsilon)$ sufficiently large so that for all
$n\in S''$ with $n>n_{0}$ and $f_{n}\in \Bc_{n}$, there exists an event
$\Omega'=\Omega'(n;f_{n},R;\eta,\epsilon)\subseteq \Omega$ of high probability $\prob(\Omega')>1-\epsilon$
and a measure preserving map $\tau:\Omega'\rightarrow\Tb^{2}$ so that $\meas(\tau(\Omega'))>1-\epsilon$, and for all $\omega\in \Omega'$, one has
\begin{equation}
\label{eq:C1 estimate gn}
\|g^{n}_{\omega;R}- F_{\tau(\omega);R}\|_{C^{1}([-1,1]^{2})} < \eta.
\end{equation}
\end{proposition}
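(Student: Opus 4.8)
The plan is to reconstruct the de-randomization argument of Bourgain \cite{Bo2013}, of which the present proposition is the formulation from \cite[Propositions 3.2--3.3]{BW}. It has two stages: a lattice-point moment computation showing that $F_{x;R}(\cdot)$ --- viewed as a $C^{1}([-1,1]^{2})$-valued random element as $x$ is drawn uniformly from $\Tb^{2}$ --- converges in law to Berry's rescaled field $g_{\cdot;R}$; and a soft measure-theoretic step converting this convergence in law into the asserted coupling. I would first \emph{define} $g^{n}$ to be the stationary centred Gaussian field on $\R^{2}$ whose spectral measure is $\nu_{n}$, i.e.\ with covariance $\E[g^{n}(u)g^{n}(v)]=\frac{1}{N_{n}}\sum_{\lambda\in\Ec_{n}}\cos\big(2\pi\langle\lambda,u-v\rangle/\sqrt{n}\big)$ (the spatial second moment of an $f_{n}\in\Bc_{n}$ normalised to unit $L^{2}$-norm). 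By the asymptotic equidistribution of $\Ec_{n}$ (Definition \ref{def:equidist}), $\nu_{n}$ converges weakly to the uniform measure on a circle --- the spectral measure of $g$ of \eqref{eq:Berry RWM covar} up to the standard rescaling of the spatial variable --- so the covariance of $g^{n}$, together with all of its derivatives, converges locally uniformly to that of $g$; since every field in sight has spectral measure carried by a fixed circle, the attendant uniform derivative bounds provide tightness, and $g^{n}\to g$ in law in $C^{1}_{\mathrm{loc}}$, hence $g^{n}_{\cdot;R}\to g_{\cdot;R}$.

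For the first stage, I would fix $f_{n}\in\Bc_{n}$ (normalised to unit spatial $L^{2}$-norm), expand $F_{x;R}(y)=\frac{1}{\sqrt{N_{n}}}\sum_{\lambda\in\Ec_{n}}a_{\lambda}e(\langle x,\lambda\rangle)e\big((R/\sqrt{n})\langle\lambda,y\rangle\big)$, and integrate over $x\in\Tb^{2}$ to obtain, for points $y^{1},\dots,y^{l}$,
\[
\int_{\Tb^{2}}\prod_{j=1}^{l}F_{x;R}(y^{j})\,dx=\frac{1}{N_{n}^{l/2}}\sum_{(\lambda^{1},\dots,\lambda^{l})\in\Pc_{n}(l)}\;\prod_{j=1}^{l}a_{\lambda^{j}}\,e\big((R/\sqrt{n})\langle\lambda^{j},y^{j}\rangle\big).
\]
For $l$ odd this is empty by \eqref{eq:odd corr empty}, matching the vanishing odd Gaussian moments; for $l=2k$ I would split $\Pc_{n}(2k)$ into the ``diagonal'' tuples (those decomposing into opposite pairs $\{\lambda,-\lambda\}$) and the rest. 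The diagonal tuples contribute, using $|a_{\lambda}|=1$ and once more the equidistribution of $\Ec_{n}$, exactly the Wick sum over pairings $\sum_{P}\prod_{\{j,j'\}\in P}r_{g}\big(R|y^{j}-y^{j'}|\big)$ in the limit; every remaining tuple decomposes into minimal correlations, at least one of length $\ge 4$ (again by \eqref{eq:odd corr empty}), so by Axiom $\Fc(\gamma)$ (Definition \ref{def:axiom F(gamma)}) the number of such tuples is $\ll_{k}N_{n}^{k-2+4\gamma}$, making their normalised contribution $\ll_{k}N_{n}^{4\gamma-2}\to 0$ (here $\gamma<1/2$ is essential). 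Running this over all finite tuples of points (with repetitions) gives convergence of the finite-dimensional distributions of $F_{\cdot;R}$ to those of $g_{\cdot;R}$; since the frequencies of $F_{x;R}$ all have size $\ll R$, a Bernstein-type inequality gives $\|F_{x;R}\|_{C^{2}([-1,1]^{2})}\ll_{R}\|F_{x;R}\|_{C^{0}}$, supplying the equicontinuity needed for tightness in $C^{1}$, so that $F_{\cdot;R}\to g_{\cdot;R}$ in law in $C^{1}([-1,1]^{2})$.

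For the second stage: by the two preceding conclusions, both $x\mapsto F_{x;R}$ on $(\Tb^{2},\mathrm{Leb})$ and $\omega\mapsto g^{n}_{\omega;R}$ on $(\Omega,\prob)$ have laws on $C^{1}([-1,1]^{2})$ converging to that of $g_{\cdot;R}$, so for $n$ large these two laws are $o(1)$-close in the Prokhorov metric, hence admit a coupling concentrated, with probability $>1-\epsilon$, within $\eta$ of the diagonal. Since $(\Omega,\prob)$ and $(\Tb^{2},\mathrm{Leb})$ are standard atomless probability spaces, I would realise this coupling in the required form --- a measure-preserving $\tau\colon\Omega\to\Tb^{2}$ (off a null set) and an event $\Omega'=\Omega'(n;f_{n},R;\eta,\epsilon)$ with $\prob(\Omega')>1-\epsilon$ on which \eqref{eq:C1 estimate gn} holds --- via Skorokhod's representation theorem (realising copies of $g^{n}_{\cdot;R}$ and of $F_{\cdot;R}$ on a common space, both converging a.s.\ to a copy of $g_{\cdot;R}$) followed by a measurable selection re-expressing the $F$-copy as $F_{\tau(\cdot);R}$ by disintegrating Lebesgue measure along $x\mapsto F_{x;R}$; this is the bookkeeping of \cite[\S3]{BW}. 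The companion bound $\meas(\tau(\Omega'))>1-\epsilon$ is then automatic, since $\meas(\tau(\Omega'))\ge\prob\big(\tau^{-1}(\tau(\Omega'))\big)\ge\prob(\Omega')$ by measure-preservation of $\tau$, and $\tau(\Omega')$ --- the image of a Borel set under a Borel map --- is universally measurable.

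The hard part will be the even-moment bookkeeping: one must check that every non-diagonal tuple in $\Pc_{n}(2k)$ splits into minimal correlations with at least one piece of length $\ge 4$, and count these through Axiom $\Fc(\gamma)$, controlling the (bounded, $\gamma$-independent) combinatorial factors, so that the normalised non-diagonal contribution is $N_{n}^{4\gamma-2+o(1)}\to 0$ --- the only place where $\gamma<1/2$ is used --- while on the diagonal one must confirm that the residual sums $\frac{1}{N_{n}}\sum_{\lambda\in\Ec_{n}}e\big((R/\sqrt{n})\langle\lambda,z\rangle\big)$ genuinely converge to $r_{g}(R|z|)$, the sole use of the equidistribution hypothesis. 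The tightness in $C^{1}$ (immediate from the frequencies being of size $\ll R$) and the concluding measure-isomorphism step are routine by comparison.
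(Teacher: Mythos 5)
The paper does not actually prove this proposition: it is imported verbatim from \cite[Propositions 3.2--3.3]{BW}, and the authors explicitly state that ``there is no need to reprove it in this manuscript.'' Your reconstruction follows the same de-randomization strategy as the cited source --- moment/correlation computation for $F_{\cdot;R}$ using Axiom $\Fc(\gamma)$ and equidistribution, tightness in $C^{1}$ from the bounded frequency support, and a coupling/measure-isomorphism step producing $\tau$ --- and the key quantitative point (non-diagonal tuples forced to contain a minimal correlation of even length $\ge 4$, contributing $O(N_{n}^{4\gamma-2})$ after normalisation, uniformly over $f_{n}\in\Bc_{n}$) is handled correctly; the only blemishes are cosmetic (the $2\pi$ normalisation mismatch between $J_{0}(\lvert x-y\rvert)$ in \eqref{eq:Berry RWM covar} and the scaling in \eqref{eq:F scal def}, which you rightly flag, and the unstated but routine inclusion--exclusion over overlapping pairings in the diagonal term). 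So the proposal is a correct proof sketch of a result the paper itself only cites.
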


Since, as it was mentioned above, Proposition \ref{prop:meas pres map} was proved\footnote{In ~\cite{BW} a more general situation was
considered, when the equidistribution assumption on the lattice points was lifted, whence the limit random field was varying,
depending on their angular distribution, rather than the sole Berry's random waves limit field $g(\cdot)$.} in ~\cite{BW},
there is no need to reprove it in this manuscript. Once the reduction to the Gaussian random field was performed within Proposition
\ref{prop:meas pres map}, replacing $g^{n}(\cdot)$ with Berry's $g(\cdot)$ in \eqref{eq:C1 estimate gn} is completely standard.
That is, it is possible to
couple $g^{n}(\cdot)$ with $g(\cdot)$ so that $\|g^{n}_{\omega;R}- g_{\omega;R}\|_{C^{1}([-1,1]^{2})}$ is arbitrarily small
for $n$ sufficiently large, see e.g. ~\cite[Lemma 4]{SodinSPB}. Together with \eqref{eq:C1 estimate gn} and the triangle inequality it yields the
following corollary.

\begin{corollary}
\label{cor:meas pres map}
Let $S''\subseteq S$ be a sequence of energy levels satisfying the assumptions of Theorem \ref{thm:derand expl}. Then for every $R>0$, $\epsilon>0$ and $\eta>0$, there exists $n_{0}=n_{0}(R;\eta,\epsilon)$ sufficiently large so that for all
$n\in S''$ with $n>n_{0}$ and $f_{n}\in \Bc_{n}$, there exists an event
$\Omega'=\Omega'(n;f_{n},R;\eta,\epsilon)\subseteq \Omega$ of high probability $\prob(\Omega')>1-\epsilon$
and a measure preserving map $\tau:\Omega'\rightarrow\Tb^{2}$ so that $\meas(\tau(\Omega'))>1-\epsilon$, and for all $\omega\in \Omega'$, one has
\begin{equation}
\label{eq:C1 estimate}
\|g_{\omega;R}- F_{\tau(\omega);R}\|_{C^{1}([-1,1]^{2})} < \eta.
\end{equation}
\end{corollary}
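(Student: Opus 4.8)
The plan is to deduce the corollary from Proposition \ref{prop:meas pres map} by replacing the auxiliary fields $g^{n}$ with their common limit $g$ at a uniformly negligible $C^{1}$-cost, and then combining the two estimates via the triangle inequality on the event where both hold. First I would recall that by Proposition \ref{prop:meas pres map} the sequence $\{g^{n}\}_{n\in S''}$ consists of Gaussian stationary fields converging in law to $g$; since all of these are smooth Gaussian fields with uniformly controlled spectral measures, this convergence in law can be upgraded, by a standard coupling argument (e.g.\ \cite[Lemma 4]{SodinSPB}), to the existence of a coupling of $g^{n}$ and $g$ on a common probability space such that, for every $R>0$ and $\eta>0$, one has $\|g^{n}_{\omega;R}-g_{\omega;R}\|_{C^{1}([-1,1]^{2})}<\eta/2$ on an event of probability $>1-\epsilon/2$, provided $n$ is large enough depending only on $R$, $\eta$, $\epsilon$. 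Note that this step does not see the torus or the $f_{n}$ at all; it is purely a statement about Gaussian fields, which is why it is ``completely standard''.

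Next I would fix $R>0$, $\epsilon>0$, $\eta>0$ and apply Proposition \ref{prop:meas pres map} with parameters $\eta/2$ and $\epsilon/2$ in place of $\eta$ and $\epsilon$, obtaining, for $n\in S''$ large and $f_{n}\in\Bc_{n}$, an event $\Omega'_{0}$ with $\prob(\Omega'_{0})>1-\epsilon/2$ and a measure-preserving map $\tau:\Omega'_{0}\to\Tb^{2}$ with $\meas(\tau(\Omega'_{0}))>1-\epsilon/2$ and $\|g^{n}_{\omega;R}-F_{\tau(\omega);R}\|_{C^{1}([-1,1]^{2})}<\eta/2$ for all $\omega\in\Omega'_{0}$. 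Simultaneously, on the coupling from the previous paragraph, let $\Omega'_{1}$ be the event $\|g^{n}_{\omega;R}-g_{\omega;R}\|_{C^{1}([-1,1]^{2})}<\eta/2$, which has $\prob(\Omega'_{1})>1-\epsilon/2$ once $n$ is large. Setting $\Omega'=\Omega'_{0}\cap\Omega'_{1}$, the union bound gives $\prob(\Omega')>1-\epsilon$, while $\meas(\tau(\Omega'))\ge\meas(\tau(\Omega'_{0}))-\prob(\Omega'_{0}\setminus\Omega')>1-\epsilon$ since $\tau$ is measure preserving (so the image of the small discarded set $\Omega'_{0}\setminus\Omega'$ has measure at most $\epsilon/2$). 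For $\omega\in\Omega'$ the triangle inequality yields
\begin{equation*}
\|g_{\omega;R}-F_{\tau(\omega);R}\|_{C^{1}([-1,1]^{2})}
\le \|g_{\omega;R}-g^{n}_{\omega;R}\|_{C^{1}([-1,1]^{2})}
+\|g^{n}_{\omega;R}-F_{\tau(\omega);R}\|_{C^{1}([-1,1]^{2})}
<\frac{\eta}{2}+\frac{\eta}{2}=\eta,
\end{equation*}
which is \eqref{eq:C1 estimate}, and $\tau$ restricted to $\Omega'$ is still measure preserving onto its image.

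The only point requiring care—and the one I expect to be the main (mild) obstacle—is the bookkeeping of the two small-probability exceptional sets and the verification that, after intersecting events and restricting $\tau$, both of the requirements $\prob(\Omega')>1-\epsilon$ and $\meas(\tau(\Omega'))>1-\epsilon$ survive; this is why one applies Proposition \ref{prop:meas pres map} with halved tolerances at the outset. The $C^{1}$-coupling of $g^{n}$ with $g$ itself is genuinely routine: both fields have uniformly bounded spectral radii, hence uniform Sobolev/Kolmogorov-continuity bounds, so weak convergence of the finite-dimensional (indeed, $C^{1}$-jet) distributions, combined with tightness in $C^{1}([-1,1]^{2})$, gives convergence in law in $C^{1}$, which by Skorokhod's representation theorem can be realised as almost-sure $C^{1}$-convergence on a suitable probability space; truncating at a large $n$ then produces the event $\Omega'_{1}$ of probability $>1-\epsilon/2$. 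No new ideas beyond those already invoked in \cite{BW,SodinSPB} are needed.
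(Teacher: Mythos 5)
Your proposal matches the paper's approach exactly: the paper also deduces the corollary by combining Proposition \ref{prop:meas pres map} with a coupling of $g^{n}$ to $g$ (citing \cite[Lemma 4]{SodinSPB}) and the triangle inequality, leaving the details as ``completely standard''. The bookkeeping you carry out — applying Proposition \ref{prop:meas pres map} with halved tolerances $\eta/2$, $\epsilon/2$, intersecting the two good events, and using measure-preservation of $\tau$ to control $\meas(\tau(\Omega'))$ after discarding the small exceptional set — is exactly the content the paper leaves implicit, and your argument is correct.
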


Alternatively to working with $g(\cdot)$, one could, in principle, work directly with
$g^{n}(\cdot)$, by proving an analogue of Lemma \ref{lem:var def RWM bnd} below, applicable for $g^{n}(\cdot)$ with $n$ large, a direction we abandon.
Corollary \ref{cor:meas pres map} naturally gives rise to the comparison to the defect variance of the random waves $g(\cdot)$.
Note that, for our purposes of comparing
the defect of the toral eigenfunctions to that of the random $g_{R}$, the $C^{1}$-estimate in \eqref{eq:C1 estimate} is too strong, and we could easily settle for an $L^{\infty}$-estimate.
Recall that $H(\cdot)$ is the sign function \eqref{eq:H Heaviside}, and let
\begin{equation}
\label{eq:XR defect RWM}
X_{R}=X_{\omega,R} := \frac{1}{\pi R^{2}}\int\limits_{B(R)}H(g(x))dx
\end{equation}
be the (random) defect of $g(\cdot)$ restricted to the
ball $B(R)\subseteq \R^{2}$. It is obvious that the expectation $\E[X_{R}]=0$ vanishes, whereas the following easy, most likely sub-optimal, result asserts
that so does its variance, asymptotically as $R\rightarrow\infty$.

\begin{lemma}
\label{lem:var def RWM bnd}
As $R\rightarrow\infty$, the defect variance of $g(\cdot)$ restricted to $B(R)$ is vanishing:
\begin{equation}
\label{eq:var def RWM bnd}
\var(X_{R}) = O\left(\frac{1}{R^{1/2}} \right).
\end{equation}
\end{lemma}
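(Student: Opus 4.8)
The plan is to repeat verbatim the exact-variance computation behind Lemma~\ref{lem:VarExact}, now for the stationary Gaussian field $g(\cdot)$ whose covariance is $r_{g}(|x-y|)=J_{0}(|x-y|)$. Since $\E[X_{R}]=0$, the same arcsine identity for a centred unit-variance bivariate Gaussian (recalled in the proof of Lemma~\ref{lem:VarExact}, cf.\ \cite{Rice1,Rice2}), applied with the pair $\left(g(x),g(y)\right)$ of correlation $J_{0}(|x-y|)$, followed by an exchange of expectation and integration, gives
$$
\var(X_{R})=\E[X_{R}^{2}]=\frac{2}{\pi^{3}R^{4}}\int\limits_{B(R)\times B(R)}\arcsin\left(J_{0}(|x-y|)\right)dx\,dy,
$$
which is the analogue of \eqref{eq:var(defect) arcsin}.

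Next I would discard the arcsine using the elementary convexity bound $|\arcsin(t)|\le\frac{\pi}{2}|t|$ on $[-1,1]$ (legitimate since $|J_{0}|\le 1$), reducing the problem to estimating $\int_{B(R)\times B(R)}|J_{0}(|x-y|)|\,dx\,dy$. Substituting $u=x-y$ and recognising the inner integral as the circle--circle intersection volume $\vol\!\left(B(R)\cap B_{u}(R)\right)$, which is supported on $\{|u|\le 2R\}$ and bounded everywhere by $\pi R^{2}$, one obtains
$$
\int\limits_{B(R)\times B(R)}|J_{0}(|x-y|)|\,dx\,dy\le \pi R^{2}\int\limits_{|u|\le 2R}|J_{0}(|u|)|\,du=2\pi^{2}R^{2}\int_{0}^{2R}|J_{0}(\rho)|\,\rho\,d\rho .
$$
Finally, invoking only the crude decay $J_{0}(\rho)\ll\min\{1,\rho^{-1/2}\}$ (formulas (9.1.7) and (9.2.1) in \cite{AS}, exactly as for $J_{1}$ in \eqref{eq:bessel-one-bound}), we get $\int_{0}^{2R}|J_{0}(\rho)|\rho\,d\rho\ll 1+\int_{1}^{2R}\rho^{1/2}\,d\rho\ll R^{3/2}$, hence the double integral above is $\ll R^{7/2}$, and therefore $\var(X_{R})\ll R^{7/2}/R^{4}=R^{-1/2}$.

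There is no real obstacle in this argument: it is deliberately lossy, throwing away all the oscillation of $J_{0}$ and retaining only its modulus. Exploiting the cancellation in $J_{0}$ (as is done in \cite{MWvar}, where a much faster rate is obtained) would improve the exponent substantially, but the power saving $R^{-1/2}$ above is already more than enough for the use made of Lemma~\ref{lem:var def RWM bnd} in the proof of Proposition~\ref{prop:derand Planck scale}.
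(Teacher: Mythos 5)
Your proposal is correct and takes essentially the same approach as the paper's own proof: both start from the exact arcsine identity $\var(X_{R})=\frac{2}{\pi^{3}R^{4}}\int_{B(R)\times B(R)}\arcsin(J_{0}(|x-y|))\,dx\,dy$, replace $\arcsin(J_{0})$ by $|J_{0}|$ up to constants, invoke the decay $J_{0}(t)\ll\min\{1,t^{-1/2}\}$, and land on the $O(R^{7/2})$ bound for the double integral, hence $O(R^{-1/2})$ for the variance. The only cosmetic difference is that you make the change of variables $u=x-y$ and bound the intersection volume, whereas the paper splits directly into the ranges $|x-y|<1$ and $|x-y|>1$; the estimates are identical in substance.
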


\begin{proof}
We use the definition \eqref{eq:XR defect RWM} of the defect, and invert the integration order to write
\begin{equation}
\label{eq:def arcsin}
\var(X_{R}) = \frac{2}{\pi^{3}R^{4}}\int\limits_{B(R)\times B(R)} \arcsin (J_{0}(|x-y|))dxdy,
\end{equation}
where we reused the well-known identity \eqref{eq:arcsineFormula}.
Now, for each $x\in B(R)$ fixed we separate the range of integration in \eqref{eq:def arcsin} into $|x-y|<1$ and $|x-y|>1$ (say),
so that
\begin{equation}
\label{eq:var loc glob sep}
\begin{split}
\var(X_{R}) &= \frac{2}{\pi^{3}R^{4}}\cdot \left( \int\limits_{\substack{x,y\in B(R)\\|x-y|<1}} \arcsin (J_{0}(|x-y|))dxdy +
\int\limits_{\substack{x,y\in B(R)\\|x-y|>1}} \arcsin (J_{0}(|x-y|))dxdy  \right)\\& =:
 \frac{2}{\pi^{3}R^{4}} \cdot (I_{1}+I_{2}).
 \end{split}
\end{equation}

We bound the contribution of the former range trivially as
\begin{equation}
\label{eq:var loc bnd}
|I_{1}| = \left|\int\limits_{\substack{x,y\in B(R)\\|x-y|<1}} \arcsin (J_{0}(|x-y|))dxdy\right| = O(R^{2}),
\end{equation}
whereas we use the standard asymptotics ~\cite[formula (9.2.1)]{AS} for the Bessel $J_{0}$ function for $|x-y|>1$:
\begin{equation*}
|\arcsin(J_{0}(t))|\ll |J_{0}(t)| \ll \frac{1}{\sqrt{t}}
\end{equation*}
to bound the contribution of the latter range as
\begin{equation}
\label{eq:var I2 glob bnd}
\begin{split}
I_{2} &=\int\limits_{\substack{x,y\in B(R)\\|x-y|>1}} \arcsin (J_{0}(|x-y|))dxdy \ll
\int\limits_{B(R)}dx \int\limits_{y\in B(R): \: |x-y|>1}\frac{dy}{|x-y|^{1/2}}
\\&\le R^{2}\int\limits_{1}^{R} \frac{tdt}{\sqrt{t}}  \ll R^{7/2}.
\end{split}
\end{equation}
The statement of Lemma \ref{lem:var def RWM bnd} finally follows upon substituting \eqref{eq:var loc bnd} and \eqref{eq:var I2 glob bnd} into \eqref{eq:var loc glob sep}.

\end{proof}

We will require the following notion, inspired by ~\cite{SodinSPB,NSNodal}, that will allow us to control the defect stability under small $L^{\infty}$-perturbations.

\begin{definition}[Stable event]
For $R>0$, $\eta>0$ and $\delta>0$ we let the ``$(R;\eta,\delta)$-unstable" event $\Omega_{1}(R;\eta,\delta)\subseteq \Omega$ be defined
as
\begin{equation}
\label{eq:Omega1 stable def}
\Omega_{1}(R;\eta,\delta):= \left\{ \omega\in\Omega:\: \frac{1}{\pi R^{2}}\cdot\meas\{x\in B(R):\: |g_{\omega}(x)| <\eta\}>\delta \right\}
\end{equation}
the event that the proportion of $x\in B(R)$ so that $|g(x)|$ is small, is not negligible.
\end{definition}

\begin{lemma}[Stability estimate]
\label{lem:stability}
For every $\delta,\epsilon>0$, there exists an $\eta>0$ sufficiently small, so that
for every $R>0$,
\begin{equation*}
\prob (\Omega_{1}(R;\eta,\delta))<\epsilon.
\end{equation*}
\end{lemma}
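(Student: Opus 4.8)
The plan is to bound the probability of the unstable event by its expectation via Markov's inequality, and to compute that expectation \emph{exactly} by exchanging the spatial integral with the probabilistic expectation. First I would observe that, since Berry's field $g$ admits (almost surely) continuous sample paths, for each fixed $R$ and $\eta$ the set $\{x\in B(R):\, |g_{\omega}(x)|<\eta\}$ is open, hence Lebesgue measurable, and $\omega\mapsto \meas\{x\in B(R):\, |g_{\omega}(x)|<\eta\}$ is $\Sigma$-measurable; this makes $\Omega_{1}(R;\eta,\delta)$ a genuine event and justifies applying Tonelli's theorem to the nonnegative integrand $(\omega,x)\mapsto \mathbf{1}_{\{|g_{\omega}(x)|<\eta\}}$ in the computation below.

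Next, I would use Tonelli together with the stationarity of $g$ and the fact that $g(x)$ is standard Gaussian for every fixed $x$ (a consequence of $r_{g}(0)=J_{0}(0)=1$) to compute
\[
\E\Big[\meas\{x\in B(R):\, |g(x)|<\eta\}\Big] = \int\limits_{B(R)}\prob(|g(x)|<\eta)\,dx = \pi R^{2}\cdot \Phi(\eta),
\]
where $\Phi(\eta):=\prob(|Z|<\eta)$ for $Z$ a standard normal variable. The point is that the right-hand side is exactly proportional to $\pi R^{2}$, so the normalised quantity appearing in the definition \eqref{eq:Omega1 stable def} of $\Omega_{1}$ has expectation $\Phi(\eta)$, independent of $R$.

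Then Markov's inequality, applied to the nonnegative random variable $\tfrac{1}{\pi R^{2}}\meas\{x\in B(R):\,|g_{\omega}(x)|<\eta\}$, yields
\[
\prob\big(\Omega_{1}(R;\eta,\delta)\big) \le \frac{1}{\delta}\,\E\Big[\tfrac{1}{\pi R^{2}}\meas\{x\in B(R):\,|g(x)|<\eta\}\Big] = \frac{\Phi(\eta)}{\delta}.
\]
Since $\Phi(\eta)\le \sqrt{2/\pi}\,\eta\to 0$ as $\eta\to 0$, given $\delta,\epsilon>0$ it suffices to choose $\eta>0$ small enough that $\Phi(\eta)<\delta\epsilon$ (for instance $\eta<\sqrt{\pi/2}\,\delta\epsilon$), and then $\prob(\Omega_{1}(R;\eta,\delta))<\epsilon$ holds for \emph{every} $R>0$ simultaneously, which is the claim.

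I do not expect a genuine obstacle here: the only point requiring care is the measurability/Tonelli bookkeeping of the first paragraph, and the uniformity in $R$ — and the latter comes for free precisely because the expected value of the \emph{unnormalised} measure scales like $\pi R^{2}$. Everything else is the one-line Markov estimate and the elementary Gaussian tail bound $\prob(|Z|<\eta)\le \sqrt{2/\pi}\,\eta$.
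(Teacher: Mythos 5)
Your proof is correct and follows essentially the same route as the paper: express the measure of the sublevel set as an integral of an indicator, exchange integral and expectation, use that $g(x)$ is standard Gaussian to get an $O(\eta R^2)$ bound (you compute it exactly as $\pi R^2\Phi(\eta)$, which is a minor refinement), and conclude via Markov's inequality. The added measurability/Tonelli bookkeeping is a harmless elaboration of a step the paper treats as implicit.
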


\begin{proof}

Let $\Ac_{R;\eta}\subseteq B(R)$ be the (random) measure
$$\Ac_{R;\eta}:=\meas\{x\in B(R):\: |g(x)|<\eta\}$$ of the set $g^{-1}([-\eta,\eta])\cap B(R) \subseteq \R^{2}$. Clearly,
\begin{equation}
\label{eq:Aeta=int char}
\Ac_{R;\eta} = \int\limits_{B(R)} \chi_{[-\eta,\eta]}(g(x))dx,
\end{equation}
where $\chi_{[-\eta,\eta]}$ is the characteristic function of the interval $[-\eta,\eta]\subseteq\R$.
Since, for every $x\in\R^{2}$, $g(x)$ is a standard Gaussian random variable, taking the expectation of both sides of
\eqref{eq:Aeta=int char} easily yields
\begin{equation}
\label{eq:exp ind=O(eta R^2)}
\E[\Ac_{R;\eta}] = O(\eta R^{2}),
\end{equation}
with the constant involved in the `O'-notation absolute. Now, we have
\begin{equation*}
\Omega_{1}(R;\eta,\delta) = \left\{\omega\in\Omega:\: \frac{1}{\pi R^{2}}\cdot \Ac_{R;\eta} > \delta  \right\},
\end{equation*}
and, in light of \eqref{eq:exp ind=O(eta R^2)}, the conclusion of Lemma \ref{lem:stability} follows from Markov's inequality.

\end{proof}

After all the preparatory results of \S\ref{sec:aux res derand}, we are finally in a position to prove the principal de-randomization result.

\subsection{Spatial defect distribution: Proof of Proposition \ref{prop:derand Planck scale} via Bourgain's de-randomization}
We start with the following elementary lemma in probability theory, that is a criterion for the variance vanishing of {\em bounded} random variables, whose proof is thereupon conveniently omitted.

\begin{lemma}
\label{lem:var vanishing bnded rvs}
Let $\{X_{k}\}_{k\ge 1}$ be a sequence of random variables $X_{k}:\Omega\rightarrow\R$ on a probability space $(\Omega,\Sigma,\prob)$ satisfying $|X|\le 1$ a.s. and
$\E[X_{k}] = 0$ for every $k\ge 1$. Then we have
$\var(X_{k})\rightarrow 0$ as $k\rightarrow\infty$, if and only if for
every $\delta>0$, the probability $\prob(|X_{k}|>\delta)\rightarrow 0$
vanishes as $k\rightarrow \infty$.

\end{lemma}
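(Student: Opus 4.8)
The statement is elementary, and the plan is to dispose of it by treating the two implications separately, after the initial observation that the hypothesis $\E[X_{k}]=0$ forces $\var(X_{k})=\E[X_{k}^{2}]$ for every $k$; thus the entire question reduces to comparing the convergence $\E[X_{k}^{2}]\to 0$ with the convergence in probability $\prob(|X_{k}|>\delta)\to 0$ for every $\delta>0$.

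For the forward implication I would simply invoke Chebyshev's (Markov's) inequality applied to $X_{k}^{2}$: assuming $\var(X_{k})=\E[X_{k}^{2}]\to 0$, then for every fixed $\delta>0$,
\begin{equation*}
\prob(|X_{k}|>\delta)=\prob(X_{k}^{2}>\delta^{2})\le \frac{\E[X_{k}^{2}]}{\delta^{2}}=\frac{\var(X_{k})}{\delta^{2}}\longrightarrow 0
\end{equation*}
as $k\to\infty$, which is exactly the asserted vanishing of $\prob(|X_{k}|>\delta)$.

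For the reverse implication I would fix $\epsilon>0$, set the threshold $\delta:=\sqrt{\epsilon/2}$, and split the second moment over the complementary events $\{|X_{k}|\le\delta\}$ and $\{|X_{k}|>\delta\}$, using the a.s.\ bound $|X_{k}|\le 1$ to estimate $X_{k}^{2}\le 1$ on the (small-probability) latter event and $X_{k}^{2}\le\delta^{2}$ on the former, obtaining
\begin{equation*}
\var(X_{k})=\E[X_{k}^{2}]\le \delta^{2}\cdot\prob(|X_{k}|\le\delta)+\prob(|X_{k}|>\delta)\le \frac{\epsilon}{2}+\prob(|X_{k}|>\delta).
\end{equation*}
The hypothesis then yields $\prob(|X_{k}|>\delta)<\epsilon/2$ for all $k$ sufficiently large, hence $\var(X_{k})<\epsilon$ for such $k$; since $\epsilon>0$ was arbitrary, $\var(X_{k})\to 0$.

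There is no genuine obstacle here: the only point requiring (a trivial amount of) care is the order of quantifiers in the reverse direction, namely that $\delta$ must be chosen as a function of $\epsilon$ \emph{before} letting $k\to\infty$, together with the fact that the boundedness $|X_{k}|\le 1$ is precisely what makes the tail contribution to $\E[X_{k}^{2}]$ controllable by its probability alone. This is presumably why the authors chose to omit the argument.
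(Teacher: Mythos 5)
Your proof is correct, and in fact the paper explicitly omits the proof of this lemma (calling it ``elementary'' and its proof ``conveniently omitted''), so there is no authorial argument to compare against. The two-step argument you give --- Chebyshev/Markov applied to $X_k^2$ for the forward direction, and a truncation of the second moment at level $\delta$ exploiting the a.s.\ bound $|X_k|\le 1$ for the reverse --- is precisely the standard proof the authors had in mind.
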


\begin{proof}[Proof of Proposition \ref{prop:derand Planck scale}]

We are going to use Lemma \ref{lem:var vanishing bnded rvs} as a criterion for the variance vanishing, upon both exploiting the
defect variance for Berry's random waves (Lemma \ref{lem:var def RWM bnd}), and also when proving the same for the toral eigenfunctions;
note that the prescribed rate \eqref{eq:var def RWM bnd} is ``lost" during this process for the latter.
Let $\epsilon,\delta>0$ be given. First, we invoke Lemma \ref{lem:stability} on $\delta/4$ in place of $\delta$, and
$\epsilon/2$ in place of $\epsilon$, to obtain a number $\eta=\eta(\epsilon/2,\delta/4)$ sufficiently small so that
for all $R>0$,
\begin{equation}
\label{eq:prob(Omega1)<eps/2}
\prob(\Omega_{1}(R,\eta,\delta/4)) < \epsilon/2.
\end{equation}

Next, we apply on Lemma \ref{lem:var def RWM bnd} (along with the ``only if" statement of Lemma \ref{lem:var vanishing bnded rvs}),
to obtain a number
$R_{0}=R_{0}(\delta/2,\epsilon/4)$ sufficiently large, so that for all $R>R_{0}$, we have
\begin{equation*}
\prob\left\{ |X_{R}| >\frac{\delta}{2}\right\} < \frac{\epsilon}{4}.
\end{equation*}
Let $\Omega_{2}\subseteq \Omega$ be the corresponding event, i.e.
\begin{equation}
\label{eq:Omega2 def}
\Omega_{2}=\Omega_{2}(R;\delta/2) := \left\{ |X_{R}| >\frac{\delta}{2}\right\},
\end{equation}
of probability
\begin{equation}
\label{eq:prob(Omega2)<eps/4}
\prob(\Omega_{2})<\frac{\epsilon}{4}.
\end{equation}

Finally, we apply Corollary \ref{cor:meas pres map} to obtain a number $n_{0}=n_{0}(R;\eta,\epsilon/4)$, so that for all $n>n_0$
and $f_{n}\in \Bc_{n}$ there
exists an event $\Omega'=\Omega'(n;f_{n},R;\eta,\epsilon/4)$ of probability
\begin{equation}
\label{eq:prob(Omega')>1-eps/4}
\prob(\Omega')>1-\epsilon/4,
\end{equation}
and a measure preserving map
$\tau:\Omega'\rightarrow\Tb^{2}$ so that
\begin{equation}
\label{eq:C1 estimate apply}
\|g_{\omega;R}- F_{\tau(\omega);R}\|_{C^{1}([-1,1]^{2})} < \eta,
\end{equation}
where $g_{\omega;R}$ are the (scaled) Berry's random waves \eqref{eq:g scal def}, and $F_{\tau(\omega);R}$
is the scaled version of the given $f_{n}\in\Bc_{n}$, defined in \eqref{eq:F scal def}.

Recall that $X_{\omega;R}$ is the defect \eqref{eq:XR defect RWM} of
Berry's random waves restricted to $B(R)$.  In light of \eqref{eq:C1
  estimate apply}, for $y\in [-1,1]^{2}$ we
have $$H(g_{\omega;R}(y))=H(F_{\tau(\omega);R}(y)),$$ unless
$|g_{\omega;R}(y)|<\eta$. Hence, by the definition \eqref{eq:Omega1
  stable def} of the unstable event $\Omega_{1}$, it is clear
(the magnitude of change in the sign
function is at most $2$, and the measure of the set of $x$ for which
$|g_{\omega}(x)| < \eta  $ is at most $\delta/4$) that
for all $\omega\in \Omega'\setminus\Omega_{1}$, one has
\begin{equation}
\label{eq:defect stab}
|X_{\omega,R}-Y_{f_{n},R/\sqrt{n}}(\tau(\omega))|<2\cdot\frac{\delta}{4}=\frac{\delta}{2}.
\end{equation}
Now, by the definition of $\Omega_{2}$, for every $\omega\notin\Omega_{2}$ one has
\begin{equation}
\label{eq:Xomega,r<delt/2}
|X_{\omega;R}|<\frac{\delta}{2}.
\end{equation}
Hence \eqref{eq:Xomega,r<delt/2} together with \eqref{eq:Omega2 def}
imply that for all $\omega\in
\Omega'':=(\Omega'\setminus\Omega_{1})\setminus\Omega_{2}$, one has
\begin{equation*}
|Y_{f_{n},R/\sqrt{n}}(\tau(\omega))| \le |X_{\omega,R}| + \frac{\delta}{2} < \delta.
\end{equation*}
Equivalently,
\begin{equation}
\label{eq:|Y|<delt}
|Y_{f_{n},R/\sqrt{n}}(x)| < \delta
\end{equation}
for all $x\in \tau(\Omega'')$ of measure
\begin{equation}
\label{eq:meas(tau(Omega''))}
\meas(\tau(\Omega'')) \ge \prob(\Omega')-\prob(\Omega_{1})-\prob(\Omega_{2}) > (1-\frac{\epsilon}{4})-\frac{\epsilon}{2}-
\frac{\epsilon}{4} = 1-\epsilon,
\end{equation}
thanks to \eqref{eq:prob(Omega1)<eps/2}, \eqref{eq:prob(Omega2)<eps/4} and \eqref{eq:prob(Omega')>1-eps/4},
and the measure preserving property of $\tau$.
Finally, \eqref{eq:|Y|<delt}, \eqref{eq:meas(tau(Omega''))}, and the ``if" direction of Lemma \ref{lem:var vanishing bnded rvs}
allow us to deduce the conclusion of Proposition \ref{prop:derand Planck scale}.

\end{proof}

\section{Eigenfunctions with non-vanishing defect variance: proof of Theorem \ref{thm:nonflat large def}}
\label{sec:def non vanish}

\subsection{Large negative defect on hexagonal lattices}

We begin by constructing a completely flat Laplace eigenfunction $g$
on a certain {\em hexagonal} torus $T$, such that the {\em total}
defect of $g$ is non-vanishing. In what follows it will be convenient to identify $\R^2$ with $\C$.

Define $L:= \Z[ 1 + i/\sqrt{3}, 2i/\sqrt{3}]$, and let
$T := \C/ L$. Further, let $\hat{L} \subset \C \simeq \R^2$ denote the dual lattice
to $L$, generated by the sixth roots of unity (or just by $\{1, e(1/6)\}$, where
$e(z) := e^{2 \pi i z} $). The Laplace eigenvalues on $T$ are then given by
$4 \pi^{2} |v|^{2}$ for $v \in \hat{L}$.
Let $v_{1}, \ldots, v_{6} \in \R^2$ denote the six elements in
$\hat{L}$ with length one, and for $x \in \R^2$ define
$
f(x) =
\sum_{i=1}^{6}  e( v_{i} \cdot x);
$
$f$ is then well defined on $T$ (as well as totally flat), and is a
Laplace eigenfunction on $T$, with eigenvalue $4 \pi^{2}$.

Further, let $w_{1},w_{2},w_{3} \in \R^{2}$
denote elements corresponding to the three third roots of unity.
Using that $e(t)+e(-t) = 2 \cos(t)$,
and pairing off antipodal points (i.e. $v_{i} = -v_{j}$)
define the completely flat function
\begin{equation}
\label{eq:g hex torus def}
g(x) :=
\sum_{i=1}^{3}  \cos( 2 \pi w_{i} \cdot x)
=f(x) / 2.
\end{equation}
Further, $g_{m}(x) := g(mx)$ is
a Laplace eigenfunction on $T$ with eigenvalue $4 \pi^{2} m^{2}$ (also
completely flat if $m$ is chosen to be a prime
that is inert in $\Z[e^{2 \pi i/3}]$, and the following proposition asserts
that the total defect of $g$ does not vanish.

\begin{proposition}
\label{prop:negative-hex-defect}
We have
\begin{equation}
\label{eq:tot def g=c<0}
c := \int\limits_{T} H(g(y)) \, dy < 0.
\end{equation}
Further, for any $x \in T$, and $s>0$
$$
\frac{1}{\pi s^{2}}\int_{B_{x}(s)} H(g_{m}(y)) \, dy
= c \cdot \frac{\sqrt{3}}{2} + O( 1/( ms))
$$
\end{proposition}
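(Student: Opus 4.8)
I would prove the two assertions separately; the first reduces to an explicit, slightly delicate area estimate on $T$, while the second is a rescaling–plus–equidistribution statement. \emph{For \eqref{eq:tot def g=c<0}.} Since $g$ is non-constant and real-analytic, $\{g=0\}$ has measure zero, so $c=\meas\{g>0\}-\meas\{g<0\}=2\meas\{g>0\}-\area(T)$, and it suffices to prove $\meas\{g>0\}<\tfrac12\area(T)$. The $\R$-linear map $x\mapsto(w_{1}\!\cdot x,\,w_{2}\!\cdot x)=:(u,v)$ sends the generators $1+i/\sqrt3$ and $2i/\sqrt3$ of $L$ to the standard basis of $\Z^{2}$, so it descends to an isomorphism $\R^{2}/\Z^{2}\cong T$ under which the desired inequality becomes $\meas\{(u,v)\in[0,1]^{2}:h(u,v)>0\}<\tfrac12$, where — using $w_{1}+w_{2}+w_{3}=0$ —
\[
h(u,v)=\cos2\pi u+\cos2\pi v+\cos2\pi(u+v)=\cos2\pi u+2\cos\pi u\cos\pi(u+2v)=4\cos\pi u\cos\pi v\cos\pi(u+v)-1 .
\]

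Next I would slice in $v$. For fixed $u$ with $\cos\pi u>0$ the angle $\pi(u+2v)$ sweeps a full period as $v$ runs over $[0,1]$, so $\meas\{v\in[0,1]:h(u,v)>0\}$ equals $\tfrac1\pi\arccos\bigl(\tfrac{1-2\cos^{2}\pi u}{2\cos\pi u}\bigr)$ while that argument lies in $[-1,1]$, and is $0$ once it exceeds $1$; the slice at $1-u$ contributes equally (substitute $u\mapsto1-u$ in the integral). Integrating over $u$ and putting $\phi=\pi u$ gives
\[
\meas\{h>0\}=\frac{2}{\pi^{2}}\int_{0}^{\phi_{0}}\arccos\!\Bigl(\frac{1-2\cos^{2}\phi}{2\cos\phi}\Bigr)\,d\phi,\qquad\phi_{0}:=\arccos\tfrac{\sqrt3-1}{2}<\tfrac{\pi}{2}.
\]
The integrand decreases on $[0,\phi_{0}]$ (because $t\mapsto\tfrac1{2t}-t$ does), hence is $\le\arccos(-\tfrac12)=\tfrac{2\pi}{3}$ on $[0,1]$ and $\le\arccos\bigl(\tfrac{1-2\cos^{2}1}{2\cos1}\bigr)<\tfrac65$ on $[1,\phi_{0}]$; since $\phi_{0}-1<\tfrac15$, the integral is $<\tfrac{2\pi}{3}+\tfrac65\cdot\tfrac15<\tfrac{\pi^{2}}{4}$, so $\meas\{h>0\}<\tfrac12$ and $c<0$.

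\emph{For the asymptotic formula.} Taking $s$ small enough that $B_{x}(s)$ is a Euclidean disc, I would lift to the universal cover $\R^{2}$ and substitute $z=my$; with $R:=ms$ the left-hand side becomes $\tfrac{1}{\pi R^{2}}\int_{B_{mx}(R)}H(g(z))\,dz$, the average of the $L$-periodic, $[-1,1]$-valued function $H\circ g$ over a Euclidean disc of radius $R\to\infty$. Tiling $\R^{2}$ by translates of a fundamental domain $D$ with $\diam D=O(1)$ and $\int_{D}H(g)=c$: the disc $B_{mx}(R)$ contains $\tfrac{\pi R^{2}}{\area(T)}+O(R)$ full translates, each of which contributes $c$, and meets the remaining translates only inside an annulus of area $O(R)$; hence $\int_{B_{mx}(R)}H(g)=\tfrac{c}{\area(T)}\pi R^{2}+O(R)$, and dividing by $\pi R^{2}$ — with $\area(T)=\tfrac{2}{\sqrt3}$ — yields $\tfrac{\sqrt3}{2}c+O(1/(ms))$.

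The main obstacle is the area bound above: the true value $\meas\{h>0\}\approx0.39$ is comfortably below $\tfrac12$, yet the crudest estimate (integrand $\le\tfrac{2\pi}{3}$ over an interval of length $\phi_{0}\approx1.2$) only just fails, so the $\phi$-integral has to be split (or the decay of $\arccos$ exploited in some other way). This is the one point where a genuine computation is unavoidable: the hexagonal torus carries no sign-reversing automorphism, so unlike on the standard torus the non-vanishing $c\neq0$ cannot come from a symmetry argument.
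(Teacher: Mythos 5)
Your proof is correct, and for the first assertion it takes a genuinely different route: a fully analytic argument in place of the paper's rigorized numerical computation. The paper tiles a rectangular fundamental domain of $T$ by an $N\times N$ grid of cells, covers each with a disc of radius $r\sim 1/N$, and classifies cells as positively stable, negatively stable, or unstable according to whether $|g(\text{centre})|$ exceeds a safety multiple of $rM$, with $M\le 6\pi$ the gradient bound from Lemma~\ref{lem:stab}; at $N=80$ the negatively stable count exceeds the sum of the other two, forcing $c<0$. You instead change variables by $(u,v)=(w_{1}\cdot x,\,w_{2}\cdot x)$ to reduce to showing $\meas\{(u,v)\in[0,1]^{2}:h>0\}<\tfrac12$ with $h=\cos2\pi u+\cos2\pi v+\cos2\pi(u+v)$, compute the $v$-slice measure in closed form as $\tfrac1\pi\arccos\bigl(\tfrac{1-2\cos^{2}\pi u}{2\cos\pi u}\bigr)$, restrict to $u\in[0,\tfrac12]$ by the $(u,v)\mapsto(1-u,1-v)$ symmetry, and bound the resulting $\phi$-integral by monotonicity plus a split at $\phi=1$. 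I checked the quantitative steps: $\phi_{0}=\arccos\tfrac{\sqrt3-1}{2}<1.2$ since $\cos1.2\approx0.362<\tfrac{\sqrt3-1}{2}$, $\arccos\bigl(\tfrac{1-2\cos^{2}1}{2\cos1}\bigr)\approx1.175<\tfrac65$, and $\tfrac{2\pi}{3}+\tfrac{6}{25}\approx2.334<\tfrac{\pi^{2}}{4}\approx2.467$, so the bound goes through though the margins are modest --- you are right that the split at $\phi=1$ is genuinely needed, as the crude bound $\tfrac{2\pi}{3}\cdot\phi_{0}$ alone would exceed $\tfrac{\pi^{2}}{4}$. Your route is computer-free and self-contained; the paper's is mechanical and, with finer meshes (the paper also reports $N=500$), yields sharper estimates of $c$. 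For the second assertion the two arguments are essentially the same tiling-plus-boundary estimate; you simply rescale first ($z=my$, $R=ms$) whereas the paper tiles $B_{x}(s)$ directly with fundamental domains of $\tfrac1mL$.
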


  \begin{figure}[h]
    \centering
\includegraphics[width=8cm]{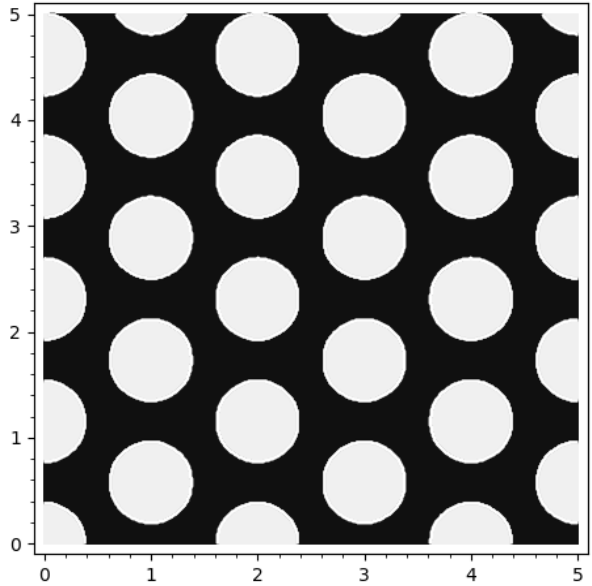}
\caption{
White regions denotes $g(x_{1},x_{2})>0$,
  and black denotes $g(x_{1},x_{2})<0$. Despite appearances, the white
  regions are
{\em not} circles.  }
\label{fig:hexpic}
  \end{figure}

A plot of $H(g(x_{1},x_{2}))$ is shown in Figure~\ref{fig:hexpic}.
Since $g$ is invariant under translation by $ L$, unless the
integral over the fundamental domain of $L$ is exactly zero, we will
get growth, of order $R^{2}$ in either the positive or the negative
direction, when integrating over squares, say centred at $(R/2,R/2)$
and with sides length $R$ growing. The numerics in
Table~\ref{tab:integral} indicates that there is negative
growth. These numerics can be made rigorous by bounding the gradient
from above: this way we can ensure that the function does not change
sign in most small disks. The following lemma, whose proof is
obvious, introduces a stability notion, related to the one in
section \ref{sec:aux res derand}.

\begin{table}
  \centering
  \begin{tabular}{c| c| c|}
R &  $\int_{S(R)} H(g(x))  \,dx$ & $(1/R^{2}) \cdot\int_{S(R)}
                                            H(g(x)) \,dx $ \\
\hline
5 & -5.10561833230128 & -0.204224733292051  \\
15 & -43.5759827038652 & -0.193671034239401  \\
25 & -116.854534058787 & -0.186967254494059  \\
35 & -247.264843494327 & -0.201848851832104  \\
\hline
  \end{tabular}
  \caption{Integral values.  Here $S(R) \subset \R^2$ is the square
    $[0,R]\times[0,R]$.}
\label{tab:integral}
\end{table}

\begin{lemma}
\label{lem:stab}
For the function $g$ in \eqref{eq:g hex torus def} define
\begin{equation}
\label{eq:M grad def}
M := \max_{x \in T}  | \nabla g (x)|,
\end{equation}
and let $D_{x}(r)$ denote a closed disk of radius $r>0$ centred at $x$.
Then $M \le 2 \pi \cdot 3$, and
$$
  \min_{y \in D_{x}(r)} |g(y)|
  \ge |g(x)| - r \cdot M.
$$
\end{lemma}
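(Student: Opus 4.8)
The plan is to dispatch the two assertions separately; both follow directly from the explicit form $g(x)=\sum_{i=1}^{3}\cos(2\pi w_{i}\cdot x)$ in \eqref{eq:g hex torus def}. For the gradient bound I would differentiate term by term to get $\nabla g(x)=-2\pi\sum_{i=1}^{3}\sin(2\pi w_{i}\cdot x)\,w_{i}$. Since each $w_{i}$ is a third root of unity one has $|w_{i}|=1$, and $|\sin(\cdot)|\le 1$, so each of the three summands has Euclidean norm at most $2\pi$. The triangle inequality then gives $|\nabla g(x)|\le 3\cdot 2\pi=2\pi\cdot 3$ for every $x$, and since $T$ is compact and $\nabla g$ is continuous, the maximum $M$ in \eqref{eq:M grad def} is attained and bounded by $2\pi\cdot 3$.

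For the oscillation estimate, fix $x$ and let $y\in D_{x}(r)$. Working on the universal cover $\R^{2}$ (on which $g$ lifts to an $L$-periodic function with the same gradient sup-norm), choose lifts $\tilde x,\tilde y$ with $|\tilde x-\tilde y|\le r$ and parametrise the segment between them by $\gamma(t)=\tilde x+t(\tilde y-\tilde x)$, $t\in[0,1]$. Applying the fundamental theorem of calculus to $t\mapsto g(\gamma(t))$ together with the Cauchy--Schwarz inequality yields
\[
|g(y)-g(x)|=\left|\int_{0}^{1}\nabla g(\gamma(t))\cdot(\tilde y-\tilde x)\,dt\right|\le M\,|\tilde x-\tilde y|\le M r.
\]
The reverse triangle inequality then gives $|g(y)|\ge |g(x)|-Mr$, and taking the infimum over $y\in D_{x}(r)$ produces the stated inequality.

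There is no genuine obstacle here — the lemma is deliberately elementary — the only point requiring a word of care being the passage to the cover, so that ``distance at most $r$ on $T$'' is correctly realised by a straight segment of length at most $r$ in $\R^{2}$ along which $\nabla g$ may be integrated.
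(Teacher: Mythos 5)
Your argument is correct, and it is exactly the elementary computation the paper has in mind (the paper states the proof is obvious and omits it): differentiate term by term, use $|w_i|=1$ and $|\sin|\le 1$ with the triangle inequality for the bound $M\le 6\pi$, then integrate $\nabla g$ along the lifted straight segment and apply the reverse triangle inequality. No gap and no deviation from the intended route.
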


\begin{proof}[Proof of Proposition~\ref{prop:negative-hex-defect}]

Recall that the  lattice ${L}$ is spanned by
$u_{1}=(1,1/\sqrt{3})$ and $u_{2} = (0,2/\sqrt{3})$. The rhombus
spanned by $u_{1},u_{2}$ is a fundamental domain of ${L}$, as
well as a fundamental domain for $T$.
As it is more convenient to tile with rectangles rather than with
rhombi we will prefer to evaluate the signed area on a rectangular fundamental
domain, and show that the defect integral over the rectangle
$\mathcal{R}$, having corners at
$(0,0), (1,0), (0, 2/ \sqrt{3}), (1, 2 / \sqrt{3})$,
easily seen to be a fundamental domain of $T$, is non-zero.

\vspace{2mm}

For some integer $N>0$ we tile $\mathcal{R}$ by $N^{2}$
rectangles (modulo $\mathcal{R}$) centred at
$$h_{j,k} = \left( \frac{j}{N},  \frac{k}{N}\cdot \frac{2}{\sqrt{3}} \right)$$ for $0 \le j,k < N$;
each such rectangle can be covered with a disk of radius
$r = \sqrt{7/12}/N$. If the inequality $|g(h_{j,k})| > 12 \pi r > r\cdot M$,
with $M$ as in
\eqref{eq:M grad def} is satisfied (using a factor of two safety
margin), the corresponding rectangle centred at $h_{j,k}$
is said to be ``stable'', whence $g(\cdot)$ has constant sign on the
whole rectangle by
Lemma \ref{lem:stab}; otherwise it is said to be ``unstable''. Depending on
the sign of $g(h_{j,k})$, we call the corresponding stable rectangle ``positively stable''
or ``negatively stable''.

For $N=80$ one finds $2099$ positively stable rectangles, $3299$ negatively stable, and $1002$
unstable ones. As $3299-2099 =1200> 1002$, we conclude that the
defect \eqref{eq:tot def g=c<0} is nonzero (and in fact negative).
Both assertions of
Proposition~\ref{prop:negative-hex-defect} now follow: the first assertion
follows from the presented numerical calculation, whereas the second
one is an immediate
consequence of the first assertion upon tiling $B_{x}(s)$ with
$\pi (ms)^{2}/(2/\sqrt{3}) + O(ms)$ copies of fundamental domains
associated with the
lattice $\frac{1}{m} L$ (note that the boundary of $B_{x}(s)$ can be
covered with $O(ms)$ tiles.)
One can obtain more precise estimates on $c$ in
\eqref{eq:tot def g=c<0}, by increasing $N$, and thus
decreasing the mesh size: for example, for $N=500$, the
corresponding counts are respectively $96639, 147207$, and $6154$.
\end{proof}

\subsection{Defect stability w.r.t. perturbations of $g$}
\label{sec:stability}
For later use we show that a small perturbation of $g$ only changes
the defect by a small amount. For convenience we work in the rescaled
region where the eigenvalues are normalized to $4 \pi^{2}$, hence we
should consider the defect over balls of radius $R$ (or squares of
sides $R$) with $R$ growing.
We start by showing that simultaneous vanishing of both $g$ and its
gradient $\nabla g$ is impossible.

\begin{lemma}
\label{lem:g nonsing}
Let $Z_{1} := \{ x \in T : g(x) =
   0 \}$ and let $Z_{2} := \{ x \in T : \nabla g (x) = (0,0) \}$.
   Then $Z_{1} \cap Z_{2} = \emptyset$.
\end{lemma}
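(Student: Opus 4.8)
The plan is to exhibit, for each of the six unit vectors $v_{i}\in\hat L$, an explicit formula for $g$ and $\nabla g$ in coordinates adapted to the hexagonal symmetry, and then argue by the structure of the exponential sum $f(x)=\sum_{i=1}^{6}e(v_{i}\cdot x)$ that $g$ and $\nabla g$ cannot vanish simultaneously. Concretely, recall from \eqref{eq:g hex torus def} that $g=f/2$, so it suffices to show that $f$ and $\nabla f$ have no common zero. First I would write $f(x)=\sum_{j=1}^{3}2\cos(2\pi w_{j}\cdot x)$ where $w_{1},w_{2},w_{3}$ are the third roots of unity (viewed in $\R^{2}$), say $w_{1}=(1,0)$, $w_{2}=(-1/2,\sqrt{3}/2)$, $w_{3}=(-1/2,-\sqrt{3}/2)$, and set $\theta_{j}:=2\pi w_{j}\cdot x$; note $\theta_{1}+\theta_{2}+\theta_{3}=0$. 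Then $g(x)=\cos\theta_{1}+\cos\theta_{2}+\cos\theta_{3}$ and the gradient condition $\nabla g=0$ translates, after expressing $\nabla\theta_{j}=2\pi w_{j}$ and using $w_{1}+w_{2}+w_{3}=0$, into the two scalar equations obtained by projecting onto two independent directions; equivalently $\sum_{j}(\sin\theta_{j})\,w_{j}=0$, which since any two of the $w_{j}$ span $\R^{2}$ and $w_{1}+w_{2}+w_{3}=0$ forces $\sin\theta_{1}=\sin\theta_{2}=\sin\theta_{3}$.

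Next I would combine the two constraints. From $\sin\theta_{1}=\sin\theta_{2}=\sin\theta_{3}=:\sigma$ together with $\theta_{1}+\theta_{2}+\theta_{3}\equiv 0\pmod{2\pi}$, one analyzes the possibilities: either all three $\theta_{j}$ are congruent modulo $2\pi$, or some of them are related by $\theta\mapsto \pi-\theta$. If all $\theta_{j}\equiv\theta$, then $3\theta\equiv 0$, so $\theta\in\{0,\pm 2\pi/3\}$; correspondingly $g=3\cos\theta\in\{3,-3/2\}$, never $0$. If, say, $\theta_{2}=\pi-\theta_{1}$ and $\theta_{3}\equiv\theta_{1}$ (or symmetric variants), then $\cos\theta_{2}=-\cos\theta_{1}$, so $g=\cos\theta_{3}=\cos\theta_{1}$; the sum condition gives $\theta_{1}+(\pi-\theta_{1})+\theta_{1}\equiv 0$, i.e. $\theta_{1}\equiv-\pi$, hence $\cos\theta_{1}=-1$ and $g=-1\neq 0$. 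The remaining case $\theta_{2}=\pi-\theta_{1}$, $\theta_{3}=\pi-\theta_{1}$ (two reflected) gives $g=-\cos\theta_{1}$ and sum $\theta_{1}+2(\pi-\theta_{1})\equiv 0$, so $\theta_{1}\equiv 2\pi$, $g=-1\neq 0$. In every branch $g\neq 0$, which is exactly the assertion that $Z_{1}\cap Z_{2}=\emptyset$.

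The main obstacle I anticipate is purely bookkeeping: carefully enumerating the finitely many congruence branches arising from $\sin\theta_{1}=\sin\theta_{2}=\sin\theta_{3}$ under the linear relation $\theta_{1}+\theta_{2}+\theta_{3}\equiv 0\pmod{2\pi}$, and making sure no branch is missed (the subtlety being that $\sin\alpha=\sin\beta$ means $\alpha\equiv\beta$ \emph{or} $\alpha\equiv\pi-\beta$, so there are $2^{2}=4$ sign patterns to check for the pair of equations, up to the permutation symmetry of the $w_{j}$). An alternative, perhaps cleaner, route avoiding trigonometric casework: observe that on a hexagonal torus $g$ is (up to normalization) the real part of a theta-type function and its zero set $Z_{1}$ is a smooth embedded curve by the general fact that Laplace eigenfunctions on tori are real-analytic with the property that $\{f=0\}$ is nonsingular whenever $f$ is a nonconstant eigenfunction — but since the excerpt does not provide this as a cited black box, I would instead rely on the direct computation above, which is elementary and self-contained. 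Either way, the statement reduces to a finite check, and the computation shows the minimum of $|g|$ on $Z_{2}$ is in fact bounded below by $1$, consistent with the numerics $M\le 6\pi$ in Lemma \ref{lem:stab}.
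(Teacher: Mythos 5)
Your proof is correct, and it is a close but genuinely different route to the same conclusion as the paper's. Both arguments open identically: from $\nabla g(x)=-2\pi\sum_{j}\sin\theta_{j}\,w_{j}=0$ (with $\theta_{j}:=2\pi w_{j}\cdot x$) and the one-dimensional kernel of $(a_{1},a_{2},a_{3})\mapsto\sum a_{j}w_{j}$, deduce that the three sines $\sin\theta_{j}$ share a common value. (The paper's text states this constraint with $\cos$ rather than $\sin$ at that point — an apparent slip, since the argument is unchanged once it is read as a constraint on the sines.) From there the two finishes diverge. The paper notes that a common sine value forces the three cosines to have equal modulus, so $g=\sum_{j}\cos\theta_{j}=0$ makes each $\cos\theta_{j}$ vanish; then $\theta_{j}\equiv\pm\pi/2\pmod{2\pi}$, and multiplying the exact identity $\theta_{1}+\theta_{2}+\theta_{3}=0$ by $2/\pi$ yields the parity contradiction $\pm1\pm1\pm1\equiv 0\pmod 4$. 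You instead enumerate the four reflection branches of the sine equality ($\theta_{j}\equiv\theta_{1}$ or $\pi-\theta_{1}$ for $j=2,3$), pin down $\theta_{1}$ in each branch via the sum constraint, and check directly that $g\in\{3,-3/2,-1\}$, never $0$. Your route costs a short case split but is more explicit and, as a byproduct, records the quantitative bound $|g|\ge 1$ on $Z_{2}$, which is more information than the paper's proof extracts; the paper's magnitude argument avoids the case enumeration. Both finishing moves are sound and of comparable length.
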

\begin{proof}
The linear map $\R^3 \to \R^{2} $, given by
$(a_{1},a_{2},a_{3}) \to \sum_{i=1}^{3} a_{i} w_{i}$ with $w_{i}$ as
in \eqref{eq:g hex torus def}, clearly has
full range, hence a one dimensional kernel, spanned by
  $(1,1,1)$. In particular, if $\sum_{i=1}^{3} a_{i} w_{i} = 0$, then
  $a_{1}=a_{2}=a_{3} = C$ for some $C$. Therefore, $\nabla g (x) = 0$
  implies that $\cos( 2\pi w_{1} \cdot x) = \cos( 2\pi w_{2} \cdot x) = \cos( 2\pi
  w_{3} \cdot x) = C$
  for some $C$.  Further, $g(x)=0$ implies that
  $0= \sum_{i=1}^{3} \cos(2\pi  w_{i} \cdot x) = 3C$, and thus $C=0$ for any point
  where $g$ and $\nabla g$ both vanish.  In particular, we find that
  $2\pi w_{i} \cdot x = \pm \pi/2 + 2 \pi k_{i}$ for $k_{i} \in \Z$.  On the
  other hand, as $\sum_{i=1}^{3} w_{i} = 0$, we find, on multiplying
  by $2/\pi$ that
$$
0 \equiv  \pm 1 + \pm 1 + \pm 1 \mod 4
$$
which is impossible since the right hand side is odd no matter what
signs are chosen.
\end{proof}

In light of Lemma \ref{lem:g nonsing} and the compactness of $T$, it follows
that the gradient of $g$ is uniformly bounded below on the zero set of $g(\cdot)$:

\begin{corollary}
\label{cor:bnd grad zero set}
There exist $C>0$ such that $|\nabla g (x) | \geq C$ for all $x \in
Z_{1}=g^{-1}(0)$.
\end{corollary}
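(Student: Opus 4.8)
\textbf{Proof plan for Corollary \ref{cor:bnd grad zero set}.}
The plan is to derive the statement directly from Lemma \ref{lem:g nonsing} by a soft compactness argument, with no computation required. First I would observe that $g$ is a trigonometric polynomial, hence $C^{\infty}$ on $T$, so the map $x\mapsto |\nabla g(x)|$ is continuous on $T$. Next, $Z_{1}=g^{-1}(0)$ is the preimage of the closed set $\{0\}$ under the continuous map $g$, hence closed; since $T$ is compact, $Z_{1}$ is compact. If $Z_{1}=\varnothing$ there is nothing to prove (any $C>0$ works), so assume $Z_{1}\ne\varnothing$.

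The key point is that by Lemma \ref{lem:g nonsing} the sets $Z_{1}$ and $Z_{2}=\{x\in T:\nabla g(x)=(0,0)\}$ are disjoint, which is exactly the statement that $|\nabla g(x)|>0$ for every $x\in Z_{1}$. Thus $x\mapsto |\nabla g(x)|$ is a strictly positive continuous function on the nonempty compact set $Z_{1}$, so it attains its infimum, and that infimum is some number $C>0$. Setting this $C$ gives $|\nabla g(x)|\ge C$ for all $x\in Z_{1}$, as claimed.

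There is essentially no obstacle here: the entire content of the corollary is the non-degeneracy statement of Lemma \ref{lem:g nonsing}, and the only thing to be careful about is the routine bookkeeping — that $g$ is genuinely $C^{1}$ (immediate, being a finite cosine sum), that $Z_{1}$ is closed in the compact torus $T$, and the trivial handling of the vacuous case $Z_{1}=\varnothing$. So this step is expected to be entirely straightforward, serving only to package Lemma \ref{lem:g nonsing} into the uniform lower bound that will be used later when controlling the defect of perturbations of $g$.
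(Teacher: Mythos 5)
Your argument is correct and is precisely the one the paper intends: the paper introduces the corollary with the single remark that it follows ``in light of Lemma \ref{lem:g nonsing} and the compactness of $T$,'' which is exactly your compactness argument (continuity of $|\nabla g|$, closedness of $Z_1$, disjointness of $Z_1$ and $Z_2$, infimum attained and positive). No gap; your version just spells out the routine details.
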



It is now straightforward to prove stability of the defect of $g$ w.r.t. perturbations. Given
$R \ge 1$ and a continuous function $f \in C(\R^{2})$, define
$$
Y_{f,R}(x) :=
\frac{1}{\pi R^{2}}
\int_{B_{x}(R)} H(f(y)) \, dy,
$$

\begin{lemma}
\label{lem:stability2}
Let $g$ be the function \eqref{eq:g hex torus def}, and $R\ge 1$. Then for all $\epsilon>0$ sufficiently small, if
$f \in C(\R^{2})$ is such that $|g(y)-f(y)| < \epsilon$
holds for all $y \in B_{x}(R)$, one has
$$
Y_{f,R}(x) =
Y_{g,R}(x) + O(\epsilon).
$$
\end{lemma}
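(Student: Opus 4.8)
The plan is to estimate the difference $Y_{f,R}(x) - Y_{g,R}(x)$ by controlling the measure of the symmetric difference of the positivity sets of $f$ and $g$ inside $B_x(R)$. We have
\begin{equation*}
Y_{f,R}(x) - Y_{g,R}(x) = \frac{1}{\pi R^{2}}\int_{B_{x}(R)}\left(H(f(y)) - H(g(y))\right)dy,
\end{equation*}
and since $|H(\cdot)|\le 1$, the integrand is supported (up to a null set) on the set where $f$ and $g$ have different signs, on which it is bounded by $2$. Hence it suffices to show that
\begin{equation*}
\meas\left\{ y\in B_{x}(R):\: H(f(y))\ne H(g(y)) \right\} = O(\epsilon \cdot R^{2}),
\end{equation*}
with an absolute implied constant (after which dividing by $\pi R^{2}$ gives the claimed $O(\epsilon)$, uniformly in $R\ge 1$).

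The key step is that if $|g(y)-f(y)|<\epsilon$ and $H(f(y))\ne H(g(y))$, then necessarily $|g(y)|<\epsilon$; so the relevant set is contained in $\{y\in B_{x}(R):\: |g(y)|<\epsilon\}$, and it remains to bound the measure of the latter. Here I would invoke Corollary \ref{cor:bnd grad zero set}: the gradient of $g$ is bounded below by some absolute $C>0$ on the zero set $Z_{1}=g^{-1}(0)$, and, by continuity of $\nabla g$ together with compactness of $T$, there is an absolute $\delta_{0}>0$ so that $|\nabla g|\ge C/2$ on the $\delta_{0}$-neighbourhood of $Z_{1}$ in $T$. For $\epsilon$ sufficiently small, $\{|g|<\epsilon\}$ is contained in this neighbourhood, and there the coarea formula (or the implicit function theorem / a tubular-neighbourhood estimate around the smooth curve $Z_{1}$, which has finite length since $g$ is real-analytic and $T$ compact) gives
\begin{equation*}
\meas\{y\in T:\: |g(y)|<\epsilon\} \ll \epsilon,
\end{equation*}
with an absolute constant depending only on $g$. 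Since $g$ is $L$-periodic and $B_{x}(R)$ is covered by $O(R^{2})$ translates of a fundamental domain of $L$ (the boundary contributing only $O(R)$ extra tiles, negligible here), this yields $\meas\{y\in B_{x}(R):\: |g(y)|<\epsilon\} \ll \epsilon R^{2}$, as required.

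The main obstacle is the quantitative bound $\meas\{|g|<\epsilon\}\ll \epsilon$ on a single period, i.e. making precise that the sublevel set of $g$ grows linearly in $\epsilon$; this is where the non-singularity input Lemma \ref{lem:g nonsing} and its Corollary \ref{cor:bnd grad zero set} are essential, since without a lower bound on $|\nabla g|$ near $Z_{1}$ the sublevel sets could a priori be fat. Everything else -- the reduction to the sublevel set, the periodicity/tiling argument, and the final division by $\pi R^{2}$ -- is routine. One should also note that the lemma as stated requires $\epsilon$ small only to ensure $\{|g|<\epsilon\}$ lies inside the good neighbourhood of $Z_{1}$; for larger $\epsilon$ the estimate is trivial since $Y_{f,R},Y_{g,R}\in[-1,1]$.
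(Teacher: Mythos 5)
Your proposal follows essentially the same route as the paper: you reduce to bounding $\meas\{y\in B_x(R):|g(y)|<\epsilon\}$, then bound the sublevel set on a single period using the gradient lower bound near $Z_1$ from Corollary~\ref{cor:bnd grad zero set}, and finally conclude by $L$-periodicity and tiling $B_x(R)$ with $O(R^2)$ fundamental domains. The paper phrases the per-period estimate via a tubular neighbourhood of $Z_1$ and the bound $|g(x)|\gg d(x,Z_1)+O(d(x,Z_1)^2)$, which is the same content you access via the coarea/implicit-function-theorem argument; the reasoning and key inputs are identical.
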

\begin{proof}
It is sufficient to show that the measure of the set
$$\{ x \in T : |g(x)| \leq \epsilon \}$$ is $O(\epsilon)$, for all
sufficiently small $\epsilon$, as we can then tile $B_{x}(R)$ with
$\sim R^{2}$ copies of the fundamental domain.  Now, there exist some
open neighborhood of $Z_{1}=g^{-1}(0)$, outside of which $|g(x)|$ is
uniformly bounded away from zero (say, using compactness of the
closed complement). In other words, if $|g(x)|$ is small then we
must have $d(x,Z_{1})$ small, where $d(x,Z_{1})$ denotes the
distance between $x$ and the zero set $Z_{1}$.
Further, all  $x$ for which $d(x,Z_{1})$ is  sufficently small is
contained in some small
tubular neighbourhood of $Z_{1}$.
The lower bound on the gradient of Corollary \ref{cor:bnd grad zero set} implies
that $|g(x)| \gg d(x,Z_{1}) + O(d(x,Z_{1})^{2} )$, and hence the
measure of the set of $x$ for which $|g(x)| < \epsilon$ is
$\ll \epsilon $.
\end{proof}

\subsection{Approximating $g$ on the standard torus $\Tb^{2} = \R^2/\Z^2$: proof of Theorem \ref{thm:nonflat large def}}
\label{sec:regular-torus}

We next show that a perturbed variant of the hexagonal lattice
construction can be translated to  the square torus.
We begin by showing that the set of Gaussian integers, scaled to have
norm one, can very well approximate third roots of unity.

\begin{proposition}
\label{prop:inf int Pell}
The Pell equation
\begin{equation}
\label{eq:Pell eq}
b^{2}-3a^{2}=1
\end{equation}
admits infinitely many solutions. Further, let
\begin{equation}
\label{eq:S''' def}
S'''=\{n=a^{2}+b^{2}\}
\end{equation}
be the infinite sequence of
integers of the form $a^{2}+b^{2}$ with $(a,b)$ as in \eqref{eq:Pell eq}, and for $n\in S'''$ we define the Gaussian integers
$z_{1}=z_{n,1},z_{2}=z_{n,2},z_{3}=z_{n,3}$ as
\begin{equation}
\label{eq:z123 def}
z_{1} := -a + bi, \quad z_{2} := -a-bi, \quad z_{3} := 2a+i.
\end{equation}
Then, as $n\rightarrow\infty$ along $S'''$, we have
\begin{equation}
\label{eq:z1,z2,z3 prop}
z_{1}/|z_{1}| = e^{2 \pi i/3} + O\left(n^{-1/2}\right) \quad
z_{2}/|z_{2}| = e^{-2 \pi i/3}  + O\left(n^{-1/2}\right), \quad
z_{3}/|z_{3}| = 1 + O\left(n^{-1/2}\right).
\end{equation}
\end{proposition}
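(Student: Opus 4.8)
The plan is to make the Pell equation explicit and then reduce the three angular approximations to first-order expansions of $\sqrt{1+t}$. First I would note that $(b,a)=(2,1)$ solves \eqref{eq:Pell eq} (it is in fact the fundamental solution), so that the classical theory of the Pell equation — or, concretely, iterating $(b,a)\mapsto(2b+3a,\,b+2a)$, which is multiplication by $2+\sqrt3$ in $\Z[\sqrt3]$ — produces infinitely many positive integer solutions with $a\to\infty$. Since a common prime factor of $a,b$ would divide $b^{2}-3a^{2}=1$, one has $\gcd(a,b)=1$; in any case, since $a\to\infty$ along these solutions, $S'''$ in \eqref{eq:S''' def} is a genuine infinite sequence of energies with $n\to\infty$.

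Next I would use \eqref{eq:Pell eq} to eliminate $b$: for every solution $(b,a)$,
\[
n=a^{2}+b^{2}=a^{2}+(3a^{2}+1)=4a^{2}+1,
\]
whence $a=\tfrac12\sqrt{n-1}$ and $b=\tfrac12\sqrt{3n+1}$. The same identity gives at once $|z_{1}|^{2}=|z_{2}|^{2}=a^{2}+b^{2}=n$ and $|z_{3}|^{2}=(2a)^{2}+1=4a^{2}+1=n$, so $z_{1},z_{2},z_{3}\in\Ec_{n}$ all lie on the radius-$\sqrt n$ circle, and dividing by $|z_{j}|=\sqrt n$ is precisely the normalisation appearing in \eqref{eq:z1,z2,z3 prop}.

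It then remains to expand the normalised coordinates. From the formulas for $a$ and $b$,
\[
\frac{a}{\sqrt n}=\tfrac12\sqrt{1-1/n}=\tfrac12+O(1/n),\qquad
\frac{b}{\sqrt n}=\tfrac{\sqrt3}{2}\sqrt{1+1/(3n)}=\tfrac{\sqrt3}{2}+O(1/n),\qquad
\frac{2a}{\sqrt n}=\sqrt{1-1/n}=1+O(1/n).
\]
Recalling $e^{2\pi i/3}=-\tfrac12+\tfrac{\sqrt3}{2}i$, this yields
\[
\frac{z_{1}}{|z_{1}|}=\frac{-a+bi}{\sqrt n}=-\tfrac12+\tfrac{\sqrt3}{2}i+O(1/n)=e^{2\pi i/3}+O\!\left(n^{-1/2}\right),
\]
and the identical computation with the imaginary part negated gives $z_{2}/|z_{2}|=e^{-2\pi i/3}+O(n^{-1/2})$, while
\[
\frac{z_{3}}{|z_{3}|}=\frac{2a+i}{\sqrt n}=1+O(1/n)+\frac{i}{\sqrt n}=1+O\!\left(n^{-1/2}\right);
\]
here the term $i/\sqrt n$ is of size exactly $n^{-1/2}$, so the exponent is sharp for $z_{3}$, whereas for $z_{1},z_{2}$ one in fact gets error $O(n^{-1})$. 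This establishes \eqref{eq:z1,z2,z3 prop}.

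There is no genuine obstacle here: the only points of substance are the (classical) infinitude of Pell solutions and the identity $n=4a^{2}+1$, which simultaneously pins all three lattice points exactly onto the circle of radius $\sqrt n$ and controls every error term; everything else is the first-order Taylor expansion of $\sqrt{1+t}$.
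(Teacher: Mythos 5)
Your proof is correct and follows essentially the same route as the paper: infinitude of Pell solutions from the fundamental solution $(b,a)=(2,1)$, the identity $n=a^2+b^2=4a^2+1$ giving $|z_1|=|z_2|=|z_3|=\sqrt{n}$, and a first-order expansion to obtain the $O(n^{-1/2})$ approximations. The only cosmetic difference is that you solve explicitly for $a,b$ in terms of $n$ and expand $\sqrt{1+t}$, whereas the paper expands in $1/a$ and then substitutes $1/a=O(n^{-1/2})$; your side remark that the $z_1,z_2$ errors are in fact $O(1/n)$ while $z_3$'s is genuinely $\Theta(n^{-1/2})$ is a small bonus not in the paper.
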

\begin{proof}
Since the Pell equation $b^{2}-3a^{2} = 1$ has the solution
$a=1,b=2$, it has infinitely many integer solutions.
Moreover, we find that $|z_{1}|^{2} = |z_{2}|^{2} = |z_{3}|^{2} = 4a^{2}+1$, and
\begin{equation}
\label{eq:w123 approx wtild123}
\frac{z_{1}}{|z_{1}|} = \frac{-1+i \sqrt{3}}{2} + O(\frac{1}{a}),
\quad
\frac{z_{2}}{|z_{2}|} = \frac{-1-i \sqrt{3}}{2} + O(\frac{1}{a}),
\quad
\frac{z_{3}}{|z_{3}|} = 1 + O(\frac{1}{a}).
\end{equation}
Thus, taking $n = a^{2}+b^{2} = 4a^{2} + 1$ we have $1/a =
O(n^{-1/2})$, and the proof of Proposition \ref{prop:inf int Pell} is concluded.
\end{proof}

\begin{proof}[Proof of Theorem \ref{thm:nonflat large def}]

We claim that the statement of Theorem \ref{thm:nonflat large def} holds, with $S'''$ prescribed by \eqref{eq:S''' def},
satisfying, in particular, the statement \eqref{eq:z1,z2,z3 prop} of Proposition \ref{prop:inf int Pell}.
To construct eigenfunctions on $\T=\R^2/\Z^{2}$ having large defect it
is convenient to rescale $\T$ so that the eigenvalue equals $4\pi^{2}$, and
correspondingly the torus must be rescaled so that the fundamental
domain is a square with sides $n^{1/2}$ (where $\lambda = 4\pi^{2} n$ denotes
the unscaled eigenvalue.)
Given $n = a^{2}+b^{2} \in S'''$ with $b^{2}-3a^{2} = 1$ define the unit
vectors $\widetilde{w_{i}}:= \frac{z_{i}}{|z_{i}|}\in\R^{2}$, $i=1,2,3$, with $z_{i}$ as in \eqref{eq:z123 def},
and the Laplace eigenfunction $G$, on the re-scaled torus
$\R^2/(\sqrt{n} \Z^{2})$, by
$$
G(x) :=
\sum\limits_{i=1}^{3} \cos( 2 \pi \widetilde{w_{i}} \cdot x)
$$

A simple calculation shows that $G$ is a Laplace eigenfunction, with
eigenvalue $4 \pi^{2}$, and that, with $w_{i}$ as in \eqref{eq:g hex torus def},
the asymptotic approximation \eqref{eq:w123 approx wtild123} reads
$$
|w_{i}-\widetilde{w_{i}}| = O(1/a) = O(1/n^{1/2}).
$$
Hence, for any $x \in \R^2$, we have
$$
|g(x)-G(x)| \ll |x|/n^{1/2}.
$$
In particular, for $|x|= o(n^{1/2})$, we have $G(x) = g(x) + o(1)$,
and thus, if $R = o(n^{1/2})$ grows with $n$ we find,
thanks to Lemma~\ref{lem:stability2}, that
$$
Y_{G,R}(x) =
Y_{g,R}(x) + o(1) = C + o(1)
$$
for $C := c \cdot \sqrt{3}/2 < 0$.  In the macroscopic regime, i.e.
when $R$ is of size $n^{1/2}$, we similarly find that for
$|x|  \ll \epsilon n^{1/2}$,
$$
Y_{G,R}(x) =
Y_{g,R}(x) + O(\epsilon) = C + O(\epsilon).
$$

Thus, if for $n \in S'''$ we construct $G$ as described above and define
$f_{n}(x) := G(\sqrt{n} x)$, we obtain an eigenfunction on $\Tb^{2}$, with
eigenvalue $4\pi^2 n$, and find that the defect integral over
$B_{x}(s)$ (keeping in mind that $s = R/\sqrt{n}$ when we undo the
scaling) is bounded away from zero for $|x| < \epsilon$; hence the variance
is bounded from below, and the proof is concluded.

\end{proof}

\end{document}